\tikzset{
  % style to apply some styles to each segment of a path
  on each segment/.style={
    decorate,
    decoration={
      show path construction,
      moveto code={},
      lineto code={
        \path [#1]
        (\tikzinputsegmentfirst) -- (\tikzinputsegmentlast);
      },
      curveto code={
        \path [#1] (\tikzinputsegmentfirst)
        .. controls
        (\tikzinputsegmentsupporta) and (\tikzinputsegmentsupportb)
        ..
        (\tikzinputsegmentlast);
      },
      closepath code={
        \path [#1]
        (\tikzinputsegmentfirst) -- (\tikzinputsegmentlast);
      },
    },
  },
  % style to add an arrow in the middle of a path
  mid arrow/.style={postaction={decorate,decoration={
        markings,
        mark=at position .5 with {\arrow[#1]{stealth}}
      }}},
}
\newtheorem{theorem}{Theorem}[section]
\newtheorem{cor}[theorem]{Corollary}
\newtheorem{prop}[theorem]{Proposition}
\newtheorem{lemma}[theorem]{Lemma}
\newtheorem{problem}[theorem]{Riemann-Hilbert problem}
\newtheorem{pproblem}[theorem]{Mixed $\dbar$-Riemann-Hilbert problem}
\theoremstyle{definition}
\newtheorem{remark}[theorem]{Remark}
\renewcommand{\Re}{\operatorname{Re}}
\renewcommand{\Im}{\operatorname{Im}}
\newcommand{\dbar}{\bar{\partial}}
\newcommand{\R}{\mathbb{R}}
\newcommand{\Z}{\mathbb{Z}}
\newcommand{\C}{\mathbb{C}}
\newcommand{\K}{\mathcal{K}}
\newcommand{\OO}{\mathcal{O}}
\newcommand{\ad}{\text{ ad }}
\newcommand{\TU}[1]{\begin{pmatrix}
1&#1\\
0&1
\end{pmatrix}}
\newcommand{\TL}[1]{\begin{pmatrix}
1&0\\
#1&1
\end{pmatrix}}
\begin{document}
%%%%%%%%%%%%%%%%%%%%%%%%%%%%%%%%%%%%%%%%%%%%%%%%%%%%%%%%%%%%%%%%%%%%%
%% The manuscript does not need to include \maketitle, which is
%% executed automatically.  The document should begin with an
%% abstract, if appropriate.  If one is given and should not be, the
%% contents will be gobbled.
%%%%%%%%%%%%%%%%%%%%%%%%%%%%%%%%%%%%%%%%%%%%%%%%%%%%%%%%%%%%%%%%%%%%%
\begin{frontmatter}
\title
{A $\dbar$-steepest descent method for oscillatory Riemann-Hilbert problems}
\author[fd]{Fudong Wang\fnref{cor1}}
% \ead{fudong@usf.edu}
\author[ma1,ma2,ma3,ma4]{Wen-Xiu Ma\fnref{cor1}}
% \ead{mawx@cas.usf.edu}
\address[fd]{Department of Mathematics and Statistics, \\ University of South Florida}
\address[ma1]{Department of Mathematics,\\ Zhejiang Normal University,
Jinhua 321004, Zhejiang, China}
\address[ma2]{Department of Mathematics, \\ King Abdulaziz University, Jeddah 21589, Saudi Arabia }
\address[ma3]{Department of Mathematics and Statistics, \\ University of South Florida}
\address[ma4]{School of Mathematical and Statistical Sciences,\\ North-West University, Mafikeng Camus, \\Private Bag X2046, Mmabatho 2735, South Africa}

\fntext[cor1]{Corresponding authors: mawx@cas.usf.edu (Wen-Xiu Ma), fudong@usf.edu (Fudong Wang).}

\begin{abstract}
  We study the long-time asymptotic behavior of oscillatory Riemann-Hilbert problems (RHPs) arising in the mKdV hierarchy (reducing from the AKNS hierarchy). Our analysis is based on the idea of $\dbar$-steepest descent. We consider RHPs generated from the inverse scattering transform of the AKNS hierarchy with weighted Sobolev initial data. The asymptotic formula for three regions of the spatial and temporal dependent variables  are presented in details.
\end{abstract}
\end{frontmatter}
%%%%%%%%%%%%%%%%%%%%%%%%%%%%%%%%%%%%%%%%%%%%%%%%%%%%%%%%%%%%%%%%%%%%%
%% Start the main part of the manuscript here.
%%%%%%%%%%%%%%%%%%%%%%%%%%%%%%%%%%%%%%%%%%%%%%%%%%%%%%%%%%%%%%%%%%%%%

\section{Introduction}
The long-time behavior of solutions of the initial-value problem for nonlinear evolution integrable PDEs has been studied extensively. It is well-known that the long-time asymptotic analysis for the integrable PDEs can be, via inverse scattering, formulated as a problem of finding asymptotics of certain oscillatory RHPs. A countless number of papers (see e.g. \cite{do2009nonlinear,varzugin96} and the references therein) have been devoted to studying the asymptotic behavior of a certain type of oscillatory 2 by 2 matrix RHPs, which is also the main subject of the current study. The most influential is the nonlinear steepest descent method (or the Deift-Zhou method), which was published in {\it{Annals of mathematics}} \cite{DZ93} in 1993. Before the Deift-Zhou work, A.R.Its \cite{its1981asymptotics} proposed a direct method, via an isomonodromic deformation, to study the asymptotics of a RHP arising in studying the long-time behavior of the nonlinear Schr\"odinger (NLS) equation. Ten years after the Deift-Zhou method was published, Deift and Zhou extended their method to study the long-time behavior of the defocusing NLS equation on some weight Sobolev space. Between 1993 and 2003, the Deift-Zhou method had been applied to not only the long-time behavior of integrable systems, but also equilibrium measure for logarithmic potentials \cite{deift1998new}, the strong asymptotics of orthogonal polynomials \cite{Deift1999} and many other other fields in mathematical physics. Shortly after the Deift-Zhou 2003 paper, McLaughlin and Miller \cite{mclaughlin2004dbar} proposed another generalization to the Deift-Zhou method: the so-called $\dbar$-steepest descent. This method was first applied to studying the long-time asymptotics of the defocusing NLS equation in 2008 \cite{dieng2008longtime}, see also its extension version \cite{dieng2018dispersive}. Comparing to the Deift-Zhou method, the $\dbar$-steepest descent method provides a more elementary way and more tractable way of analyzing the error terms. Since then, the $\dbar$-steepest descent method has been applied to many long-time asymptotic studies for nonlinear integrable PDEs, such as the focusing NLS equation \cite{borghese2016long}, the KdV equation \cite{giavedoni2017long}, the mKdV equation\cite{chen2019longtime}, the sine-Gordon equation\cite{chen2020longtime}, the fifth order mKdV equation \cite{liu2019longtime} and many others. It is worth mentioning that the mKdV and fifth-order mKdV equations belong to the mKdV hierarchy we will consider in the current work. In fact, by carefully checking \cite{chen2019longtime} and \cite{liu2019longtime}, we find there are many similar analyses which motivate us to study the whole mKdV hierarchy at once.

In the current paper, we will study an oscillatory 2 by 2 matrix RHP arising in studying the long-time asymptotics of the mKdV hierarchy. We will discuss the defocusing case (i.e., without solitons). The focusing case will be treated somewhere else in the future. The main analysis is based on the idea of $\dbar$-steepest descent \cite{dieng2008longtime,mclaughlin2004dbar}. 

In the study of Cauchy initial-value problems of integrable systems by means of inverse scattering, the following RHP appears: 
\begin{problem}
\label{the first RHP}
Looking for a 2 by 2 matrix-valued function $m(z)$ such that
\begin{enumerate}[label={(\arabic*)}]
    \item $m$ is analytic off the real line $\R$;
    \item for $z\in \R$, we have \begin{align}
m_+=m_-v_{\theta}(z),\quad  z\in \R,
\end{align}
where $m_{\pm}(z)=\lim_{\epsilon\rightarrow 0^{+}}m(z\pm i\epsilon),z\in\R$, and the jump reads
\begin{align}
v_{\theta}(z)=\begin{pmatrix}
 1-|R(z)|^2 & -\bar{R}(z)e^{-2it\theta}\\
 R(z)e^{2it\theta} & 1
 \end{pmatrix},
 \label{jump matrix}
\end{align}
where $R(z)$ is the reflection coefficient in performing inverse scattering with given initial data, see, e.g., \eqref{reflection coeff}, and $\theta=\theta(z;x/t)$ is a polynomial of $z$ with coefficients depends on $x/t$; %reflecting the dispersion relation;
\item $m(z)=I+\OO(z^{-1}),\quad z\rightarrow\infty.$
\end{enumerate}
\end{problem}

In this paper, we consider the following defocusing mKdV type reduction of the AKNS hierarchy (shortly, we call it the mKdV hierarchy): Fixing $n$ as an positive odd integer, we consider 
\begin{align}
\psi_x(x,t;z)&=\left(-iz\sigma_3+\begin{pmatrix}
      0&q(x,t)\\
      q(x,t)&0
    \end{pmatrix}\right)\psi(x,t;z),\\
    \psi_t(x,t;z)&=\left(\sum_{k=0}^nQ_k(x,t)z^k\right)\psi(x,t;z),
\end{align}
where $\sigma_3=\text{diag}(1,-1)$, $q(x,t)$ is the potential which solves a certain $1+1$ dimensional integrable equation, and $Q_k$ is determined by certain recursion relation (for details, see \cite{ablowitz1991}). 

In the case of the mKdV hierarchy, $Q_n(x,t)$ is a constant with respect to $x,t$. The corresponding nonlinear integrable PDE is worked out by the the zero curvature condition, which is also equivalent to $\psi_{xt}=\psi_{tx}$. In this paper, we will study the Cauchy initial-value problem for integrable PDEs generated from the defocusing mKdV hierarchy, with the initial data belonging to $H^{1,1}(\R)=\{f\in L^2|f'\in L^2, xf\in L^2\}$. Due to Zhou's result \cite{zhou98}, after direct scattering, the reflection coefficient $R(z)$ also belongs to $H^{1,1}(\R)$. By performing the time evolution, we arrive at the RHP \ref{the first RHP}. The first part (oscillating region) of the analysis is slightly more general than the one in the AKNS hierarchy, by making the following assumptions on the phase function: 
\begin{enumerate}[label={(\arabic*)}]
 \item $\theta$ is a real polynomial of degree $n$ with respect to $z$, with coefficients depends on $x/t$;
 \item $\theta'(z_j)=0,\theta''(z_j)\neq 0$ for $j=1,\cdots , l$, where $l$ denotes the number of real stationary phase points.
 \end{enumerate}
 \begin{remark}
     For the defocusing mKdV hierarchy case, $n$ in the first assumption corresponds to the $\frac{n-1}{2}$th member of the hierarchy. Since in mKdV hierarchy, $n$ is an odd number, say $n=2k-1,k\in\Z_+$, we will only need to study the phase function of the type: $c_1z+c_2z^{2k+1},\  k\in \Z_+,$ and $c_1,c_2$ are some constants. The purpose of the second assumption includes the case of linear combination of several members in the mKdV hierarchy, which is again integrable. In such situation, we will see a generic polynomial of $z$ with coefficients depends on $x/t$.
 \end{remark}

 \subsection{Main results}
 Before we establish the main results, we first introduce some notations. Let's denote the weighted Sobolev space by
 \begin{align*}
     H^{k,j}(\R):=\{f(x)\in L^2(\R):\partial_x^sf\in L^2(\R),\ s=1,\cdots,k,\ x^jf(x)\in L^2(\R)\},
 \end{align*}
 with norm
 \begin{align*}
     \|f\|_{H^{k,j}}:=\left(\|f\|^2_{L^2}+\sum_{l=1}^k\|\partial_x^lf\|^2_{L^2}+\sum_{m=1}^j\|x^mf\|^2_{L^2}\right)^{1/2}.
 \end{align*}
 Next we define the meaning of the long-time behavior in the three regions we are concerned with as follows.
 \begin{enumerate}[label={(\arabic*)}]
     \item Long-time behavior of the potential in the oscillating region means taking the $t\rightarrow\infty$ limit of $q(x,t)$ along the ray $x=-ct,c>0,t\rightarrow\infty$.
     \item Long-time behavior of the potential in the fast decaying region means taking the $t\rightarrow\infty$ limit of $q(x,t)$ along the ray $x=ct,c>0,t\rightarrow\infty$.
     \item Long-time behavior of the potential in the Painlev\'e region means taking the $t\rightarrow\infty$ limit of $q(x,t)$ along the curve $x=c(nt)^{1/n},c\neq 0,\, t\rightarrow\infty$, where $n$ is the degree of the polynomial phase function $\theta$.
 \end{enumerate}
\begin{theorem}
\label{Theorem 1}
In the oscillating region, provided that the initial data\footnote{Due to Zhou's theorem \cite{zhou98}, $R(z)$ belongs to $H^{1,n-1}(dz)$, then the time evolving reflection coefficient $R(z)e^{\pm 2it\theta}$ will stay in $H^{1,1}(dz)$ since the degree of $\theta$ is $n$. \label{fn:n} } $q(x,0)\in H^{n-1,1}(\R,dx)$, the long-time behavior for the potentials $q(x,t)$ reads
\begin{align}
q(x,t)&=q_{as}(x,t)+\OO(t^{-3/4}),t\rightarrow \infty,
\end{align}
where
\begin{align*}
q_{as}(x,t)&=-2i\sum_{j=1}^{l}\frac{|\eta(z_j)|^{1/2}}{\sqrt{2t\theta''(z_j)}}e^{i\varphi(t)},\\
\varphi(t)&=\frac{\pi}{4}-\arg\Gamma(-i\eta(z_j))\\
&-2t\theta(z_j)-\frac{\eta(z_j)}{2}\log|2t\theta''(z_j)|+2\arg(\delta_j)+\arg(R_j),
\end{align*}
and the phase function $\theta$ will depend only on  $z$ along any ray in the oscillating region,
$\{z_j\}_{j=1}^l$ are the real stationary phase points of the phase function, and 
\begin{align*}
    \delta(z)&=\exp{\left(\frac{1}{2\pi i}\int_{D_-}\frac{\log(1-|R(s)|^2)}{s-z}\text{ds}\right)},z\in \C\backslash D_-, \\
    D_-&=\{z\in\R:\theta'(z)<0\},\\
    \eta(z)&=-\frac{1}{2\pi}\log(1-|R(z)|^2),\\
    R_j&=R(z_j),\quad j=1,..,l,\\
    \delta_j&=\lim_{\substack{z=z_j+\rho e^{i\phi},\\ \rho\rightarrow 0,\\ \text{fix } \phi\in (0,\pi/2)}}\delta(z)(z-z_j)^{i\eta(z_j)}.
\end{align*}
Here $R(z)$ is the reflection coefficient generated from the standard inverse scattering procedure, see equation \eqref{reflection coeff}.
\end{theorem}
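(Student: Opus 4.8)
The argument is a $\dbar$-steepest descent analysis in the spirit of \cite{mclaughlin2004dbar,dieng2008longtime}, adapted to a real polynomial phase of degree $n$ carrying the $l$ stationary points $z_1,\dots,z_l$. The plan is to pass from RHP~\ref{the first RHP}, through a chain of explicit invertible transformations, to a problem whose solution is a product of parabolic-cylinder model parametrices near the $z_j$ plus a small $\dbar$-correction, and then to extract $q(x,t)$ from the large-$z$ expansion of the transformed matrix. I would begin by fixing a ray in the oscillating region and recording the signature table of $\Re(2it\theta)$: the real line decomposes into the alternating open sets $D_\pm=\{z\in\R:\pm\theta'(z)>0\}$ separated by the $z_j$, where $\theta''(z_j)\neq 0$. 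On the $D_-$-type intervals the jump $v_\theta$ is factored through its diagonal ``$1-|R|^2$'' middle factor, on the $D_+$-type intervals through the upper/lower triangular product; the diagonal factor is then removed by conjugating $m\mapsto m^{(1)}=m\,\delta^{-\sigma_3}$, where $\delta$ is the scalar function in the statement (its existence, non-vanishing and boundedness follow from $1-|R|^2>0$ and $R\in H^{1,1}$, and $\delta\to 1$ as $z\to\infty$). The resulting $m^{(1)}$ has on each sub-interval a purely lower- or upper-triangular jump of the form $I$ plus a factor carrying $e^{\pm 2it\theta}$, and is still $I+\OO(z^{-1})$ at infinity.

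Next comes the $\dbar$-extension: near each $z_j$ the rational factors $R,\bar R$ (and their $\delta$-twisted versions) are replaced by smooth, non-analytic extensions off $\R$ supported in small sectors emanating from $z_j$, chosen so that their $\dbar$-derivatives are small and so that the analytic parts can be absorbed into the unknown. This yields a mixed $\dbar$-RHP for $m^{(2)}$ whose jump lives only on a small cross through each $z_j$ and whose $\dbar$-derivative is supported in the sectors with a pointwise bound combining the $\dbar$ of the cutoff/extension with the Gaussian factor $e^{-|\Re(2it\theta)|}\lesssim e^{-ct|z-z_j|^2}$. Writing $m^{(2)}=m^{(3)}\,m^{\mathrm{RHP}}$, where $m^{\mathrm{RHP}}$ solves the pure RHP obtained by discarding the $\dbar$-data and $m^{(3)}$ solves the residual pure $\dbar$-problem with trivial jump, the matrix $m^{\mathrm{RHP}}$ is built from local parametrices: after the rescaling $z=z_j+\zeta/\sqrt{2t|\theta''(z_j)|}$ the cross-problem at $z_j$ becomes, up to controlled errors, the parabolic-cylinder RHP of Its \cite{its1981asymptotics}, solvable in closed form in terms of $\Gamma(-i\eta(z_j))$; tracking the rescaling, the $\delta$-conjugation and the model solution reproduces exactly the phase $\varphi(t)$, including the $-2t\theta(z_j)$, the $-\tfrac{\eta(z_j)}{2}\log|2t\theta''(z_j)|$, the $\arg\Gamma(-i\eta(z_j))$ and the $2\arg\delta_j+\arg R_j$ contributions. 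Away from the $z_j$ the parametrix is $I$; the mismatch on the boundary circles is $\OO(t^{-1/2})$ at each point, so by a Cauchy-operator estimate the global correction to $m^{\mathrm{RHP}}$ is $\OO(t^{-1})$.

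It then remains to estimate $m^{(3)}$ and to reconstruct $q$. For $m^{(3)}$ one sets up the integral equation $(I-K_{\dbar})(m^{(3)}-I)=K_{\dbar}I$, where $K_{\dbar}$ is the solid Cauchy transform composed with right multiplication by $\dbar m^{(2)}$; using the support and the Gaussian size estimates of $\dbar m^{(2)}$ together with $L^p$ bounds for the solid Cauchy operator one shows $\|K_{\dbar}\|\to 0$ as $t\to\infty$ and $m^{(3)}=I+\OO(t^{-3/4})$ (this is the step where the hypothesis $q(x,0)\in H^{n-1,1}$ enters: after the degree-$n$ time evolution it gives $Re^{\pm 2it\theta}\in H^{1,1}$, which is exactly the regularity the $\dbar$-estimate needs, cf.\ the footnote to the statement). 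Undoing all transformations in the reconstruction formula $q(x,t)=2i\lim_{z\to\infty}z\,m_{12}(z)$, the $\OO(z^{-1})$ coefficient of $m^{\mathrm{RHP}}$ supplies $q_{as}(x,t)$ as the sum over $j=1,\dots,l$, while the parametrix mismatch and the $\dbar$-correction add $\OO(t^{-1})$ and $\OO(t^{-3/4})$ respectively, giving $q=q_{as}+\OO(t^{-3/4})$.

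The \emph{main obstacle} is the $\dbar$-estimate: the non-analytic extensions must be engineered so that $\dbar m^{(2)}$ simultaneously decays like $e^{-ct|z-z_j|^2}$ in the sectors and is controlled using only the limited (after time evolution, essentially $H^{1,1}$) smoothness of $R$, and these two features must be combined through the solid Cauchy transform to beat $t^{-3/4}$. The bookkeeping is further complicated, relative to classical mKdV, by having $l$ stationary points of a degree-$n$ phase rather than two: one must keep all estimates uniform in the mutual separation of the $z_j$ and in a lower bound for $|\theta''(z_j)|$ along the ray, and verify that cross-contributions between distinct local parametrices are negligible.
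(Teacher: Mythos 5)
Your proposal is correct and tracks the same $\dbar$-steepest descent pipeline the paper uses: $\delta$-conjugation, non-analytic lens opening, factorization of the transformed unknown into a pure-RHP piece and a pure-$\dbar$ piece, parabolic-cylinder parametrices yielding $q_{as}$, and a solid-Cauchy-transform estimate giving the $\OO(t^{-3/4})$ remainder. The one place where your route diverges in an interesting way from the paper is in how the $\OO(t^{-1})$ RHP error is controlled: you invoke a parametrix-matching argument (the local Its parametrix matches the outer model up to $I+\OO(t^{-1/2})$ on the boundary circles, and small-norm theory does the rest), while the paper stays at the level of the Beals--Coifman singular integral operators $C_{w_j}$ and uses the Varzugin--Deift--Zhou separation-of-contributions identity $(1-C_w)(1+\sum_j C_{w_j}(1-C_{w_j})^{-1})=1-\sum_{j\neq k}C_{w_j}C_{w_k}(1-C_{w_k})^{-1}$, together with explicit $L^1,L^2,L^\infty\!\to L^2$ bounds on $w_j$ and $C_{w_j}C_{w_k}$, to show the cross-terms are $\OO(t^{-1})$. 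Both routes are standard and give the same bound; the operator-theoretic version avoids constructing boundary disks and is slightly cleaner when phase reduction (replacing $\theta$ by its quadratic Taylor jet at $z_j$, which you subsume under ``up to controlled errors'') must also be justified. One point of imprecision worth flagging: as written, ``the mismatch on the boundary circles is $\OO(t^{-1/2})$, so the global correction to $m^{\mathrm{RHP}}$ is $\OO(t^{-1})$'' is not a direct consequence of small-norm theory (that only gives an $\OO(t^{-1/2})$ bound on $R-I$); what actually happens is that the $\OO(t^{-1/2})$ leading coefficient of $R$ is computable and supplies $q_{as}$, and it is the next-order term, after subtracting that explicit leading piece, that is $\OO(t^{-1})$. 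With that clarification your argument reproduces the paper's chain $q = q_{RHP}+(-2i)E_{1,12}$, $q_{RHP}=q_{as}+\OO(t^{-1})$, $E_{1,12}=\OO(t^{-3/4})$.
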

\begin{cor}
For the case of the AKNS hierarchy, in the oscillating region, the phase function $\theta(z)=\frac{x}{t}z+cz^{n},c>0$ and has just two real stationary phase points: $z_{\pm}=\pm\left|-\frac{x}{nct}\right|^{\frac{1}{n-1}}$, and 
then the long-time asymptotics for the potentials in the AKNS hierarchy are merely a special case of  Theorem \ref{Theorem 1}.
\end{cor}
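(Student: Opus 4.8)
The plan is to deduce the corollary directly from Theorem \ref{Theorem 1} by checking, in the AKNS setting, the two structural hypotheses imposed on the phase $\theta$, and then reading off the data $(l,z_j,\theta''(z_j),\eta,\delta,R_j)$ entering $q_{as}$. The only point that needs to be verified rather than quoted is that the pure $n$th member of the AKNS hierarchy produces the clean two-term phase $\theta(z;x/t)=\frac{x}{t}z+cz^{n}$.

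First I would pin down the form of $\theta$. From the Lax pair, $U\to -iz\sigma_3$ and $V=\sum_{k=0}^{n}Q_k z^{k}\to c\,z^{n}\sigma_3$ as $x\to\pm\infty$: for a single flow, $Q_n$ is constant and the lower-order diagonal contributions vanish at $q=0$, so the dispersion relation is the monomial $z^{n}$. Hence the exponential $e^{2it\theta}$ carried by the jump $v_\theta$ of Problem \ref{the first RHP} has $\theta(z;x/t)=\frac{x}{t}z+cz^{n}$ with $c>0$ constant; in particular $\theta$ is a real polynomial of degree $n$, which is hypothesis (1) of Theorem \ref{Theorem 1}. The regularity hypothesis of that theorem is met as well: as noted in its footnote, $R\in H^{1,n-1}(dz)$ by Zhou's theorem forces $Re^{\pm 2it\theta}\in H^{1,1}(dz)$ precisely because $\deg\theta=n$, and this holds once $q(\cdot,0)\in H^{n-1,1}(\R,dx)$.

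Next I would locate the stationary points. Since $n$ is odd, $n-1$ is even and
\[
\theta'(z)=\frac{x}{t}+nc\,z^{n-1}=0 \iff z^{n-1}=-\frac{x}{nct}.
\]
In the oscillating region $x/t<0$ while $c>0$, so the right-hand side is strictly positive and, $n-1$ being even, this has exactly the two real roots $z_{\pm}=\pm\bigl|-\tfrac{x}{nct}\bigr|^{1/(n-1)}$; hence $l=2$. Both roots are nonzero, so
\[
\theta''(z_{\pm})=n(n-1)c\,z_{\pm}^{\,n-2}\neq 0,
\]
with $\theta''(z_{+})>0$ and $\theta''(z_{-})<0$ since $n-2$ is odd; this is hypothesis (2) of Theorem \ref{Theorem 1} with $l=2$. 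One also reads off $D_{-}=\{z\in\R:\theta'(z)<0\}=(z_{-},z_{+})$, since $\theta'(0)=x/t<0$ and $\theta'\to+\infty$ as $|z|\to\infty$.

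With both hypotheses verified, Theorem \ref{Theorem 1} applies verbatim with $z_1=z_{+}$, $z_2=z_{-}$: $q_{as}$ is the sum of the two stationary-phase contributions, and $\delta,\eta,\delta_j,R_j$ are the objects defined there specialized to $\theta=\frac{x}{t}z+cz^{n}$ and $D_{-}=(z_{-},z_{+})$, with error $\OO(t^{-3/4})$. I expect no analytic obstacle beyond what Theorem \ref{Theorem 1} already handles; the proof of the corollary is a pure specialization, the only (elementary) care being the verification via the AKNS recursion that the single-flow phase carries no intermediate powers of $z$.
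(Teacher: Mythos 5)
Your argument is correct and matches the intended reading: the paper gives no separate proof, treating the corollary as an immediate specialization of Theorem~\ref{Theorem 1}, which is exactly what you carry out. The algebra is right --- with $n-1$ even and $x/t<0$, $\theta'(z)=\frac{x}{t}+ncz^{n-1}=0$ has precisely the two real roots $z_{\pm}$, with $n-2$ odd giving $\theta''(z_+)>0>\theta''(z_-)$ and hence $D_-=(z_-,z_+)$, so both hypotheses of the theorem hold.
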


\begin{theorem}
\label{theorem fast decay}
In the fast decay region, the long-time behavior for the potential reads
\begin{align}
q(x,t)=\OO(t^{-1}), \quad t\rightarrow \infty.
\end{align}
\end{theorem}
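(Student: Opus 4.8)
The plan is to exploit that along the ray $x=ct$ with $c>0$ the phase $\theta(z)=\frac{x}{t}z+cz^n$ (more generally a real polynomial with the stated sign of its linear coefficient) has \emph{no real stationary phase points}: since $\theta'(z)=c+ncz^{n-1}$ (resp. a polynomial with no real zeros in this regime), $\theta'$ is bounded away from zero on all of $\R$, and moreover $t\theta(z)$ has a definite sign structure making $\Re(2it\theta)$ control decay after contour deformation into the complex plane. Concretely, the factor $e^{\pm 2it\theta(z)}$ in the jump matrix $v_\theta$ in \eqref{jump matrix} is purely oscillatory on $\R$ with a nonvanishing derivative of the phase, so one expects $q(x,t)=\OO(t^{-1})$ by a single integration-by-parts gain, rather than the $t^{-1/2}$ stationary-phase contribution seen in the oscillating region.

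First I would carry out the standard triangular factorization/deformation: write $v_\theta = (I - w_-)^{-1}(I+w_+)$ with $w_\pm$ supported off $\R$ after opening lenses, using that on $D_+=\{z:\theta'(z)>0\}$ the decaying exponential is $e^{2it\theta}$ and on $D_-$ it is $e^{-2it\theta}$; in the fast-decay regime one of these sets is all of $\R$ (say $\theta'>0$ everywhere when $c>0$), so no $\delta$-function conjugation is needed and the lens contour can be taken as two rays $\Sigma=\{z=u\pm iv\cdot\mathrm{sgn}(\cdots)\}$ on which $\Re(2it\theta)\le -\alpha t v|u|^{n-1}$ for some $\alpha>0$ and $|u|$ large, and $\Re(2it\theta)\le -\beta t v$ uniformly. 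Second, I would introduce the $\dbar$-extension $R^{(2)}$ of $R$ off the contour (as in the $\dbar$-steepest descent framework of \cite{dieng2008longtime,mclaughlin2004dbar}), obtaining a pure $\dbar$-problem with no jump, whose solution $m^{(2)}$ is given by a Cauchy--Neumann series. Third, I would estimate the $\dbar$-operator norm: using $R\in H^{1,1}(\R)$ (equivalently $H^{k,j}$ as provided in the hypothesis, via Zhou's theorem and the bootstrap in footnote \ref{fn:n}), the $\dbar$-data is bounded by $|\dbar R^{(2)}|\lesssim (|R'|+|R|\,|z|^{-1/2})e^{-ct v |z|^{n-1}}$ or the analogous uniform bound, and integrating $e^{2it\Re\theta}$ against this over the region between the rays gives a decay of order $t^{-1}$ — the key point being that with \emph{no} stationary point the integral $\int_0^\infty e^{-\beta t v}\,dv \sim (\beta t)^{-1}$ produces the full $t^{-1}$ rather than $t^{-3/4}$ or $t^{-1/2}$.

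Finally, I would read off the potential from the large-$z$ expansion $q(x,t) = -2i\lim_{z\to\infty} z\,(m^{(2)}(z))_{12} = -2i\lim z\,(m^{(\dbar)}(z))_{12}$, and since the entire contribution now comes from the $\dbar$-region (there being no local parametrix/stationary-phase term), the bound $q(x,t)=\OO(t^{-1})$ follows directly from the $\dbar$-estimate above. The main obstacle I expect is making the exponential-decay estimates on the $\dbar$-data uniform in $x/t$ throughout the fast-decay cone and handling the region near $z=0$ (where the linear term of $\theta'$ dominates but the deformation angle must be chosen carefully to keep $\Re(2it\theta)$ negative) together with the region $|z|\to\infty$ (where the $z^n$ term dominates); once the sign of $\Re(2it\theta)$ is controlled on the chosen lens contour, the remaining estimates are routine $L^2$–$L^\infty$ bounds on the Cauchy operator of the kind already used to prove Theorem \ref{Theorem 1}.
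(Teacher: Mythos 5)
Your proposal is correct and follows essentially the same route as the paper: you use the absence of real stationary points along $x=ct$, $c>0$, a $\dbar$-lens opening into a thin strip off $\R$ on which $\Im\theta\geq\beta v$ uniformly (plus the stronger $nvu^{n-1}$ bound for $|u|$ large), an exponentially small jump for the residual pure RHP, and then the estimate $\int_0^\epsilon e^{-\beta t v}\,dv=\OO(t^{-1})$ to conclude the pure $\dbar$-error, hence $q(x,t)$, is $\OO(t^{-1})$. One small remark: because there is no stationary point here, the paper's extension $R(\Re z)/(1+(\Im z)^2)$ has $\dbar$-data bounded by an $L^2$ function of $\Re z$ alone, so the $|z|^{-1/2}$ singular factor you carried over from the oscillating-region setup is unnecessary (though harmless, since it is still integrable and yields the same $\OO(t^{-1})$).
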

\begin{theorem}
\label{theorem painleve}
In the Painlev\'e region, the long-time behavior for the potential reads
\begin{align}
q(x,t)=(nt)^{-\frac{1}{n}}u_n(x(nt)^{-\frac{1}{n}})+\OO(t^{-\frac{3}{2n}}),\quad  t\rightarrow\infty,
\end{align}
where
$u_n$ solves the $n^{th}$ member of the Painlev\'e II hierarchy.

\end{theorem}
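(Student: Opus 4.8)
The plan is to run the $\dbar$-steepest descent in the self-similar variable adapted to the curve $x=c(nt)^{1/n}$, along which the two real stationary points of $\theta$ coalesce at the origin. First I would carry out the standard preliminary reductions of the $\dbar$-steepest descent, as in the oscillating-region analysis behind Theorem \ref{Theorem 1}: conjugate $m$ by the scalar function $\delta(z)$ built from $\log(1-|R|^2)$ on $D_-=\{z\in\R:\theta'(z)<0\}$ (here a single interval between the two merging stationary points); then factorize the jump $v_\theta$ according to the sign of $\theta'$ and deform $\R$ into lens-shaped regions using a non-analytic extension of the triangular factors off $\R$, with $\dbar$-derivative controlled in terms of $\|R\|_{H^{n-1,1}}$ --- only $R\in H^1$, hence $H^1(\R)\hookrightarrow C^{1/2}$, is actually needed. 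This replaces Problem \ref{the first RHP} by an equivalent mixed $\dbar$-Riemann--Hilbert problem: jumps supported on finitely many rays issuing near the stationary points, together with a $\dbar$-equation on the lens sectors.

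Next I would rescale, setting $\zeta=(nt)^{1/n}z$ and $M(\zeta):=m((nt)^{-1/n}\zeta)$. Writing $\theta$ as $\frac{x}{t}z$ plus higher-degree terms with $\OO(1)$ coefficients, along $x=c(nt)^{1/n}$ one finds $2t\,\theta((nt)^{-1/n}\zeta)=\Theta(\zeta)+\OO(t^{1-k/n})$ uniformly on compact $\zeta$-sets, where $\Theta(\zeta)=2y\zeta+\kappa\zeta^n$, $y=x(nt)^{-1/n}$, $\kappa$ is a nonzero constant proportional to the leading coefficient of $\theta$, and the exponents $1-k/n<0$ run over the discarded intermediate degrees $k\le n-1$. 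In the $\zeta$-plane the two stationary points then sit at fixed $\OO(1)$ locations, the rescaled jump and $\dbar$-contours converge to the fixed steepest-descent configuration of $\Theta$ (on which the jumps are $I$ plus an exponentially small term away from $\zeta=0$), and $R((nt)^{-1/n}\zeta)\to R(0)$, $\delta((nt)^{-1/n}\zeta)\to\delta_\infty(\zeta)$ pointwise, with $\delta_\infty$ the scalar function built from the constant $\log(1-|R(0)|^2)$ on the limiting interval.

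The limiting ($t=\infty$) rescaled problem --- jumps on the steepest-descent rays, constant off-diagonal data assembled from $R(0)$ and the polynomial phase $\Theta$, conjugated by $\delta_\infty$, and no $\dbar$-component --- is exactly the Riemann--Hilbert characterization of the $n^{th}$ member of the Painlev\'e II hierarchy. In the defocusing (soliton-free) case $1-|R(0)|^2>0$, so this model problem is uniquely solvable for every real $y$; denoting its solution $M^{P}(\zeta;y)$ and setting $u_n(y):=2i\lim_{\zeta\to\infty}\zeta M^{P}_{12}(\zeta;y)$, the function $u_n$ solves the corresponding member of the Painlev\'e II hierarchy ODE. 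This is where one invokes the known Flaschka--Newell / Its--Kapaev-type Riemann--Hilbert representation of the Painlev\'e II hierarchy together with its solvability for real parameter.

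Finally I would compare $M$ with $M^{P}$, writing $M=M^{P}(I+E)$ modulo a small pure-$\dbar$ correction $M^{(\dbar)}$. The residual jump differs from the model jump by $R((nt)^{-1/n}\zeta)-R(0)=\OO(t^{-1/(2n)}|\zeta|^{1/2})$ --- the exponential decay of $e^{i\Theta}$ on the deformed contour absorbing the $|\zeta|^{1/2}$ growth --- plus a $\delta$-difference of the same order and the discarded phase terms of size $\OO(t^{1-k/n})$, $k\le n-1$; and the $\dbar$-contribution is $\OO(t^{-3/(2n)})$, by the same area-rescaling ($dA(z)=(nt)^{-2/n}dA(\zeta)$) and non-analyticity-cost ($|z|^{-1/2}$ near the origin) mechanism as in the oscillating-region $\dbar$-estimate. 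Hence the dominant residual is $\OO(t^{-1/(2n)})$, so the small-norm $\dbar$-Riemann--Hilbert theory bounds $E$ and $M^{(\dbar)}-I$ in the relevant norms by $\OO(t^{-1/(2n)})$, giving $\lim_{\zeta\to\infty}\zeta M_{12}(\zeta)=u_n(y)+\OO(t^{-1/(2n)})$. Since the reconstruction formula, applied along the chain of transformations, reads $q(x,t)=2i\lim_{z\to\infty}(z\,m(z))_{12}=(nt)^{-1/n}\cdot 2i\lim_{\zeta\to\infty}\zeta M_{12}(\zeta)$, we conclude $q(x,t)=(nt)^{-1/n}u_n(x(nt)^{-1/n})+\OO(t^{-3/(2n)})$. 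The main obstacle is precisely this error analysis: both the jump contour and the $\dbar$-region are $t$-dependent and non-compact, so all estimates have to be made uniform in $t$ in the rescaled variable, and --- in contrast to the oscillating region, where the model is only a \emph{local} parabolic-cylinder parametrix --- one must separately establish the unique solvability and the uniform-in-$y$ boundedness of the full Painlev\'e II hierarchy model problem.
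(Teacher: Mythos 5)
Your overall plan matches the paper's proof in Section \ref{section: painleve region}: rescale $z=(nt)^{-1/n}\zeta$ so the stationary points freeze, open lenses with a $\dbar$-extension, compare to the constant-$R(0)$ limiting problem (identified as a Painlev\'e II hierarchy Riemann--Hilbert problem), and close with a small-norm argument plus the $\dbar$-estimate $\OO(t^{-1/(2n)})$ in the rescaled variable, which after the $(nt)^{-1/n}$ prefactor yields $\OO(t^{-3/(2n)})$ for $q$. The one structural difference is your opening conjugation by the scalar $\delta$ built from $\log(1-|R|^2)$ on $D_-=(-|z_0|,|z_0|)$: the paper skips this step in the Painlev\'e region, opens lenses only on $\{|z|>|z_0|\}$ where $\theta'>0$, and leaves the \emph{full} (unfactorized) jump $e^{-i\Theta\ad\sigma_3}v((nt)^{-1/n}\xi)$ on the compact segment $\Sigma_0$ before passing to the constant-$R(0)$ model. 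Your $\delta$-conjugation gives a consistent triangular factorization on both $D_\pm$, which makes the eventual deformation onto rays through the origin (the paper's $m^{[4]}\to m^{[5]}$ step) more transparent; the price is one additional bookkeeping step --- verifying that the $\delta_\infty$-conjugated model is still the Painlev\'e II hierarchy problem --- which is innocuous since $\delta_\infty^{\sigma_3}$ is diagonal and leaves the $(1,2)$-entry used in \eqref{potential recover formula} unchanged. Two minor remarks: your ``discarded intermediate phase terms $\OO(t^{1-k/n})$'' do not arise here, because for $\theta=\frac{x}{t}z+cz^n$ the rescaling $2t\,\theta((nt)^{-1/n}\zeta)=2y\zeta+\frac{2c}{n}\zeta^n$ is \emph{exact}; and you should state the $\dbar$-error consistently in a single variable, since you quote it both as $\OO(t^{-1/(2n)})$ (the size in $\zeta$) and as $\OO(t^{-3/(2n)})$ (the contribution to $q$ after rescaling) --- these are related by the $(nt)^{-1/n}$ factor, which is exactly how the paper records it.
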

\subsection{Outline}

 In section \ref{section: ist and rhp}, we simply review the inverse scattering for the AKNS hierarchy. In  section \ref{section: overview of the strategies}, we summarize the idea of the $\dbar$-steepest descent method following \cite{dieng2018dispersive,chen2019longtime}. In the following sections  we first discuss the long-time behavior of the potential in the oscillating region. The general workflow is shown in Fig.\ref{fig:steps}. The first step (see section \ref{section: conjugation}) is so-called conjugation by which one can simultaneously factorize the jump matrix to lower/upper triangle and upper/lower triangle.  The next step (see section \ref{section: lenses opening}) is so called lenses-opening. In each interval where $\theta$ is monotonic, we can deform those intervals into new contours which are off the real line and the exponential terms will decay as $t$ goes to infinity on the new contours. The core idea of this step of the $\dbar$-steepest descent is to use Stokes' theorem to transfer contour integrals to double integrals, while in the original Deift-Zhou's method, this step is done by first performing rational approximation then analytic continuation. After lenses-opening, we will end up at a mixed $\dbar$-RHP. Next, from section \ref{section: Separate Contributions and phase reduction} and section \ref{section: model RHP}, we will first approximate the pure RHP. Three main steps of approximating the pure RHP are so-called localization, phase reduction and contribution separation, which lead to an exact solvable model RHP (also called Its' isomonodromy problem). Due to the exact solvability of the model RHP and the small norm theory, one can establish the existence and uniqueness of the pure RHP part of the mixed $\dbar$-RHP. The last step (see section \ref{section: dbar error}) is to estimate the errors by analyze the pure $\bar{\partial}$-problem which dominates the errors generated by approximating the pure RHP. Undo all the steps, we will eventually prove Theorem \ref{Theorem 1}. Then, in section \ref{section: fast decaying}, we will study the long-time behavior of the potential in the fast decaying region. Following similar analysis, we end up proving Theorem \ref{theorem fast decay}. The final section (see section \ref{section: painleve region}) is devoted to proving Theorem \ref{theorem fast decay}. In that section, we first give an algorithm to generate the Painlev\'e II hierarchy. Following the method of $\dbar$-steepest descent, we represent the long-time behavior of the potential by the solution to a member of the Painlev\'e II hierarchy.
\section{Inverse scattering transform and Riemann-Hilbert problem in $L^2$}
\label{section: ist and rhp}
In this section, we simply review the inverse scattering transform for the AKNS hierarchy in a certain weighted $L^2$ Sobolev space. For more details, we direct readers to Zhou's paper \cite{zhou98}.

The AKNS hierarchy is the integrable hierarchy associated with the following spectral problem:
\begin{align}
    \psi_x(x,t;z)=(-iz\sigma_3+Q(x,t))\psi(x,t;z),
    \label{akns: spectral problem}
\end{align}
where $\sigma_3=\begin{pmatrix}1 &0\\ 0&-1 \end{pmatrix}$ and $Q(x,t)=\begin{pmatrix}0&q(x,t)\\r(x,t)&0 \end{pmatrix}$. 

In the current paper, we only consider the defocusing type reduction:
\begin{align*}
    r(x,t)=q(x,t)\in \R,
\end{align*}
and we assume $q(x,t=0)\in H^{n-1,1}(\R,dx)$\footnote{This guarantees the time evolving of the initial data will stay in $H^{1,1}$. Roughly speaking, from Zhou's work, we know $q(x,0)\in H^{n-1,1}\subset H^{1,1}$ is mapped to $R(z)\in H^{1,n-1}$. Time evolution of the reflection coefficient gives $R(z)e^{itz^n}$, which belongs to $H^{1,1}$ due to the fact that $R(z)\in H^{1,n-1}$, and then the inverse scattering leads to $q(x,t)\in H^{1,1}$.}. For $t=0$, we are looking for solutions (so-called Jost solutions) of equation \eqref{akns: spectral problem} in $H^{1,1}(\R,dx)$, which satisfy the following boundary conditions at infinity:
\begin{align}
\psi_{\pm}=e^{ixz\sigma_3}+o(1),\quad x\rightarrow \pm \infty.
\end{align}
The scattering matrix $S(z)$ is then defined as
\begin{align}
    S(z):=\psi_+\psi_-^{-1}.
\end{align}
It is well known that $S$ enjoys the following properties: for $z\in \R$, 
\begin{align}
    S(z)=\begin{pmatrix}
    a(z) & \bar{b}(z)\\
    b(z) & \bar{a}(z)
    \end{pmatrix},
\end{align}
where $a,b$ can be represented in terms of the initial data and the eigenfunctions $\psi$. To find such representations, we consider
\begin{align}
    \mu^{(\pm)}=\psi_\pm e^{-izx\sigma_3}.
\end{align}
Then the spectral problem \eqref{akns: spectral problem} becomes:
\begin{align}
    (\mu^{(\pm)})_x=iz[\sigma_3,\mu^{(\pm)}]+Q\mu^{(\pm)}.
    \label{equation of mu}
\end{align}
Then the representations of $a,b$ read
\begin{align}
   a(z)&=\mu^{(+)}_{11}(x\rightarrow-\infty)=1-\int_{\R}q(y)\mu_{21}^{(+)}(y,z)dy,\label{a in terms of mu 11}\\
	b(z)&=\mu^{(+)}_{12}(x\rightarrow-\infty)=-\int_{\R}e^{2iyz}q(y)\mu_{22}^{(+)}(y,z)dy.\label{b in terms of mu 22}
\end{align}

From the above representations, it is straightforward to show that $a(z)=1+\OO(1/z)$ and $b(z)=\OO(1/z)$ as $z\rightarrow\infty$, and $a(z)$ can be analytically extended to the upper half plane.

Now, setting
\begin{align}
		m_+(x,z)&=(\mu_1^{(+)}(x,z)/a(z),\mu_2^{(-)}(x,z)),\ \Im{z}\geq 0,\\
		m_-(x,z)&=(\mu_1^{(-)}(x,z),\mu_2^{(+)}(x,z)/\bar{a}(z)),\ \Im{z}\leq 0,
\end{align}
we can then define the jump matrix on the real line by
\begin{align}
    v(z)=e^{-izx\text{ ad}\sigma_3}(m_-^{-1}m_+).
\end{align}
A direct computation shows
\begin{align}
    v(z)=\begin{pmatrix}
    1-|R(z)|^2 & -\bar{R}(z)\\
    R(z) & 1    
    \end{pmatrix},
    \label{jump matrix without time evolution}
\end{align}
where
\begin{align}
R(z)=\frac{b(z)}{a(z)}.
\label{reflection coeff}
\end{align}

The deformation of the spectral problem \eqref{akns: spectral problem} with respect to $t$  is governed by the following equation:
\begin{align}
    \psi_t(x,t;z)&=\left(\sum_{k=0}^nQ_k(x,t)z^k\right)\psi(x,t;z).\label{akns: time evolution}
\end{align}
To generate isospectral flow, $\psi$ need to satisfy the compatiblity condition, i.e., $\psi_{xt}=\psi_{tx}$. By this condition, one can uniquely determine $Q_k$ if the integration constants are assumed to be all zeros. One can systematically determines the $Q_k$'s via associated Lie algebra techniques, see for example \cite{MA2013117}. Through the Lie algebra, one can show the AKNS hierarchy is integrable, i.e, there are infinite many conservation laws. Moreover, using the powerful trace identity \cite{tu89}, one can easily show the bi-Hamiltonian structure of the AKNS hierarchy. Moreover, under the same framework, one can show that any linear combinations of the time-evolution problem are also integrable.

The compatibility condition of \eqref{akns: spectral problem} and \eqref{akns: time evolution} generates integrable PDEs, including the defocusing nonlinear Schr\"odinger equation, the modified KdV equation, the fifth-order modified KdV equation. Due to the decaying of the potential $Q$, it is easy to show the time evolution of the jump matrix $v$ is trivial. Formally speaking, since $S\psi_-=\psi_+$, taking derivatives with respect to $t$ on both sides leads to 
\begin{align*}
    S_t\psi_-+S\psi_{-,t}=\psi_{+,t},
\end{align*}
then by the time evolution equation on $\psi$, we have
\begin{align*}
    S_t\psi_-+S\left(\sum_{k=0}^nQ_k(x,t)z^k\right)\psi_-&=\left(\sum_{k=0}^nQ_k(x,t)z^k\right)\psi_+\\
    &=\left(\sum_{k=0}^nQ_k(x,t)z^k\right)S\psi_-,
\end{align*}
letting $x\rightarrow -\infty$, and since for the case of AKNS flows, all coefficients of $z^k,k=0,\cdots ,n-1,$ will vanish, we arrive at (see, e.g., \cite{ma2019application,ma2021}):
\begin{align*}
    S_t=[Q_nz^n,S].
\end{align*}
It is of our current interest that $Q_n=-icz^n\sigma_3$ for some positive constant $c$. Therefore, we have the time evolution for the scattering matrix
\begin{align}
    \label{time evolution scattering matrix}
    S(z;t)=e^{-icz^nt\ad{\sigma_3}}S(z),
\end{align}
where $e^{\ad{\sigma_3}}(\cdot):=e^{\sigma_3}(\cdot)e^{-\sigma_3}.$
This implies the time evolution of the jump matrix $v(z)$ (see \eqref{jump matrix without time evolution}), and we have
\begin{align}
    v_\theta(z):=e^{-it\theta(z;x,t)\ad{\sigma_3}}v(z),
    \label{jump matrix with time evolution}
\end{align}
where (in the case of the AKNS hierarchy) $\theta(z;x,t)=\frac{x}{t}z+cz^n$ for some positive constant $c$.

Finally, we formulate the direct scattering problem as a Riemann-Hilbert problem as follows:
\begin{problem}
\label{RHP m 0}
Looking for a 2 by 2 matrix-valued functions $m(z;x,t)$ such that
\begin{enumerate}[label={(\arabic*)}]
    \item $m(z)$ is analytic in $\C\backslash \R$;
    \item $m_+=m_-v_\theta,\quad z\in \R$;
    \item $m=I+m_1(x,t)/z+\OO(1/z^2)$, $z\rightarrow\infty$;
\end{enumerate}
where $v_\theta$ is defined in equation \eqref{jump matrix with time evolution} and  $m_{\pm}=\lim_{\epsilon\downarrow 0}m(z\pm i\epsilon)$.
\end{problem}

From the equation \eqref{equation of mu}, and the definition of the jump matrix $v$, we can recover 
the potential by
\begin{equation}
\begin{split}
    q(x,t)&=-2i\lim_{z\rightarrow\infty}[z(m-I)]_{12}\\
   \label{potential recover formula}  &=-2i (m_1(x,t))_{12}.
    \end{split}
\end{equation}
In the following sections, we will perform the $\dbar$-steepest descent method and study the asymptotic behavior for $t$ being sufficiently large.

\section{Overview of the strategies}
\label{section: overview of the strategies}

In this section, we will simply review the idea of Deift-Zhou's nonlinear steepest descent method and its variation, the $\dbar$-steepest descent method. In general, the key step in both methods is to deform the RHP. After the deformation, the new RHP is expected to be approximable locally as $t$ goes to $\infty$. Next, we will explain the main ideas of both methods. The notations in this section are used in this section only.

Let us consider the following RHP on $\R_+$:
\begin{align*}
    M_+(z)&=M_-(z)e^{-it\theta(z)\ad{\sigma_3}}V(z),\quad z\in \R_+,\\
    M(z)&\rightarrow I,\quad z\rightarrow \infty.
\end{align*}
 The main idea\footnote{A good summary of this method can be found in \cite{deift_its_zhou_1993}.} of Deift-Zhou's method is to find a factorization of $V(z)$,  say, $V(z)=V_-(z)V_+(z)$, such that $V_{\pm}(z)$ can be approximated by $\tilde{V}_{\pm}(z)$ which are analytic in the sectors $\Omega_+$ and $\Omega_-$ respectively, see Fig.\ref{idea of dbar steepest descent}.
\begin{figure}[h]
\centering
\begin{tikzpicture}
\draw (0,0)--(4,0);
% \node [above] at (-1,2) {$\theta'=0$};
% \draw [->](-1,2)--(0,0.1);
\draw [fill] (0,0) circle [radius=0.05];
\node [below] at (0,0) {$O$};
% \node [below] at (2,0) {$I_{j+}$};
% \node [below] at (-2,0) {$I_{j-}$};
\draw [line width=0.3 mm] (0,0)--(4,1.5);
\path [fill=gray,opacity=0.3] (0,0)--(3.8,0)--(3.8,3.8*3/8)--(0,0);
% \draw (0.5,0) arc (0:27:0.4) ;
% \node [right] at (0.5,0.1) { $\alpha$};
\node [above] at (3.8,1.5) {$\Sigma_{1}$};
\node [above] at (3,0.4) {$\Omega_+$};

\draw [line width=0.3 mm] (0,0)--(4,-1.5);
\path [fill=gray,opacity=0.6] (0,0)--(3.8,0)--(3.8,-3.8*3/8)--(0,0);
\node [below] at (3.8,-1.5) {$\Sigma_{2}$};
\node [below] at (3,-0.4) {$\Omega_-$};
% \draw (-0.5,0) arc (180:180-27:0.4) ;
\node [above] at (-1,1) {$\Omega$};

\end{tikzpicture}
\caption{Contour deformation}
\label{idea of dbar steepest descent}
\end{figure}
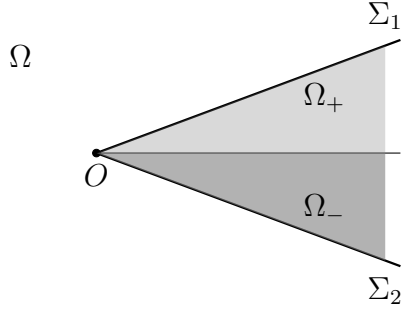
By introducing a new analytic function $\tilde{M}$ as follows:
\begin{align*}
    \tilde{M}|_{\Omega}&=M,\\
    \tilde{M}|_{\Omega_+}&=M\tilde{V}_+^{-1},\\
    \tilde{M}|_{\Omega_-}&=M\tilde{V}_-,
\end{align*}
we arrive at a new RHP:
\begin{align*}
    \tilde{M}_+&=\tilde{M}_-\tilde{V},\quad z\in \R_+\cup \Sigma_1\cup \Sigma_2,\\
    \tilde{V}&=\begin{cases}\tilde{V}_+,\quad z\in \Sigma_1,\\
    \tilde{V}_-,\quad z\in \Sigma_2,\\
    \tilde{V}_-^{-1}V_-V_+\tilde{V}_+^{-1},\quad z\in \R_+.
    \end{cases}
\end{align*}
Also, we want to guarantee that 
based on the signatures of the phase function $\Re{(i\theta(z))}$ in each sector, the new jumps converge rapidly to the identity away from $O$ as $t\rightarrow\infty$. Usually, one needs to deform the RHP several times. Eventually, the initial RHP can be approximated locally by the following fairly simple model RHP:
\begin{align*}
    M^{\sharp}_+&=M^{\sharp}_-e^{-it\tilde{\theta}(z)\ad{\sigma_3}}V(0),\quad z\in \R_+,\\
    M^{\sharp}&\rightarrow I,\quad z\rightarrow \infty,
\end{align*}
where $\tilde{\theta}(z)$ is a certain rational approximation to $\theta(z)$ near $z=0$. This model RHP can be solved explicitly and by undoing all deformations, one can track all errors in the middle steps. 

The Deift-Zhou method of analyzing errors is heavily based on the harmonic analysis for the Cauchy operators on contours, however, the $\dbar$-steepest descent method transfers the error estimation to some fairly simple estimations of certain double integrals. A natural way of connecting the contour integrals to the double integrals is to use Stokes' theorem (or the Cauchy-Green theorem): for any $C^1(\R^2\rightarrow \C)$ function $f(z):=f(z,\bar{z})$, we have
\begin{align*}
    \int_{\partial \Omega}f(z)dz=2i\int_{\Omega}\frac{\partial f(z)}{\partial \bar{z}}dxdy,
\end{align*}
where $z=x+iy$. So in the $\dbar$-steepest descent theory, 
we try to find an interpolation, say $E(z)$, between the old contour and the new one. Such an $E$ satisfies
\begin{align*}
    E(z)=\begin{cases}
    V_+(0),\quad z\in \Sigma_1^-,\\
    V_-(0),\quad z\in \Sigma_2^+,\\
    V_+(z),\quad z\in \R_+^+,\\
    V_-(z),\quad z\in \R_+^-,\\
    I,\quad z\in \Sigma_1^+\cup\Sigma_2^-,
    \end{cases}
\end{align*}
where all contours are orientated from $O$ to $\infty$ and $\Gamma^{\pm}$ mean the limit from left/right,
and it is $C^1$ in $\overline{\Omega_+\cup\Omega_-}$. Also, we want $e^{it\tilde{\theta}(z)\ad{\sigma_3}}V_{\pm}(0)$ go to 0 as $t\rightarrow\infty$. Now, let us set $\hat{M}=M(z)E(z)$, we obtain the so-called $\dbar$-RHP:
\begin{enumerate}
    \item (The RHP) $\hat{M}_+=\hat{M}_-e^{-it{\theta}(z)\ad{\sigma_3}}\hat{V}(z),\quad z\in \Sigma_1\cup\Sigma_2$, where
    \begin{align*}
        \hat{V}(z)=\begin{cases}
        e^{-it{\theta}(z)\ad{\sigma_3}}V_+(0),\quad z\in \Sigma_1,\\
        e^{-it{\theta}(z)\ad{\sigma_3}}V_-(0),\quad z\in \Sigma_2.
        \end{cases}
    \end{align*}
    \item (The $\dbar$-problem) For any $z\in \C$, we have
    \begin{align*}
        \dbar \hat{M}=\hat{M}E^{-1}\dbar{E}.
    \end{align*}
\end{enumerate}
The deformation of the RHP follows from Deift-Zhou's method, but the error estimations here are transferred to a dbar problem, which turns out to be equivalent to some singular integral equation with respect to the area measure. Then through some fairly simple estimates on the double integrals, one will obtain the same error estimates as 
the Deift-Zhou method. In the following sections, we will apply the $\dbar$-steepest descent to the 
defocusing mKdV type reduction of the AKNS hierarchy.

\section{Conjugation}
\label{section: conjugation}

In this section, we will factorize the jump matrix (as defined by equation \eqref{jump matrix} ) in a way that it can be used for deforming the RHP. It is easy to see that the jump matrix enjoys the following two kinds of factorization:
\begin{align}
    v_\theta(z)=\begin{cases}
    \begin{pmatrix}
    1 & -\bar{R}(z)e^{-2it\theta}\\
    0& 1
    \end{pmatrix}\begin{pmatrix}
    1 & 0\\
    R(z)e^{2it\theta} & 1
    \end{pmatrix},\\
    \begin{pmatrix}
    1 & 0\\
    \frac{R(z)}{1-|R(z)|^2}e^{2it\theta}& 1
    \end{pmatrix}(1-|R|^2)^{\sigma_3}\begin{pmatrix}
    1 & -\frac{\bar{R}(z)}{1-|R|^2}e^{-2it\theta}\\
    0 & 1
    \end{pmatrix}.
    \end{cases}
\end{align}

In the light of the main ideas we described in the last section, we want to remove the middle term in the second factorization. By doing so, we can eventually find the proper factorization based on the signatures of the  $\Re(i\theta)$. Due to our assumptions on $\theta$, near a stationary phase point (say $|z-z_j|<\epsilon$, for some small positive $\epsilon$), $\theta=\theta(z_j)+\frac{\theta''(z_j)}{2}(z-z_j)^2+\OO(|z-z_j|^3)$. If $\theta''(z_j)>0$, then $\Re(i\theta(z))$ is negative in the line (I): $\{z=z_j+re^{i\alpha}, r\in (-\epsilon,\epsilon) \text{ with fixed } \alpha\in (0,\pi/2)\}$, and it is positive in the line (II): $\{z=z_j+re^{i\alpha}, r\in (-\epsilon,\epsilon) \text{ with fixed }\alpha\in (-\pi/2,0)\}$.  On the line (I), notice that $e^{2it\theta}$ decays to 0 as $t\rightarrow \infty$,  we can deform the jump on the contour right to the stationary phase point using the first factorization. With the same argument on the line (II), we can deform the jump on the contour left to the stationary phase point using the second factorization. If $\theta''(z_j)<0$, notice now $e^{2it\theta}$ decays to 0 as $t\rightarrow\infty$ on the line (II), and 
thus we need the second factorization for the jump on the contour right to the stationary phase point and the first factorization for the jump on the contour left to the stationary phase point. Motivated by the above arguments, we denote $D_\pm=\{z\in \R: \pm \theta'(z)>0\}$

To eliminate the diagonal matrix in the second factorization, we introduce a scalar RHP:
\begin{equation}
\label{scalar RHP delta}
\begin{split} 
    \delta_+&=\delta_-[(1-|R|^2)\chi_{D_-}+\chi_{D_+}],\quad z\in \R,\\
    \delta(z)&=1+\OO(z^{-1}),\quad z\rightarrow\infty.
    \end{split}
\end{equation}
Then by conjugating the initial RHP, we arrive at a new RHP: 
\begin{problem} Looking for a 2 by 2 matrix-valued function $m^{[1]}(z;x,t)$ such that
\label{RHP m 1}
\begin{enumerate}[label={(\arabic*)}]
\item $m^{[1]}_+=m^{[1]}_-\delta_-^{\sigma_3}v_\theta\delta_+^{-\sigma_3},\quad z\in \R;$
\item $m^{[1]}=I+\OO(z^{-1}),\quad z\rightarrow\infty.$
\end{enumerate}

By denoting $v_\theta^{[1]}:=\delta_-^{\sigma_3}v_\theta\delta_+^{-\sigma_3}$, the new jump matrix reads
\begin{equation}
    v_\theta^{[1]}(z)=\begin{cases}
    \begin{pmatrix}
    1 & -\bar{R}(z)\delta^2(z)e^{-2it\theta}\\
    0 & 1
    \end{pmatrix}
    \begin{pmatrix}
    1 & 0\\
    R(z)\delta^{-2}e^{2it\theta} & 1
    \end{pmatrix},\quad z\in D_+,\\
    \begin{pmatrix}
    1 & 0\\
    \frac{R(z)\delta^{-2}_-e^{2it\theta}}{1-|R|^2} & 1
    \end{pmatrix}\begin{pmatrix}
    1 & -\frac{\bar{R}(z)\delta_+^2(z)e^{-2it\theta}}{1-|R|^2}\\
    0 & 1
    \end{pmatrix},\quad z\in D_-.
    \end{cases}
\end{equation}
\end{problem}

The scalar RHP \eqref{scalar RHP delta} has been carefully studied in the literature (see for example \cite{beals_deift_tomei_1988} Lemma 23.3, \cite{DZ03}, \cite{varzugin96} and \cite{do2009nonlinear}). Here we just list some of the properties. First, the solution to the RHP \eqref{scalar RHP delta} can be represented as follows:
\begin{equation}
    \log{(\delta(z))}=(C_{D_-}(\log(1-|R|^2)))(z),z\in \C\backslash D_-,
\end{equation}
where the Cauchy operator $(C_{D_-}f)(z)=\frac{1}{2\pi i}\int_{D_-}\frac{f(s)}{s-z}ds$ . Since we assume $R(z)\in H^{1,1}_1(\R,dz)=H^{1,1}\cap \{f:|f|<1\}$, one can show $\log(1-|R|^2)$ is in $H^{1,0}$, and 
then by the Sobolev embedding, we know it is also H\"older continuous with index $1/2$. Then, by the 
Privalov-Plemelj theorem, which says that Cauchy operator perseveres H\"older continuity with index less than 1, one can show $\log(\delta(z))$ is H\"older continuous with index $1/2$ except for the end points. Next we study the behavior near those points. 

Let us denote
\begin{align}
\eta(z)=-\frac{1}{2\pi}\log(1-|R(z)|^2), \quad z\in \R.
\label{definite of eta}
\end{align} 
We will prove the following proposition.

First we define a tent function supported on the interval $[-\epsilon,\epsilon]$,
\begin{align}
s_\epsilon(z)=\begin{cases}
0,\quad |z|\geq \epsilon\\
-\frac{1}{\epsilon}z+1,\quad 0< z<\epsilon,\\
\frac{1}{\epsilon}z+1,\quad -\epsilon< z\leq 0.
\end{cases}
\end{align}

\begin{prop}\label{delta decomposition}
For each $\epsilon>0$, and $\epsilon\leq \frac{1}{3}\min_{j\neq k}|z_j-z_k|, $ there exists a neighborhood $I=I(\epsilon)$ such that the identity
\begin{align*}
\log(\delta(z))&=i\int_{D_-\backslash I}\frac{\eta(s)}{s-z}ds+i\sum_{j=1}^l\left[\eta(z_j)(1+\log(z-z_j))\right]\varepsilon_j\\
&+i\sum_{j=1}^l{\int_{I\cap D_-}\frac{\eta(s)-\eta_j(s)}{s-z}ds}\\
&+i\sum_{j=1}^l\frac{1}{\epsilon}\eta(z_j)[(z-z_j)\log(z-z_j)-(z-z_j+\varepsilon_j\epsilon)\log(z-z_j+\varepsilon_j\epsilon)]
\end{align*}
is true, where $\varepsilon_j=\text{sgn}(\theta''(z_j))$,  $\eta_j(z)=\eta(z_j)s_\epsilon(z-z_j)$ and see \eqref{definite of eta} for the definition of $\eta$. As for the logarithm function, the branch is chosen such that $\text{argument}\in (-\pi,\pi)$.
\end{prop}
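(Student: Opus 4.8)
The plan is to start from the closed form $\log(\delta(z)) = (C_{D_-}\log(1-|R|^2))(z) = i\int_{D_-}\frac{\eta(s)}{s-z}\,ds$, valid for $z\in\C\backslash D_-$, and to split the integral into a ``good'' part away from the stationary phase points and a collection of local pieces near each $z_j$. Concretely, I would write $D_- = (D_-\backslash I)\cup(I\cap D_-)$ where $I = \bigcup_{j=1}^l (z_j-\epsilon, z_j+\epsilon)$ (this is the neighborhood $I(\epsilon)$ in the statement; the condition $\epsilon\le\frac13\min_{j\ne k}|z_j-z_k|$ guarantees the intervals are disjoint and each $z_j$ lies on the boundary of $D_-$ from exactly one side, determined by $\varepsilon_j=\mathrm{sgn}(\theta''(z_j))$). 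The term $i\int_{D_-\backslash I}\frac{\eta(s)}{s-z}\,ds$ is then exactly the first term of the claimed identity and needs no further work.

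For each local piece, the idea is to add and subtract the ``model'' density $\eta_j(s) = \eta(z_j)s_\epsilon(s-z_j)$, which agrees with $\eta$ (to leading order) at $z_j$ and is the tent function the authors set up precisely so that its Cauchy integral can be computed in closed form. This gives
\begin{align*}
i\int_{I\cap D_-}\frac{\eta(s)}{s-z}\,ds = i\int_{I\cap D_-}\frac{\eta(s)-\eta_j(s)}{s-z}\,ds + i\int_{I\cap D_-}\frac{\eta_j(s)}{s-z}\,ds,
\end{align*}
summed over $j$ (each $s$ in $I\cap D_-$ lies in exactly one subinterval). The first summand is the third term in the proposition. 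The substantive computation is the second summand: on the side of $z_j$ that belongs to $D_-$ — say the interval runs from $z_j$ to $z_j-\varepsilon_j\epsilon$ after accounting for orientation — one integrates $\eta(z_j)\bigl(1 \mp \tfrac1\epsilon(s-z_j)\bigr)/(s-z)$ explicitly. The constant part integrates to $\eta(z_j)\log\frac{s-z}{\,\cdot\,}$ evaluated at the endpoints, producing the $\eta(z_j)\log(z-z_j)$ contribution (packaged with the $\varepsilon_j$ sign and the constant $+1$ coming from the linear factor's antiderivative), and the linear part integrates, via $\int \frac{s-z_j}{s-z}\,ds = \int\bigl(1 + \frac{z-z_j}{s-z}\bigr)ds$, to the $(z-z_j)\log(z-z_j) - (z-z_j+\varepsilon_j\epsilon)\log(z-z_j+\varepsilon_j\epsilon)$ expression with coefficient $\frac1\epsilon\eta(z_j)$. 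Collecting the endpoint terms carefully and using the stated branch $\arg\in(-\pi,\pi)$ for all logarithms should reproduce the second and fourth terms of the identity exactly.

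I expect the main obstacle to be purely bookkeeping: getting the orientation of each local interval right (whether $D_-$ lies to the left or right of $z_j$, which flips with $\varepsilon_j$), keeping the endpoint evaluations of $\log(s-z)$ consistent with a single branch choice while $z$ ranges over $\C\backslash D_-$, and making sure the ``$1+\log(z-z_j)$'' grouping matches what falls out of the antiderivative of $\eta(z_j)\bigl(\mp\tfrac1\epsilon\bigr)(s-z_j)/(s-z)$ rather than accidentally double-counting a constant. There is also a minor analytic point — one should note that $\eta(s)-\eta_j(s)$ vanishes at $s=z_j$ and is Hölder-$\tfrac12$ (inherited from $\eta\in H^{1,0}\hookrightarrow C^{0,1/2}$, as recalled just before the proposition), so the third-term integrals converge and are in fact continuous up to $z_j$; this is what makes the decomposition meaningful as an ``extraction of the singular part,'' though it is not strictly needed to verify the displayed equality. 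Once the endpoint algebra is organized, the identity is just the fundamental theorem of calculus applied piecewise.
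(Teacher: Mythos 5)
Your proposal is correct and follows essentially the same route as the paper: split $D_-$ into the part away from the $z_j$ and the local pieces, add and subtract the tent density $\eta_j$, and integrate the tent density explicitly; the paper likewise uses the observation that only one of $I_{j\pm}\cap D_-$ is nonempty (governed by $\varepsilon_j$), which is what you anticipate as the orientation bookkeeping. The one small caveat is that the grouping of terms in the displayed identity is not cleanly ``constant part of $s_\epsilon$ gives term two, linear part gives term four'': the $-\log(z-z_j+\varepsilon_j\epsilon)$ from the constant part and the $-\tfrac{z-z_j}{\epsilon}\log(z-z_j+\varepsilon_j\epsilon)$ from the linear part merge to produce the $-\tfrac{1}{\epsilon}(z-z_j+\varepsilon_j\epsilon)\log(z-z_j+\varepsilon_j\epsilon)$ factor, so the split between the second and fourth terms of the proposition involves a genuine cross-term rather than a term-by-term correspondence.
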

\begin{proof}
Let $I=\cup_{j=1}^l(I_{j+}\cup I_{j-})$, where $I_{j\pm}=\{z:0<\pm(z-z_j)<\epsilon\}$. Now we have
\begin{align*}
\log(\delta(z))&=i\int_{D_-\backslash I}\frac{\eta(s)}{s-z}ds\\
&+i\sum_{j=1}^l(\int_{I_{j-}\cap D_-}+\int_{I_{j+}\cap D_-}\frac{\eta(s)}{s-z}ds).
\end{align*}
For each $j$, we have
\begin{align*}
\int_{I_{j-}}\frac{\eta(s)}{s-z}ds=\int_{I_{j-}}\frac{\eta(s)-\eta_j(s)}{s-z}ds+\int_{I_{j-}}\frac{\eta_j(s)}{s-z}ds.
\end{align*}
The first integral on the right hand side is the non-tangential limit as $z\rightarrow z_j$ and the second one generates a logarithm singularity near $z_j$. In fact, direct computation shows
\begin{align*}
\int_{I_{j-}}\frac{\eta_j(s)}{s-z}ds&=\eta(z_j)+\frac{1}{\epsilon}[(z-z_j)\log(z-z_j)-(z-z_j+\epsilon)\log(z-z_j+\epsilon)]\eta(z_j)\\
&+\eta(z_j)\log(z-z_j).
\end{align*}
Similarly, for $I_{j+}$,
\begin{align*}
\int_{I_{j+}}\frac{\eta_j(s)}{s-z}ds&=-\eta(z_j)+\frac{1}{\epsilon}[(z-z_j)\log(z-z_j)-(z-z_j-\epsilon)\log(z-z_j-\epsilon)]\\
&-\eta(z_j)\log(z-z_j).
\end{align*}
And note that only one of the $I_{j\pm}\cap D_-$ is nonempty, which depends on the sign of the second derivative of the phase function $\theta$. By assembling all together, the proof is done.
\end{proof}
\begin{remark}
The proposition tells us how the function $\delta(z)$ behavior near the saddle points. Near the saddle points $z_j$, $\delta(z)$ has a mild singularity $(z-z_j)^{i\eta(z_j)}$. Fortunately, those singularities are bounded along any ray off $\R$ and hence in some sense they do not affect asymptotics much. It is worth mentioning that one can ignore the mild singularity by introducing an auxiliary function, see Lemma 3.1 in \cite{dieng2018dispersive}.
\end{remark}

\section{Lenses opening}
\label{section: lenses opening}
The purpose of lens-opening is to deform the RHP on the real line to a new RHP on new contours 
such that jumps on the new contours will rapidly decay to $I$ as $t\rightarrow\infty$. 
We first study the signature of $\Im{\theta}$ near the saddle point $z_j$.
\begin{figure}[h]
\centering
\begin{tikzpicture}
\draw (-4,0)--(4,0);
\node [above] at (-1,2) {$\theta'=0$};
\draw [->](-1,2)--(0,0.1);
\draw [fill] (0,0) circle [radius=0.05];
\node [below] at (0,0) {$z_j$};
\node [below] at (2,0) {$I_{j+}$};
\node [below] at (-2,0) {$I_{j-}$};
\draw [line width=0.3 mm] (0,0)--(4,1.5);
\path [fill=gray,opacity=0.3] (0,0)--(3.8,0)--(3.8,3.8*3/8)--(0,0);
\draw (0.5,0) arc (0:27:0.4) ;
\node [right] at (0.5,0.1) { $\alpha$};
\node [above] at (3.8,1.5) {$\Sigma_{j,1}$};

\draw [line width=0.3 mm] (0,0)--(-4,1.5);
\path [fill=gray,opacity=0.6] (0,0)--(-3.8,0)--(-3.8,3.8*3/8)--(0,0);
\draw (-0.5,0) arc (180:180-27:0.4) ;
\node [above] at (-3.8,1.5) {$\Sigma_{j,2}$};

\draw [line width=0.3 mm] (0,0)--(-4,-1.5);
\path [fill=gray,opacity=0.3] (0,0)--(-3.8,0)--(-3.8,-3.8*3/8)--(0,0);
\draw (-0.5,0) arc (180:180+27:0.4) ;
\node [below] at (-3.8,-1.5) {$\Sigma_{j,3}$};

\draw [line width=0.3 mm] (0,0)--(4,-1.5);
\path [fill=gray,opacity=0.6] (0,0)--(3.8,0)--(3.8,-3.8*3/8)--(0,0);
\node [below] at (3.8,-1.5) {$\Sigma_{j,4}$};
%\draw (0.5,0) arc (0:-27:0.4) ;

\node at (3,0.65) {$\Omega_{j,1}$} ;
\node at (-3,0.65) {$\Omega_{j,3}$} ;
\node at (-3,-0.65) {$\Omega_{j,4}$} ;
\node at (3,-0.65) {$\Omega_{j,6}$} ;
\node at (0,0.95) {$\Omega_{j,2}$} ;
\node at (0,-0.95) {$\Omega_{j,5}$} ;
\end{tikzpicture}
\caption{Notations for studying signatures of $\Im(\theta(z))$ near $z_j$}
\label{notation for Im theta}
\end{figure}
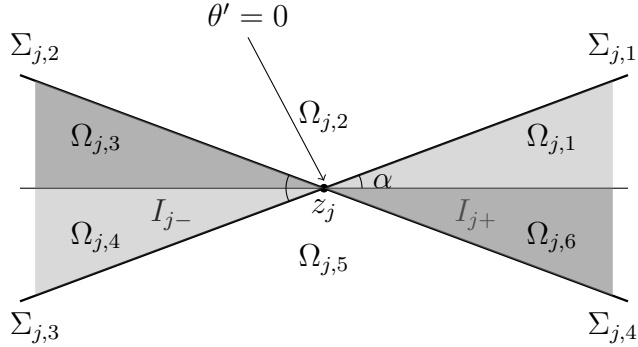
% Assuming the number of saddle points is finite, we can choose $\alpha$ sufficiently small 
Let us denote $I_{j+}=[z_j,\frac{z_j+z_{j+1}}{2}]$ and $I_{j-}=[\frac{z_j+z_{j-1}}{2},z_j]$. Two cases need to be discussed. The first case is $\theta''(z_j)>0$, and so we have $I_{j\pm}\subset D_{\pm}$. The second case
is $\theta''(z_j)<0$, and then we have $I_{j\pm}\subset D_{\mp}$. 

Recall the factorization of the conjugated jump matrix $v^{[1]}_\theta$, to deform it from $I_{j+}$ to $\Sigma_{j,1}$, we need make sure the exponential term $e^{2it\theta(z)}$ decays rapidly to $I$ on $\Sigma_{j,1}$, and thus we need to discuss $\Re(i\theta)$ on $\Sigma_{j,1}$. Considering a Taylor approximation of $\theta(z)$ near $z_j$, we have $\theta(z)=\theta(z_j)+\varepsilon_jA_j(z-z_j)^2+O(|z-z_j|^3)$, where $A_j=\left|\frac{\theta''(z_j)}{2}\right|$ and $\varepsilon_j=\text{sgn}\{\theta''(z_j)\}$. 

Let $z=z_j+u+iv=z_j+\rho e^{i\phi}$. Then $\Im(\theta(z))=\varepsilon_j A_j\rho ^2 \sin(2\phi)+O(\rho^3)$, where $\phi\in (0,\alpha]$ is fixed. Now we define the regions $\Omega_{j,n},\ n=1,\cdots ,6,$ as follows:
\begin{equation}
\label{ definition of the regions}
\begin{split}
    \Omega_{j,1}&=\left\{z=z_j+\rho e^{i\phi},\phi\in (0,\alpha],\rho\in (0,\frac{|z_j-z_{j+1}|}{2\cos{\alpha}}),\Re z\in I^{\varepsilon_j}_{j+}\right\},\\
    \Omega_{j,3}&=\left\{z=z_j+\rho e^{i\phi},\phi\in [\pi-\alpha,\pi),\rho\in (0,\frac{|z_j-z_{j-1}|}{2\cos{\alpha}}),\Re 
    z\in I^{\varepsilon_j}_{j-}\right\},\\
    \Omega_{j,2}&=\C^+\backslash (\Omega_{j,1}\cup\Omega_{j,3}),\\
    \Omega_{j,4}&=\left\{z=z_j+\rho e^{i\phi},\phi\in (\pi,\pi+\alpha],\rho\in (0,\frac{|z_j-z_{j-1}|}{2\cos{\alpha}}),\Re z\in I^{\varepsilon_j}_{j-}\right\},\\
    \Omega_{j,6}&=\left\{z=z_j+\rho e^{i\phi},\phi\in [-\alpha,0),\rho\in (0,\frac{|z_j-z_{j+1}|}{2\cos{\alpha}}),\Re z\in I^{\varepsilon_j}_{j+}\right\},\\
    \Omega_{j,5}&=\C^-\backslash (\Omega_{j,4}\cup\Omega_{j,6}),
    \end{split}
\end{equation}
where
\begin{align*}
    I^{\varepsilon_j}_{j\pm}=\begin{cases}
    I_{j\pm},\quad \varepsilon_j=1,\\
    I_{j\mp},\quad \varepsilon_j=-1.
    \end{cases}
\end{align*}
Since the number of real saddle points is finite, we can always choose a sufficiently small $\alpha$, such that for each $j$, $e^{2it\theta}$ decays to 0 in $\Omega_{j,1}\cup\Omega_{j,4}$ and $e^{-2it\theta}$ decays to 0 in $\Omega_{j,3}\cup\Omega_{j,6}$.

Now we are in the position to open the lenses. First we introduce a bounded smooth function $\mathcal{K}$ defined on $[0,2\pi]$ such that
\begin{equation}
\begin{split}
    \mathcal{K}(0)=1,\\ 
    \mathcal{K}(\alpha)=0,\\
    \text{Period of }\mathcal{K} \text{ is $\pi$},\\
    \mathcal{K} \text{ is even function}.
    \end{split}
    \label{prop of K}
\end{equation}
 Consider $\varepsilon_j=1$ first. Then the  $\bar{\partial}$ extension functions are as follows. Let $z-z_j=u+iv=\rho e^{i\phi}$,and 
for the case $\varepsilon_j=1$, we set
\begin{equation}
\label{definitions of E j k}
\begin{split}
E_{j,1}(z)&=\K(\phi)R(u+z_j)\delta^{-2}(z)\\
&+[1-\K(\phi)]R(z_j)\delta_j^{-2}(z-z_j)^{-2i\varepsilon_j\eta(z_j)},\quad z\in \Omega_{j,1},\\
E_{j,3}(z)&=\K(\pi-\phi)(-\frac{\bar{R}(u+z_j)}{1-|R(u+z_j)|^2}\delta_+^2(z))\\
&+[1-\K(\pi-\phi)](-\frac{\bar{R}(z_j)}{1-|R(z_j)|^2}\delta_{j}^2(z-z_j)^{2i\varepsilon_j\eta(z_j)}),\quad z\in \Omega_{j,3},\\
E_{j,4}(z)&=\K(\pi+\phi)(\frac{R(z_j+u)}{1-|R(z_j+u)|^2}\delta_-^{-2}(z))\\
&+[1-\K(\pi +\phi)](\frac{R(z_j)}{1-|R(z_j)|^2}\delta_j^{-2}(z-z_j)^{-2i\varepsilon_j\eta(z_j)}),\quad z\in \Omega_{j,4},\\
E_{j,6}(z)&=\K(-\phi)(-\bar{R}(z_j+u)\delta^2(z))\\
&+[1-\K(-\phi)](-\bar{R}(z_j)\delta_j^2(z-z_j)^{2i\varepsilon_j\eta(z_j)}),\quad z\in \Omega_{j,6},\\
\end{split}
\end{equation}
where
 $$\delta_j=\lim_{\substack{z=z_j+\rho e^{i\phi},\\ \rho\rightarrow 0,\\ \phi\in (0,\pi/2)}}\delta(z)(z-z_j)^{i\eta(z_j)}.$$
 
For the case $\varepsilon_j=-1$, one only needs to switch the index $1$ with $3$ and $4$ with $6$. For the sake of simplicity, in what follows, we focus just on the case $\varepsilon_j=1$. The extension functions can be considered as interpolations between jumps on the old and new contours. Using the extension functions $E_{j,k}, k=1,3,4,6$, we can construct the lens-opening matrices $O(z)$ as follows:
\begin{align}
O(z)=\begin{cases}
O_{j,n}(z)=\begin{pmatrix}
1 & 0\\
(-1)^nE_{j,n}e^{2it\theta(z)}& 1
\end{pmatrix},\quad z\in \Omega_{j,n},\quad n=1,4,\\
O_{j,m}(z)=\begin{pmatrix}
1 & (-1)^{m}E_{j,m}e^{-2it\theta(z)}\\
0 & 1
\end{pmatrix},\quad z\in \Omega_{j,m},\quad m=3,6,\\
O_{j,k}(z)=I,\quad z\in \Omega_{j,k},\quad k=2,5.
\end{cases}
\label{def of open lense}
\end{align} 
Then lens-opening is performed by multiplying $O(z)$ to the right of the matrix $m^{[1]}$. 
Let us denote $m^{[2]}(z)=m^{[1]}(z)O(z),z\in \C\backslash \R$. 
Due to the lacking of analyticity of $O(z)$ (in fact, since we only assume $R(z)\in C^1(\R)$, $O(z)$ is also just in $C^1(\R^2)$\footnote{Here, $R(z)\in C^1(\R)$  means $R(z)$ is a function defined on the real line with continuous first order derivative. While since $O(z)$ is a matrix-valued function defined on the complex plan, so $O(z)\in C^1(\R^2)$ means all the entries have continuous first-order derivatives with respect to $z$ and $\bar{z}$.}), we arrive at the following mixed $\dbar$-Riemann-Hilbert problem($\dbar$-RHP):
\begin{pproblem} Looking for a 2 by 2 matrix-valued function $m^{[2]}$ such that
\label{dbar RHP}
\begin{enumerate}[label={(\arabic*)}]
\item{The RHP}
\begin{itemize}
    \item $m^{[2]}(z)\in C^1(\R^2\Sigma)$;
    \item $m^{[2]}_+=m^{[2]}_-v^{[2]}_\theta,\quad z\in \cup_{j=1,...,l,k=1,2,3,4}\Sigma_{j,k},$
    where the jump matrices read
    \begin{align}
    \label{dbar RHP 1}
        v^{[2]}_\theta=\begin{cases}
        O_{j,1}^{-1}, \quad z\in \Sigma_{j,1},\\
        O_{j,3}^{-1}, \quad z\in \Sigma_{j,2},\\
        O_{j,4}, \quad z\in \Sigma_{j,3},\\
        O_{j,6}, \quad z\in \Sigma_{j,4};
        \end{cases}
    \end{align}
    \item $m^{[2]}(z)=I+\OO(z^{-1}),\quad z\rightarrow\infty$.
\end{itemize}

\item{The $\dbar$-problem}

For $z\in \C$, we have
\begin{align}
\label{dbar RHP 2}
\dbar m^{[2]}(z)=m^{[2]}(z)\dbar O(z).
\end{align}
\end{enumerate}
\end{pproblem}
% \begin{remark}
% By multiple $O(z)$ to $m^{[1]}(z)$, we actually remove the jumps on the real line, where the exponential factor $e^{\pm 2it\theta(z)}$ is oscillating. The regularity of $m^{[2]}$ is determined by the regularity of $O(z)$ which inherited from the construction of $E_{j,k}$, the fact $R(z_j+u)$ is no longer analytic but is $C^1$ in the weak sense. Moreover, due to the boundedness of $E_{j,k}(z)$ along any non-real ray, and the fact the exponential factors are all exponential decaying as $z\rightarrow \infty$, we will have $O(z)=I+o(1),z\rightarrow\infty$, i.e. uniformly in $t$. In the following sections, we will see the error is eventually dominated by a pure $\dbar$ problem.
% \end{remark}

To close this section, we state a bound estimate for $\dbar E_{j,k}$, which will be used in later sections.
\begin{lemma}\label{dbar essential estimate}
For $j=1\cdots l,\, k=1,2,3,4$, and $z\in \Omega_{j,k},u=\Re(z-z_j)$,
\begin{align}
|\dbar E_{j,k}(z)|\leq c(|z-z_j|^{-1/2}+|R'(u+z_j)|).
\end{align}
\end{lemma}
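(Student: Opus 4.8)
The plan is to prove the bound by differentiating the explicit formulas for $E_{j,k}$ in \eqref{definitions of E j k} and controlling each resulting term. I will treat $E_{j,1}$ on $\Omega_{j,1}$ in detail; the cases $k=3,4,6$ are identical after relabeling and conjugation, so it suffices to record where the hypotheses enter. Writing $z-z_j=\rho e^{i\phi}=u+iv$, recall that $\bar\partial = \tfrac12(\partial_u + i\partial_v)$, and that in polar coordinates $\bar\partial = \tfrac12 e^{i\phi}\bigl(\partial_\rho + \tfrac{i}{\rho}\partial_\phi\bigr)$. The key structural point is that $E_{j,1}$ is an interpolation $\K(\phi)f(z) + [1-\K(\phi)]g(z)$ between the ``true'' data $f(z)=R(u+z_j)\delta^{-2}(z)$ and the ``frozen'' data $g(z)=R(z_j)\delta_j^{-2}(z-z_j)^{-2i\varepsilon_j\eta(z_j)}$, with the crucial feature that $f$ and $g$ \emph{agree at $z=z_j$} in the appropriate limiting sense (this is exactly what the auxiliary constant $\delta_j$ and the branch $(z-z_j)^{i\eta(z_j)}$ are designed to achieve, via Proposition \ref{delta decomposition}).

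The computation of $\bar\partial E_{j,1}$ then splits into three groups of terms. First, the terms where $\bar\partial$ hits $\K(\phi)$: since $\bar\partial \K(\phi)$ involves $\partial_\phi \K(\phi)$ divided by $\rho$, this produces $\tfrac{i}{2\rho}e^{i\phi}\K'(\phi)\bigl(f(z)-g(z)\bigr)$. Here the $\rho^{-1}$ singularity is defeated by the difference $f(z)-g(z)$: because $f$ and $g$ are both (essentially) H\"older-$1/2$ at $z_j$ with the same value there — $f$ by Sobolev embedding $R\in H^{1,1}\hookrightarrow C^{1/2}$ together with the $C^{1/2}$ regularity of $\delta$ away from saddle points established after \eqref{scalar RHP delta}, and the frozen singular parts matching by construction — we get $|f(z)-g(z)|\lesssim |z-z_j|^{1/2}$, hence this group is $\lesssim |z-z_j|^{-1/2}$. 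Second, the terms where $\bar\partial$ hits $R(u+z_j)$: since $\bar\partial u = \tfrac12$, this contributes $\tfrac12\K(\phi)R'(u+z_j)\delta^{-2}(z)$, bounded by $c|R'(u+z_j)|$ using $|\delta^{\pm2}|\le c$ (the $\delta$ singularity $(z-z_j)^{i\eta(z_j)}$ is bounded on any ray off $\R$, as noted in the remark after Proposition \ref{delta decomposition}). Third, the terms where $\bar\partial$ hits $\delta^{-2}(z)$ or the frozen power $(z-z_j)^{-2i\varepsilon_j\eta(z_j)}$: but $\delta(z)$ is analytic off $D_-$, so on $\Omega_{j,1}$ (which lies off the real axis except at $z_j$) $\bar\partial\delta = 0$, and likewise $(z-z_j)^{-2i\varepsilon_j\eta(z_j)}$ is analytic there; so this group vanishes entirely. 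Also the term where $\bar\partial$ hits $[1-\K(\phi)]$ pairs with the first group. Collecting the three contributions gives precisely $|\bar\partial E_{j,1}|\le c\bigl(|z-z_j|^{-1/2}+|R'(u+z_j)|\bigr)$.

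The main obstacle — and the place where the hypothesis $q(x,0)\in H^{n-1,1}$ (equivalently $R\in H^{1,1}$ on the relevant contour) is genuinely used — is the estimate $|f(z)-g(z)|\lesssim|z-z_j|^{1/2}$ in the first group, which is not a one-line bound: one must expand $\delta(z)$ near $z_j$ using the decomposition of Proposition \ref{delta decomposition}, peel off the explicit singular factor $(z-z_j)^{i\eta(z_j)}$ into $\delta_j$, and show the remaining factor is H\"older-$1/2$, while simultaneously writing $R(u+z_j)=R(z_j)+\bigl(R(u+z_j)-R(z_j)\bigr)$ with $|R(u+z_j)-R(z_j)|\le \|R'\|_{L^2}|u|^{1/2}\le c|z-z_j|^{1/2}$ by Cauchy--Schwarz. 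One then checks that the products recombine so that the leading $R(z_j)\delta_j^{-2}(z-z_j)^{-2i\varepsilon_j\eta(z_j)}$ terms cancel exactly against $g(z)$, leaving only $O(|z-z_j|^{1/2})$ remainders. The remaining cases $k=3,4,6$ use the corresponding factorization entries from $v^{[1]}_\theta$; the factor $1/(1-|R|^2)$ appearing in $E_{j,3},E_{j,4}$ is bounded and $C^1$ since $|R|<1$ with $\log(1-|R|^2)\in H^{1,0}$, so it causes no new difficulty, and one obtains the same bound uniformly in $j$ and $k$ (with $c$ depending on $\|R\|_{H^{1,1}}$ and on the fixed aperture $\alpha$ and the minimal saddle separation).
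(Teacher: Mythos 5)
Your proof is correct and takes essentially the same route as the paper: polar-coordinate decomposition of $\dbar$, splitting $\dbar E_{j,1}$ into the $\K'(\phi)/\rho$ term and the $R'$ term, controlling the former via the H\"older-$\tfrac12$ estimate on $\delta$ from Proposition \ref{delta decomposition} combined with Cauchy--Schwarz on $R$, and invoking analyticity of $\delta^{\pm 2}$ and the frozen power $(z-z_j)^{-2i\varepsilon_j\eta(z_j)}$ off $\R$ to annihilate the remaining derivative terms. The only cosmetic difference is that you record the factor $\tfrac12$ from $\dbar R(u+z_j)=\tfrac12 R'(u+z_j)$, which the paper absorbs into the generic constant.
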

\begin{proof}
In the polar coordinates, $\dbar=\frac{e^{i\phi}}{2}(\partial_\rho+i\rho^{-1}\partial_{\phi})$. For $z$ in any ray starting from $z_j$ and off the real line, we have
\begin{align*}
\dbar E_{j,1}(z)&=\frac{ie^{i\phi}\K'(\phi)}{2\rho}[R(u+z_j)\delta^{-2}(z)-R(z_j)\delta^{-2}_j(z-z_j)^{-2i\eta(z_j)}]\\
&+\K(\phi)R'(u+z_j)\delta^{-2}(z).
\end{align*}

From Proposition \ref{delta decomposition}, we know $|\delta(z)-\delta_j(z-z_j)^{i\eta(z_j)}|\leq c|z-z_j|^{1/2}$. Also since
\begin{align*}
    \delta(z)^{-1}=e^{-C_{D_-}(\log{(1-|R|^2)})},
\end{align*}
it is evident that $\delta(z)^{-1}$ is bounded. Therefore
\begin{align*}
    |\delta^{-2}(z)-\delta^{-2}_j(z-z_j)^{-2i\eta(z_j)}|\leq c|z-z_j|^{1/2}.
\end{align*}
And we have\footnote{In the middle steps, $c$ means a generic positive constant.} 
\begin{align*}
   | R(u+z_j)&\delta^{-2}(z)-R(z_j)\delta^{-2}_j(z-z_j)^{-2i\eta(z_j)}|\\
   &\leq |R(u+z_j)-R(z_j)||\delta^{-2}(z)|\\
   &+|\delta^{-2}(z)-\delta^{-2}_j(z-z_j)^{-2i\eta(z_j)}||R(z_j)|\\
   &\leq c|\int_{z_j}^{u+z_j}R'(s)ds|+c|z-z_j|^{1/2}\\
   & \text{by Cauchy-Schwartz inequality}\\
   & \leq c\|R'\|_{L^2}|z-z_j|^{1/2}+c|z-z_j|^{1/2}\\
   & \leq c|z-z_j|^{1/2}.
\end{align*}
Therefore
\begin{equation}
\begin{split}
 	|\dbar E_{j,1}(z)|&\leq c\rho^{-1}|z-z_j|^{1/2}+c|R'(u+z_j)|\\
 	&\leq c(|z-z_j|^{-1/2}+|R'(u+z_j)|).
 	\end{split}
 \end{equation} 
 Here we have use the fact that $u\leq \rho$, which implies $|z-z_j|^{1/2}/\rho=u^{1/2}/\rho\leq u^{-1/2}$.
Noting also that $\sup |R|<1$, we have $\frac{R}{1-|R|^2}\leq \frac{R}{1-\sup |R|}$, and thus all the estimates for $E_{j,1}$ can be smoothly moved to $E_{j,k},k=3,4,6$.
\end{proof}

\section{Separate contributions and phase reduction}
\label{section: Separate Contributions and phase reduction}
 The RHP and the mixed $\dbar$-RHP we have discussed above are global. In this section, we shall approximate the global RHP by performing two steps: (1) separate contributions from each stationary phase point, (2) phase reduction. Before that, let us first consider two saddle points $z_j,z_{j+1}$, and discuss $\varepsilon_j=1=-\varepsilon_{j+1}$ for example.  We will first remove the vertical segments, see Fig.\eqref{sigma j 1/2}: $$\Sigma_{j+\frac{1}{2}}:=\Omega_{j,1}\cap \Omega_{j+1,3} \cup \Omega_{j,6}\cap \Omega_{j+1,4}\backslash\R,$$
 where  $\Omega_{j,\cdot}$'s are defined in \ref{ definition of the regions}.
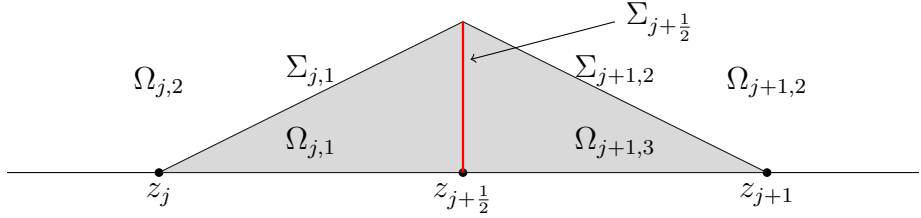
\begin{figure}[h]
\centering
\begin{tikzpicture}
\draw (-6,0)--(6,0);
\draw [fill] (-4,0) circle [radius=0.05] node [below] {$z_j$};
\draw [fill] (4,0) circle [radius=0.05] node [below] {$z_{j+1}$};
\draw [fill] (0,0) circle [radius=0.05] node [below] {$z_{j+\frac{1}{2}}$};
\draw (-4,0)--(0,2)--(4,0);
\path [fill=gray,opacity=0.3] (-4,0)--(0,2)--(4,0)--(-4,0);
\draw [red,line width=0.3 mm] (0,2)--(0,0);
\draw [->] (2,2)--(0.1,1.5) ;
\node [right] at (2,2) {$\Sigma_{j+\frac{1}{2}}$};
\node at (-2,0.4) {$\Omega_{j,1}$};
\node at (2,0.4) {$\Omega_{j+1,3}$};
\node at (-4,1.2) {$\Omega_{j,2}$};
\node at (4,1.2) {$\Omega_{j+1,2}$};
\node [above] at (-2,1){$\Sigma_{j,1}$};
\node [above] at (2,1){$\Sigma_{j+1,2}$};
\end{tikzpicture}
\caption{Jumps in a small triangular region.}
\label{sigma j 1/2}
\end{figure}

Recall the constructions of $E_{j,1}$ and $E_{j+1,3}$ (see \eqref{definitions of E j k}), 
the boundary value of $m^{[2]}(z)$ on $\Sigma_{j+\frac{1}{2}}$ from $\Omega_{j,1}$ is 
\begin{equation*}
m^{[1]}(z_{j+1/2}+iv)O_{j,1}(z_{j+1/2}+iv),
\end{equation*}
while from $\Omega_{j+1,3}$ it is
\[m^{[1]}(z_{j+1/2}+iv)O_{j+1,3}(z_{j+1/2}+iv).\]
Both correspond to locally increasing parts of the phase function, and thus correspond to an upper/lower factorization. So the jump on the new contour $\Sigma_{j+1/2}$ is $O_{j+1,3}O^{-1}_{j,1}(z)$, $z=z_{j+\frac{1}{2}}+iv$, where the nontrivial entry is (regarding the property of $\mathcal{K}$ and definitions of those matrix $O_{j,k}$, see \eqref{prop of K} and \eqref{def of open lense} ):
\begin{align*}
  (1-\K(\phi))&[R(z_j)\delta_j^{-2}(z_{j+1/2}-z_j+iv)^{-2i\eta(z_j)}\\
  &-R(z_{j+1})\delta_{j+1}^{-2}(z_{j+1/2}-z_{j+1}+iv)^{-2i\eta(z_{j+1})}]e^{2it\theta(z_{j+1/2}+iv)},
\end{align*}
	
with $v\in (0,(z_{j+1/2}-z_j)\tan(\alpha))$ and $\phi=\arg{(z-z_j)}$. 

Note that 
\begin{align*}
|(z_{j+1/2}-z_j+iv)^{-2i\eta(z_j)}|&=e^{2\eta(z_j)\phi}\leq e^{2\eta(z_j)\alpha}.
\end{align*}
and
\[
	|e^{2it\theta(z_{j+1/2}+iv)}|\leq ce^{-2tdv},\quad d=(z_{j+1}-z_{j})/2.
\]
Thus we have, for any $z\in \Sigma_{j+\frac{1}{2}}$,
\begin{align*}
O_{j+1,3}O^{-1}_{j,1}-I=
\mathcal{O}(e^{-ct}),\quad t\rightarrow \infty,
\end{align*}
where $c$ is some generic positive constant. Since the jump is close to $I$, by a small norm theory, the solution will also be close to $I$. In fact, we have the following estimate for the potential
\begin{align*}
    |\lim_{z\rightarrow\infty}z&(m^{[2]}|_{\Sigma_{j+\frac{1}{2}}}-I)|\\
    & \leq \frac{1}{2\pi}\int_0^{d\tan{(\alpha)}}\left|m^{[2]}_-(z_{j+1/2}+is)\right|\left|O_{j+1,3}O_{j,1}^{-1}(z_{j+1/2}+is)-I\right|\text{ds}\\
    &\leq \frac{1}{2\pi}\int_0^{d\tan{(\alpha)}}\left|m^{[2]}_-(z_{j+1/2}+is)\right|e^{-2tsd}\text{ds}\\
    &\leq \frac{1}{2\pi}\|m^{[1]}|_{\Sigma_{j+\frac{1}{2} ,3}}\|_\infty\|O_{j+1,3}\|_\infty\int_{0}^{d\tan{(\alpha)}}e^{-2tsd}\text{ds}\\
    &=\OO(t^{-1}),
\end{align*}
where we assume $m^{[1]}$, as a solution to the conjugated RHP, exists\footnote{The existence and uniqueness will be discussed later.}. So it is analytic in a neighborhood of $\Sigma_{j+\frac{1}{2}}$ and hence it is bounded on $\Sigma_{j+\frac{1}{2}}$. By the definition (see \eqref{definitions of E j k}) of $O_{j+1,3}$, it is continuous in $\Sigma_{j+\frac{1}{2}}$ and does not blow up at the endpoints of $\Sigma_{j+\frac{1}{2}}$. So $\|O_{j+1,3}\|_\infty$ is also finite\footnote{Here the $L^\infty(\Sigma)$ norm $\|f(z)\|_{\infty}$ 
means $\sup_{z\in\Sigma}|f(z)|$, where $|f(z)|=\max_{i,j=1,2,z\in \Sigma}|f_{i,j}(z)|$.}. Therefore, we can remove all those vertical segments by paying a price of error $\OO(t^{-1})$, which will be dominated by the error generated by the $\dbar$-problem (it is $\OO(t^{-3/4})$, we will show it in a moment.) Let us denote the new RHP by $\tilde{m}^{[2]}$. To make it clear, we note that the jumps for $\tilde{m}^{[2]}$ are
\begin{align*}
    \tilde{v}^{[2]}(z)=\begin{cases}
    v^{[2]}(z),\quad z\in \cup_{j=1,..,l,k=1,2,3,4}\Sigma_{j,k},\\
    I,\quad z\in \cup_{j=1,..,l}\Sigma_{j+\frac{1}{2}}\cup\R.
    \end{cases}
\end{align*}

Next, we will show that the RHP for $\tilde{m}^{[2]}$ can be localized to each saddle point. For example, near $z_j$, along the segment  $\Sigma_{j,1}:z=z_j+u+iv, \arg{z}=\alpha$, we have 
\begin{align*}
    |E_{j,1}e^{2it\theta}|\leq ce^{-2t\tan(\alpha)u^2}
\end{align*}

    It is well-known \cite{DZ93,do2009nonlinear} that the $|E_{j,1}e^{2it\theta}|\leq ce^{-2t\tan(\alpha)u^2}$, where let $u\geq u_0>0$, and then the jump matrix will go to $I$ with decaying rate at $\mathcal{O}(e^{-ct}),c>0$, as $t\rightarrow\infty$. The RHP is localized in the small neighborhoods of those stationary phase points. Note  that near each $z_j$, we have
    \begin{align*}
    \theta(z)=\theta(z_0)+\frac{\theta''(z_0)(z-z_0)^2}{2}+\OO{(|z-z_j|^3)}.
    \end{align*}
    By a similar argument of Lemma 3.35 in \cite{deift_its_zhou_1993} or subsection 8.2 
    in \cite{do2009nonlinear} for the phase reduction, the error generated by reducing the phase function $\theta$ to $\theta(z_0)+\frac{\theta''(z_0)(z-z_0)^2}{2}$ will be bounded by $\OO{(t^{-1})}$. Both analysis of the  mentioned references  are based on the analysis of the so-called Beals-Coifman operator \cite{BC84}. Now we shall simply describe it here. For the sake of simplicity, we only consider the RHP on the contour $\Sigma_{j,1}$ (for more details, we direct the interested reader to \cite{DZ03}):
    \begin{problem} Looking for 2 by 2 matrix-valued function $\tilde{m}^{[2]}$ such that
\label{RHP m tilde}
    \begin{enumerate}[label={(\arabic*)}]
        \item $\tilde{m}(z)$ is analytic off $\Sigma_{j,1}$;
        \item $\tilde{m}_+=\tilde{m}_-v^{[2]},\quad z\in \Sigma_{j,1};$
        \item $\tilde{m}=I+\OO(z^{-1}),\quad z\rightarrow\infty$.
    \end{enumerate}
    \end{problem}
    Since $E_{j,1}|_{\Sigma_{j,1}}$ is analytic near $\Sigma_{j,1}$ for $z$ away from $z_j$, and enjoys a factorization\footnote{$(w^-,w^+)$ will be called the factorization data for the jump matrix.}:
    \begin{align*}
        (I-w^-)^{-1}(I+w^+),
    \end{align*}
    where 
\begin{align*}
    w^-&=I-(v^{[2]})^{-1}=(v^{[2]})-I,\\
    w^+&=0,
\end{align*}
and the superscribes $\pm$ indicate the analyticitiy in the left/right neighborhood of the the contour.
    
    Following the definition in \cite{BC84}, we define the Beals-Coifman operator, for any $f\in L^2(\Sigma_{j,1})$, as follows:
    \begin{align*}
        C_w(f)=C_+(fw^-)+C_-(fw^+),
    \end{align*}
    where $C$ means the usual Cauchy operator, i.e., 
    $$Cf(z)=\frac{1}{2\pi i}\int_{\Sigma_{j,1}}\frac{f(s)}{s-z}\text{ds},$$
    and $C_{\pm}$ means the non-tangential limits from left/right side.

    The following proposition, which plays a fundamental role in Deift-Zhou's method, is well-known.
    \begin{prop}[see also proposition 2.11 in \cite{DZ03}]
    If $\mu\in I+L^2$ solves the singular integral equation:
    \begin{align}
    \label{sigular equation}
        \mu=I+C_w(\mu).
    \end{align}
    Then the (unique) solution to the RHP for $\tilde{m}$ reads\footnote{Here $w=w^++w^-$.}
    \begin{align}
    \label{beals coifman solution}
        \tilde{m}=I+C(\mu w).
    \end{align}
    \end{prop}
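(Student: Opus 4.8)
The plan is first to check that the explicit formula $\tilde m(z):=I+C(\mu w)(z)$, $w=w^-+w^+$, defines a solution of RHP \ref{RHP m tilde}, and then to upgrade ``a solution'' to ``the solution'' by a Liouville argument. (In the present situation $w^+=0$, but the computation below works verbatim in general, which is why I keep it in the $(w^-,w^+)$ form of the proposition.)

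For analyticity and normalization: since $w^-=v^{[2]}-I$ is bounded on the arc $\Sigma_{j,1}$ and, by Proposition \ref{delta decomposition}, remains bounded at the endpoint $z_j$, while $\mu-I\in L^2(\Sigma_{j,1})$, the product $\mu w=w+(\mu-I)w$ lies in $L^1(\Sigma_{j,1})\cap L^2(\Sigma_{j,1})$. Hence $C(\mu w)$ is analytic off $\Sigma_{j,1}$ and, expanding $(s-z)^{-1}$ for large $z$, $\tilde m(z)=I+\OO(z^{-1})$ as $z\to\infty$; this gives conditions (1) and (3). For the jump (2) I would use the Plemelj relation $\tilde m_{\pm}=I+C_{\pm}(\mu w)$ together with $C_{+}-C_{-}=\mathrm{Id}$ on $L^2(\Sigma_{j,1})$ and the singular integral equation in the form $\mu-I=C_{+}(\mu w^{-})+C_{-}(\mu w^{+})$. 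Substituting $C_{+}(\mu w^{+})=C_{-}(\mu w^{+})+\mu w^{+}$ and $C_{-}(\mu w^{-})=C_{+}(\mu w^{-})-\mu w^{-}$ into $\tilde m_{\pm}$ collapses the Cauchy terms and yields
\[
\tilde m_{+}=\mu(I+w^{+}),\qquad \tilde m_{-}=\mu(I-w^{-}),
\]
so that $\tilde m_{-}^{-1}\tilde m_{+}=(I-w^{-})^{-1}(I+w^{+})=v^{[2]}$. (Invertibility of $\mu$ along $\Sigma_{j,1}$, needed to cancel $\mu^{-1}\mu$, follows once $\det\tilde m\equiv 1$ is established, since $I\pm w^{\mp}$ are unimodular.)

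For uniqueness: each of $I+w^{+}$ and $I-w^{-}$ is triangular with unit diagonal, so $\det v^{[2]}\equiv 1$; hence for any solution $\tilde m$ the scalar $\det\tilde m$ has no jump on $\Sigma_{j,1}$, tends to $1$ at $\infty$, and carries at worst a removable singularity at $z_j$, whence $\det\tilde m\equiv 1$ by Liouville and $\tilde m$ is everywhere invertible. If $\tilde m_1$ and $\tilde m_2$ both solve RHP \ref{RHP m tilde}, then $\tilde m_1\tilde m_2^{-1}$ has trivial jump on $\Sigma_{j,1}$, only removable singularities at $z_j$, and limit $I$ at $\infty$, so $\tilde m_1\equiv\tilde m_2$; combined with the previous step, $I+C(\mu w)$ is the unique solution.

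The one place that needs genuine care — and which I expect to be the main obstacle — is the local behavior at the stationary phase point $z_j$: one must confirm that the mild factor $(z-z_j)^{i\eta(z_j)}$ inherited from $\delta$ (Proposition \ref{delta decomposition}) is weak enough that $\mu w\in L^1\cap L^2$ near $z_j$, that $C_{\pm}$ act boundedly on $L^2(\Sigma_{j,1})$ (true, since $\Sigma_{j,1}$ is a smooth arc), and that the singularities of $\det\tilde m$ and of $\tilde m_1\tilde m_2^{-1}$ at $z_j$ are removable so that Liouville applies. All of these follow from $\eta(z_j)\in\R$, which makes $|(z-z_j)^{i\eta(z_j)}|=e^{-\eta(z_j)\arg(z-z_j)}$ bounded along the arc; the remaining ingredients — the $L^p$ bookkeeping and the Plemelj algebra — are routine.
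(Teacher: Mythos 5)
The paper does not prove this proposition; it is stated as well-known and attributed to \cite{DZ03}, Proposition~2.11. Your Plemelj computation yielding $\tilde m_{+}=\mu(I+w^{+})$, $\tilde m_{-}=\mu(I-w^{-})$, and the Liouville argument for uniqueness, are exactly the standard Beals--Coifman derivation and are correct. One small remark: you do not actually need invertibility of $\mu$ to establish the jump. Rather than writing $\tilde m_{-}^{-1}\tilde m_{+}=(I-w^{-})^{-1}(I+w^{+})$, simply verify $\tilde m_{-}v^{[2]}=\mu(I-w^{-})(I-w^{-})^{-1}(I+w^{+})=\mu(I+w^{+})=\tilde m_{+}$; invertibility of $I-w^{-}$ is automatic since it is unipotent triangular, and invertibility of $\mu$ (or $\tilde m_{-}$) then drops out of the argument entirely, removing the apparent circularity in your parenthetical. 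The endpoint considerations you flag are indeed the only delicate point, and your observation that $\eta(z_j)\in\R$ keeps the factor $(z-z_j)^{i\eta(z_j)}$ bounded along rays off $\R$ is the right reason the Liouville argument survives at $z_j$.
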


  Then follow the localization principle in \cite{DZ93,do2009nonlinear,varzugin96}, and the simple argument on the vertical segments, we arrive at a new RHP on the new contours: fixing $\rho_0>0$ small, define
  \begin{align*}
      \Sigma_{j,1}^o&:=\{z:z=z_j+\rho e^{i\alpha},\rho\in (0,\rho_0)\},\\
      \Sigma_{j,2}^o&:=\{z:z=z_j+\rho e^{i(\pi-\alpha)},\rho\in (0,\rho_0)\},\\
      \Sigma_{j,3}^o&:=\{z:z=z_j-\rho e^{i\alpha},\rho\in (0,\rho_0)\},\\
      \Sigma_{j,4}^o&:=\{z:z=z_j+\rho e^{i(\alpha-\pi)},\rho\in (0,\rho_0)\}.
  \end{align*}
  Then with the new contour (see Fig.\ref{truncated contours}) $\Sigma^o=\cup_{j=1,..,l,k=1,2,3,4}\Sigma_{j,k}^o$, the new RHP reads as follows:
  \begin{problem}
  Looking for a 2 by 2 matrix-valued function $\hat{m}^{[2]}$ such that
\label{RHP m hat}
  \begin{enumerate}[label={(\arabic*)}]
      \item $\hat{m}^{[2]}_+=\hat{m}^{[2]}_-\hat{v}^{[2]},\quad z\in \Sigma^o,$ with
  $
          \hat{v}^{[2]}=\tilde{v}^{[2]}|_{\Sigma^0};
$
      \item $\hat{m}^{[2]}=I+\OO(z^{-1}),\quad z\rightarrow\infty.$
  \end{enumerate}
  \end{problem}

\begin{figure}[!h]
\centering
\begin{tikzpicture}
\draw [dashed,color=gray,opacity=0.3,line width=0.35 mm] (-6,0)--(6,0);
\draw (-6,1/2)--(-2,-1.5);
\draw (-6,-1/2)--(-2,1.5);
\draw [dashed,color=gray,opacity=0.3,line width=0.35 mm](-2,-1.5)--(0,-2.5);
\draw [dashed,color=gray,opacity=0.3,line width=0.35 mm](-2,1.5)--(0,2.5);
\draw (6,-1/2)--(2,1.5);
\draw (6,1/2)--(2,-1.5);
\draw [dashed,color=gray,opacity=0.3,line width=0.35 mm](2,1.5)--(0,2.5);
\draw [dashed,color=gray,opacity=0.3,line width=0.35 mm](2,-1.5)--(0,-2.5);
\draw [dashed,color=gray,opacity=0.3,line width=0.35 mm](0,2.5)--(0,-2.5);
\node [below] at (-5,-0.2) {$z_j$} ;
\node [below] at (5,-0.2) {$z_{j+1}$};
\node [above] at (-3.4,0.8) {$\Sigma_{j,1}^o$};
\node [above] at (3.4,0.8) {$\Sigma_{j+1,2}^o$};
\node [below] at (-3.4,-0.8) {$\Sigma_{j,4}^o$};
\node [below] at (3.4,-0.9) {$\Sigma_{j+1,3}^o$};
\end{tikzpicture}
\caption{New contours, dashed line segments are those deleted parts.}
\label{truncated contours}
\end{figure}
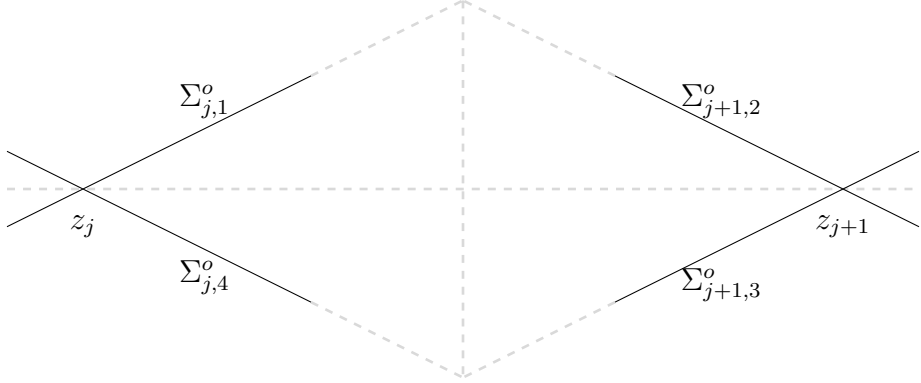

Moreover, since the potential of the mKdV hierarchy can be recovered by the formula \eqref{potential recover formula}, which can also be written as the Beals-Coifman solution:
\begin{align}
    q_{RHP}(x,t)=-\frac{1}{2\pi i}\int_{\Sigma}((I-C_w)^{-1}I)w(s)\text{ds}.
\end{align}
Then, by localization, we have
\begin{align}
    \int_{\Sigma}((I-C_w)^{-1}I)w(s)\text{ds}=\int_{\Sigma^o}((I-C_w)^{-1}I)w(s)\text{ds}+\OO(t^{-1}), \quad t\rightarrow\infty,
\end{align}
where $\Sigma$ is the the contour before localization and $w$ can be easily defined in each cross since the jumps are all triangle matrices and all entries in the diagonal are one. Let us denote 
\begin{align}
    \label{potential from RHP}
    q^o_{RHP}(x,t)=-\frac{1}{2\pi i}\int_{\Sigma^o}((I-C_w)^{-1}I)w(s)\text{ds}.
\end{align}
Then from the localization principal, we have
\begin{align}
\label{q rhp to q o rhp}
    q_{RHP}(x,t)=q^o_{RHP}(x,t)+\OO(t^{-1}),\quad t\rightarrow\infty.
\end{align}
Moreover, we define the RHP ($m^{[3]}$) which corresponds to the local Beals-Coifman solution (i.e. $q^o_{RHP}$) as follows:
\begin{problem}
\label{RHP m 3}
Looking for a 2 by 2 matrix-valued function $m^{[3]}$ such that
\begin{enumerate}[label={(\arabic*)}]
\item $m^{[3]}_+=m^{[3]}_-v^{[3]}(z),\quad z\in \Sigma^o$, with jump matrix reads
$
    v^{[3]}=\hat{v}^{[2]}\upharpoonright_{\Sigma^o};
$
 \item $m^{[3]}=I+\OO{(z^{-1})},\quad z\rightarrow\infty$.
\end{enumerate}
\end{problem}
However, the integral $\int_{\Sigma^o}((I-C_w)^{-1}I)w(s)\text{ds}$ is still hard to compute, and 
following the Deift-Zhou method, we need to separate the contributions from each stationary phase point. Thus, we need the following important lemma. 

\begin{lemma}[see equation (3.64) or proposition 3.66 in \cite{DZ93}]
\label{seperation lemma}
As $t\rightarrow\infty$,
\begin{align}
\int_{\Sigma^o}(\left(1-C_w\right)^{-1}I)w=\sum_{j=1}^{l}\int_{\Sigma^o_j}((1-C_{w_j})^{-1}I)w_j+\mathcal{O}(t^{-1}),
\end{align}
where $w_j$ is the factorization data supported on $\Sigma^o_j=\cup_{k=1}^4\Sigma^o_{j,k}$, $w=\sum_{j=1}^lw_j$ and $\Sigma^o=\cup_{j}^l\Sigma^o_j$.
\end{lemma}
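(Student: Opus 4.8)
The plan is to exploit the geometric separation of the truncated crosses $\Sigma_j^o$ together with the exponential decay of the off-diagonal entries of $w$ away from each stationary phase point, in the spirit of \cite{DZ93}, Proposition 3.66. First I would write $w = \sum_{j=1}^l w_j$, where $w_j$ is supported on $\Sigma_j^o$; this decomposition is exact since the truncated contours are disjoint (they are contained in disjoint balls $B(z_j,\rho_0)$ once $\rho_0$ is chosen smaller than one third of the minimal gap between saddle points). Correspondingly $C_w = \sum_j C_{w_j}$ as operators on $L^2(\Sigma^o)$, where each $C_{w_j}$ annihilates functions supported off $\Sigma_j^o$ and maps into functions ``living'' near $\Sigma_j^o$. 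The key algebraic identity is the resolvent expansion
\begin{align*}
(1-C_w)^{-1} = 1 + \sum_{j=1}^l (1-C_{w_j})^{-1}C_{w_j} + \text{cross terms},
\end{align*}
where the cross terms are sums of products $C_{w_{j_1}}(1-C_{w_{j_1}})^{-1}\cdots C_{w_{j_k}}(1-C_{w_{j_k}})^{-1}$ with at least two distinct indices appearing consecutively.

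Next I would estimate the cross terms. The essential point is that if $f$ is supported on $\Sigma_{j_1}^o$ then $C_{j_1}f = C_+(f w_{j_1})$ is, when restricted to another cross $\Sigma_{j_2}^o$, given by a Cauchy kernel $\frac{1}{s-z}$ with $s\in\Sigma_{j_1}^o$, $z\in\Sigma_{j_2}^o$, hence $|s-z|\geq c|z_{j_1}-z_{j_2}|>0$ is bounded below. Thus the operator $C_{w_{j_2}}\upharpoonright C_{w_{j_1}}$, going from $L^2(\Sigma_{j_1}^o)$ to $L^2(\Sigma_{j_2}^o)$, has a smooth bounded kernel and its $L^2\to L^2$ norm is controlled by $\|w_{j_1}\|_{L^1}\|w_{j_2}\|_{L^\infty}$ or similar. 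On each $\Sigma_{j,k}^o$ the factorization datum $w_{j,k}$ contains the factor $E_{j,k}e^{\pm 2it\theta}$, and by the estimates recalled just before Problem \ref{RHP m tilde} (namely $|E_{j,1}e^{2it\theta}|\leq c\,e^{-2t\tan(\alpha)u^2}$ with $u=\Re(z-z_j)$, and the analogous bounds for $k=2,3,4$), we get $\|w_j\|_{L^1(\Sigma_j^o)} = \OO(t^{-1/2})$ and $\|w_j\|_{L^2(\Sigma_j^o)}=\OO(t^{-1/4})$. Therefore each cross term carrying two distinct indices is $\OO(t^{-1})$, and since there are finitely many saddle points ($l$ fixed), the total contribution of all cross terms is $\OO(t^{-1})$. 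Feeding this back, and using that $(1-C_{w_j})^{-1}$ exists with uniformly bounded norm for large $t$ (by the same small-norm argument, since $\|C_{w_j}\|\leq c\|w_j\|_{L^\infty}\to 0$), one obtains
\begin{align*}
\int_{\Sigma^o}((1-C_w)^{-1}I)\,w
= \sum_{j=1}^l \int_{\Sigma_j^o}((1-C_w)^{-1}I)\,w_j + \OO(t^{-1}),
\end{align*}
and then replacing $(1-C_w)^{-1}$ by $(1-C_{w_j})^{-1}$ inside the $j$-th integral costs only another $\OO(t^{-1})$ by the same cross-term bound, since the difference $(1-C_w)^{-1}-(1-C_{w_j})^{-1}$ applied to $I$ and paired against $w_j$ again produces products mixing indices.

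The main obstacle I anticipate is the bookkeeping of the cross terms: one must argue that \emph{every} term in the resolvent expansion other than the ``diagonal'' ones $\sum_j(1-C_{w_j})^{-1}C_{w_j}$ contains at least one ``long-range'' Cauchy operator $C_{w_{j_2}}\circ C_{w_{j_1}}$ with $j_1\neq j_2$, and that inserting one such factor produces a gain of a full power of $t$ (coming from one factor of $\|w\|_{L^1}=\OO(t^{-1/2})$ together with the boundedness, not decay, that one would otherwise only get) while the remaining factors stay uniformly bounded. This is precisely the content of equation (3.64)--(3.66) in \cite{DZ93}, and rather than reproving it in full I would state the decomposition, verify the two-point kernel bound $|s-z|^{-1}\leq c$ for $s,z$ in distinct crosses, quote the $L^1$ and $L^2$ decay of $w_j$ established above, and invoke the Deift--Zhou argument to assemble the $\OO(t^{-1})$ estimate. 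A secondary technical point is to confirm that the Cauchy operators $C_\pm$ are bounded on $L^2$ of the crossed contour $\Sigma^o$ uniformly (a standard fact, since $\Sigma^o$ is a finite union of Lipschitz rays meeting at finitely many points), so that all the operator-norm manipulations above are legitimate.
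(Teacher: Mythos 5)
Your proposal follows essentially the same route as the paper: both decompose $w=\sum_j w_j$ on the disjoint crosses, use the Gaussian decay of the off-diagonal entries to get $\|w_j\|_{L^1}=\OO(t^{-1/2})$ and $\|w_j\|_{L^2}=\OO(t^{-1/4})$, estimate the long-range operators $C_{w_j}C_{w_k}$ ($j\neq k$) via the boundedness of the Cauchy kernel between separated crosses, and then isolate the diagonal contributions through a resolvent identity and the restriction lemma, invoking \cite{DZ93} for the details. The only real difference is one of precision: the paper quotes Varzugin's explicit algebraic identity $(1-C_w)(1+\sum_j C_{w_j}(1-C_{w_j})^{-1})=1-\sum_{j\neq k}C_{w_j}C_{w_k}(1-C_{w_k})^{-1}$ and then bounds the remainder by $\|A\|_{L^2}\|B\|_{L^2}\|D\|_{L^\infty\to L^2}\|w\|_{L^2}=\OO(t^{-3/4}\cdot t^{-1/4})=\OO(t^{-1})$, whereas you describe the expansion more loosely and defer the bookkeeping to \cite{DZ93}; also, your first displayed decomposition actually holds \emph{exactly} (not up to $\OO(t^{-1})$) since $w$ is supported on the disjoint union $\cup_j\Sigma^o_j$, with the genuine error appearing only when $(1-C_w)^{-1}$ is replaced by $(1-C_{w_j})^{-1}$.
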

\begin{proof}
First, recall the following observation by Varzugin \cite{varzugin96},
\begin{align*}
(1-C_w)(1+\sum_jC_{w_j}(1-C_{w_j})^{-1})=1-\sum_{j\neq k}C_{w_j}C_{w_k}(1-C_{w_k})^{-1}.
\end{align*}
With the hints from this observation, we need to estimate the norms of $C_{w_j}C_{w_k}$ from $L^\infty$ to $L^2$ and from $L^2$ to $L^2$. Also from next section (with a small norm argument),
we know $(1-C_{w_j})^{-1}$ are uniformly bounded in $L^2$ sense. 
Now let us focus on the contour $\Sigma^o_{j,1}$, and $\varepsilon=1$. Then the nontrivial 
entry of the factorization data is $E_{j,1}(z)e^{-2it\theta(z)},z\in \Sigma^o_{j,1}$, and thus we have
\begin{align*}
|w_{j}\upharpoonright_{\Sigma^o_{j,1}}|\leq ce^{-2t\tan(\alpha)u^2},
\end{align*} 
which implies that $\|w_{j}\upharpoonright_{\Sigma^o_{j,1}}\|_{L^1}=\mathcal{O}(t^{-1/2})$ and $\|w_{j}\upharpoonright_{\Sigma^o_{j,1}}\|_{L^2}=\mathcal{O}(t^{-1/4})$. Then following exactly the same steps in the proof of \cite{DZ93}, Lemma 3.5, we have for $j\neq k$
\begin{align*}
\|C_{w_j}C_{w_k}\|_{L^2(\Sigma^o)}&=\OO(t^{-1/2}),\\
\|C_{w_j}C_{w_k}\|_{L^\infty\rightarrow L^2(\Sigma^o)}&=\OO(t^{-3/4}).
\end{align*}
Then use the resolvent identities and the Cauchy-Schwartz inequality, 
\begin{align*}
((1-C_w)^{-1}I)&=I+\sum_{j=1}^{l}C_{w_j}(1-C_{w_j})^{-1}I\\
&+[1+\sum_{j=1}^{l}C_{w_j}(1-C_{w_j})^{-1}][1-\sum_{j\neq k}C_{w_j}C_{w_k}(1-C_{w_k})^{-1}]^{-1}\\
&(\sum_{j\neq k}C_{w_j}C_{w_k}(1-C_{w_k})^{-1})I\\
&=I+\sum_{j=1}^{l}C_{w_j}(1-C_{w_j})^{-1}I+ABDI,
\end{align*}
where
\begin{align*}
    A&:=1+\sum_{j=1}^{l}C_{w_j}(1-C_{w_j})^{-1},\\
    B&:=[1-\sum_{j\neq k}C_{w_j}C_{w_k}(1-C_{w_k})^{-1}]^{-1},\\
    D&:=\sum_{j\neq k}C_{w_j}C_{w_k}(1-C_{w_k})^{-1},
\end{align*}
and thus
\begin{align*}
|\int_{\Sigma^o}ABDIw|&\leq \|A\|_{L^2}\|B\|_{L^2}\|D\|_{L^\infty\rightarrow L^2}\|w\|_{L^2}\\
&\leq c t^{-3/4}t^{-1/4}=\OO{(t^{-1})}.
\end{align*}

Then applying the restriction lemma (\cite{DZ93}, Lemma 2.56), we have
\begin{align*}
\int_{\Sigma^o}(I+C_{w_j}(1-C_{w_j})^{-1}I)w\upharpoonright_{\Sigma^o_j}
&=\int_{\Sigma^o_j}(I+C_{w_j}(1-C_{w_j})^{-1}I)w\\
&=\int_{\Sigma^o_j}((1-C_{w_j})^{-1}I)w_j.
\end{align*}
Therefore, the proof is done.
\end{proof}

% \begin{remark}
% There is a hole in the prove, which based on a priori estimate on $\|1-C_{w_j}\|_{L^2}$. It can be shown by a small norm argument that those bounds are uniformly and they can be approximated by an explicitly solvable model RHP.
% \end{remark}

% Next, we need to discuss the $\dbar$-problem near $\Sigma_{j+1/2}$. From Green's theorem, 
% \begin{align}
% \int_{\partial \Omega}f(z)dz=\int_{\Omega} 2i\dbar f(z)dA(z),
% \end{align}
% where $A(z)$ is the Lebesgue area measure. And we consider the orientations as indicated in 
% \begin{figure}[h]
% \centering
% \begin{tikzpicture}
% \path [draw=blue,line width=0.3 mm,postaction={on each segment={mid arrow=red}}]
% (-3,0)--(-0.05,0)--(-0.05,1.5)--(-3,0);
% \path [draw=green,line width=0.3 mm,postaction={on each segment={mid arrow=red}}]
% (0.05,0)--(3,0)--(0.05,1.5)--(0.05,0);
% \draw [line width=0.2 mm](-4,0)--(4,0);

% \end{tikzpicture}
% \end{figure}
\section{A model Riemann-Hilbert problem}
\label{section: model RHP}
In the previous section, we have reduce the global RHP to $l$ local RHPs near each stationary phase point due to Lemma \ref{seperation lemma}. In fact, near each stationary phase point, we need to compute the integral $\int_{\Sigma_j^o}((1-C_{w_j})^{-1}I)w_j$, which is equivalent to a local RHP. In this section, we will approximate the local RHPs by a model RHP which can be solved explicitly by solving a parabolic-cylinder equation. Consider the following RHP:
\begin{problem}
\label{model RHP}
Looking for a 2 by 2 matrix-valued function $P(\xi;R)$ such that
\begin{enumerate}[label={(\arabic*).}]
\item $P_+(\xi;R)=P_-(\xi;R)J(\xi),\xi\in \R$, where
\begin{align*}
J(\xi)=\begin{pmatrix}
1-|R|^2&-\bar{R}\\
R&1
\end{pmatrix}
\end{align*}
is a constant matrix with respect to $\xi$ and the constant $R$ satisfies $|R|<1$;
\item $P(\xi;R)=\xi^{i\eta \sigma_3}e^{-i\frac{\xi^2}{4}\sigma_3}(I+
P_1\xi^{-1}+
\OO(\xi^{-2})),\quad
\xi\rightarrow\infty$, where $P_1=\begin{pmatrix}
0&\beta\\
\bar{\beta}&0
\end{pmatrix}$.
\end{enumerate}
\end{problem}
Then by Liouville's argument, $P'P^{-1}$ is analytic and thus
\begin{align}
P'(\xi)=(-\frac{i\xi}{2}\sigma_3-\frac{i}{2}[\sigma_3,P_1])P(\xi),
\end{align}
which can be solved in terms of the parabolic-cylinder equation, and apply the asymptotics formulas we can eventually determine that
\begin{align}
\beta=\frac{\sqrt{2\pi}e^{i\pi/4}e^{-\pi\eta/2}}{R\Gamma(-a)},
\end{align}
where 
\begin{align}
a=i\eta,
\end{align}
with $\eta=-\frac{1}{2\pi}\log{(1-|R|^2)}$.

The above result has been presented in the literature\footnote{The first description of this model RHP was presented by A. R. Its \cite{its1981asymptotics}. Later examples of the model can be find in \cite{DZ93,DZ03,dieng2008longtime,do2009nonlinear,varzugin96, Ma2019_3mkdv, ma2020long}.  } in many ways. Here we follows the representations in \cite{DZ93}. Next, we will connect this model RHP to the original RHP. Recall, near stationary phase point $z_j$, we need to estimate integral $\int_{\Sigma_j^o}((1-C_{w_j})^{-1}I)(w_{j+}+w_{j-})$, which is equivalent to solve the following RHP ($m^{[3,j]},\quad j=1,\cdots ,l$):
\begin{enumerate}[label={(\arabic*)}]
\item $m^{[3,j]}_+(z)=m^{[3,j]}_-(z)v^{[3,j]}(z),z\in \Sigma_j^o$. The jump matrix reads
\begin{align}
v^{[3,j]}(z)=\begin{cases}
\TL{R_j^\sharp(z-z_j)^{-2i\eta(z_j)}e^{-it\theta''(z_j)(z-z_j)^2}},z\in\Sigma_{j,1}^o,\\
\TU{-\frac{\bar{R}_j^\sharp}{1-|R^\sharp_j|^2}(z-z_j)^{2i\eta(z_j)}e^{it\theta''(z_j)(z-z_j)^2}},z\in \Sigma_{j,2}^o,\\
\TL{\frac{R_j^\sharp}{1-|R^\sharp_j|^2}(z-z_j)^{2i\eta(z_j)}e^{-it\theta''(z_j)(z-z_j)^2}},z\in \Sigma_{j,3}^o,\\
\TU{-\bar{R}_j^\sharp(z-z_j)^{-2i\eta(z_j)}e^{it\theta''(z_j)(z-z_j)^2}},z\in\Sigma_{j,4}^o,
\end{cases}
\end{align}
where $R_j^\sharp=R_j\delta^{-2}_je^{-2it\theta(z_j)}$;
\item $m^{[3,j]}=I+\OO(z^{-1}),\quad z\rightarrow \infty.$
\end{enumerate}

Set $\xi=(2t\theta''(z_j))^{1/2}(z-z_j)$ and by closing lenses, we arrive at an equivalent RHP on the real line:
\begin{enumerate}[label={(\arabic*)}]
\item $m^{[4,j]}(\xi)_+=m^{[4]}_-v^{[4,j]}(\xi),\xi\in \Sigma_{j}^p$.  The new jump is
\begin{align}
v^{[4,j]}(\xi)=(2\theta''(z_j)t)^{-\frac{i\eta(z_j)}{2}\text{ ad }\sigma_3}\xi^{i\eta(z_j)\text{ ad }\sigma_3}e^{-\frac{i\xi^2}{4}\text{ ad }\sigma_3}\begin{pmatrix}
1-|R_j^\sharp|^2&-\bar{R}_j^\sharp\\
R_j^\sharp&1
\end{pmatrix};
\end{align}
\item $m^{[4,j]}=I+\OO(\xi^{-1}),\xi\rightarrow\infty.$
\end{enumerate}
Comparing with the model RHP, we observe that $m^{[4,j]}(\xi)(2\theta''(z_j)t)^{-\frac{i\eta(z_j)}{2}\sigma_3}\xi^{i\eta(z_j)\sigma_3}e^{-\frac{i\xi^2}{4}\sigma_3}$ solves the model RHP, which leads to
\begin{align}
m^{[4]}_{1,12}&=\frac{\sqrt{2\pi}e^{i\pi/4}e^{-\pi\eta(z_j)/2}}{R^\sharp_j\Gamma(-i\eta(z_j))},\\
m^{[4]}_{1,21}&=\frac{-\sqrt{2\pi}e^{-i\pi/4}e^{-\pi\eta(z_j)/2}}{\bar{R}^\sharp_j\Gamma(i\eta(z_j))}.
\end{align}
Changing the variable $\xi$ back to $z$, we have
\begin{align*}
m^{[3,j]}_{1,12}(t)&= (2t\theta''(z_j))^{-\frac{1}{2}-\frac{i\eta(z_j)}{2}}\frac{\sqrt{2\pi}e^{i\pi/4}e^{-\pi\eta(z_j)/2}}{R_j^\sharp \Gamma(-i\eta(z_j))},\\
m^{[3,j]}_{1,21}(t)&=- (2t\theta''(z_j))^{-\frac{1}{2}+\frac{i\eta(z_j)}{2}}\frac{\sqrt{2\pi}e^{-i\pi/4}e^{-\pi\eta(z_j)/2}}{\bar{R}_j^\sharp \Gamma(i\eta(z_j))}.
\end{align*}
Noting that $R^\sharp_j=R_j\delta^{-2}_je^{2it\theta(z_j)}$, one can rewrite in a neat way:
\begin{align*}
m^{[3,j]}_{1,12}(t)=\frac{|\eta(z_j)|^{1/2}}{\sqrt{2t\theta''(z_j)}}e^{i\varphi(t)},\\
m^{[3,j]}_{1,21}(t)=\frac{|\eta(z_j)|^{1/2}}{\sqrt{2t\theta''(z_j)}}e^{-i\varphi(t)},
\end{align*}
where the phase is
\begin{align*}
\varphi(t)=\frac{\pi}{4}-\arg\Gamma(-i\eta(z_j))-2t\theta(z_j)-\frac{\eta(z_j)}{2}\log|2t\theta''(z_j)|+2\arg(\delta_j)+\arg(R_j).
\end{align*}
Here we have used the fact that $|\beta|^2=\eta$. Denoting
\begin{align}
\label{q as}
q_{as}(x,t)=-2i\sum_{j=1}^{l}\frac{|\eta(z_j)|^{1/2}}{\sqrt{2t\theta''(z_j)}}e^{i\varphi(t)},
\end{align}
then the connection formula \eqref{potential from RHP} and Lemma \ref{seperation lemma} lead to 
\begin{equation}
\label{q o rhp to q as}
    q^o_{RHP}(x,t)=q_{as}(x,t)+\OO(t^{-1}),\quad t\rightarrow \infty.
\end{equation}

\section{Errors from the pure $\dbar$-problem}
\label{section: dbar error}
In this section, we will discuss the error generated from the pure $\dbar$-problem of $m^{[2]}$. Let us denote
\begin{align}
\label{dbar conjugation}
E(z)=m^{[2]}({m}^{[2]}_{RHP})^{-1},
\end{align}
where $m^{[2]}_{RHP}$ denotes the solution to the pure RHP part of $m^{[2]}$.
Assuming the existence (which we will be provided in the next section), and by the normalization condition, we have
\begin{align}
E(z)=1+(m_1^{[2]}-{m}^{[2]}_{RHP,1})z^{-1}+\OO(z^{-2}),\quad z\rightarrow\infty.
\end{align}
Due to the procedure of localization and separation of the contributions, we can approximate $m^{[2]}_{RHP}$ by $\hat{m}^{[2]}$, and the error of approximating the potential is of $\OO(t^{-1})$ as $t\rightarrow\infty$.
Thus, by the equation\eqref{potential recover formula},
\begin{align}
\label{q to q rhp}
    q(x,t)=q_{RHP}(x,t)+\OO(t^{-1})+\lim_{z\rightarrow\infty}z(E-I),\quad t\rightarrow\infty.
\end{align}
Moreover, from this construction (equation \eqref{dbar conjugation}), there is no jump on the contours $\Sigma_{j,k},k=1,2,3,4,$ but only a pure $\dbar$-problem is left due to the non-analyticity. The $\dbar$-problem reads
\begin{align}
\dbar E=EW,
\end{align}
where 
\begin{align}
W(z)=m^{[2]}_{RHP}\dbar O(z)(m^{[2]}_{RHP})^{-1}.
\end{align}
From the normalization condition of $m^{[2]}_{RHP}$, we see it is uniformly bounded by $\frac{c}{1-\sup{R}}$. And to estimate the errors of recovering the potential, one actually needs to estimate $\lim_{z\rightarrow\infty}z(E-I)$, where the limit can be chosen along any rays that are not parallel to $\R$. For simplicity, we will take the imaginary axis. The $\dbar$-problem is equivalent to the following Fredholm integral equation by a simple application of the generalized Cauchy integral formula:
\begin{align}
E(z)=I-\frac{1}{\pi}\int_{\C}\frac{E(s)W(s)}{s-z}dA(s).
\end{align}
In the following, we will show for each fixed $z\in \C$, $\K_W(E)(z):=\int_{\C}\frac{E(s)W(s)}{s-z}dA(s)$ is bounded and then by the dominated convergence theorem, we will show $\lim_{z\rightarrow\infty}z(E-I)=\OO(t^{-3/4})$. First of all, since $m^{[3]}$ is uniformly bounded, upon settng $z=z_j+u+iv$, we have
\begin{align}
\|W\|_{\infty}\lesssim \begin{cases}
|\dbar E_{j,k}|e^{-2t\theta''(z_j)uv},z\in \Omega_{j,k},k=1,4,\\
|\dbar E_{j,k}|e^{2t\theta''(z_j)uv},z\in \Omega_{j,k},k=3,6,\\
\end{cases},
\end{align}
where $0\leq a\lesssim b$ means there exists $C>0$ such that $a\leq Cb$. Then we have
\begin{align}
\K_W(E)\leq \|E\|_\infty\int_\C\frac{\|W(s)\|_\infty}{|s-z|}dA(s).
\end{align}
We claim the following lemma:
\begin{lemma}\label{estimates of W}
	Let $\Omega=\{s:s=\rho e^{i\phi}, \rho\geq 0,\phi\in [0,\pi/4]\}$, and $z\in \Omega$. Then
	\begin{align}
	\int_\Omega\frac{|u^2+v^2|^{-1/4}e^{-tuv}}{|u+iv-z|}dudv=\OO(t^{-1/4}).
	\end{align}
\end{lemma}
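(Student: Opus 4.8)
The plan is to estimate the double integral by splitting the region $\Omega$ according to the relative sizes of $u$ and $v$, since the Gaussian-type factor $e^{-tuv}$ is the only source of decay in $t$ and it is most effective when both coordinates are comparably large, while near the axes the singularity $(u^2+v^2)^{-1/4}$ and the Cauchy kernel $|u+iv-z|^{-1}$ must be controlled by integrability alone. First I would record the trivial but essential estimates $|u+iv-z|^{-1}\le |v-\Im z|^{-1}$ when integrating in $u$, and $|u+iv-z|^{-1}\le |u-\Re z|^{-1}$ when integrating in $v$, together with the observation that on $\Omega$ one has $\phi\in[0,\pi/4]$ so $u\ge v\ge 0$; hence $\rho\asymp u$ and $(u^2+v^2)^{-1/4}\lesssim u^{-1/2}$.

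The core of the argument is a rescaling. I would substitute $u\mapsto u/\sqrt t$, $v\mapsto v/\sqrt t$ (or, more carefully, only one of the two variables, depending on which kernel bound is used), which turns $e^{-tuv}$ into $e^{-uv}$ and produces an overall power of $t$ from the Jacobian and from the homogeneity $(-1/2)$ of the factor $u^{-1/2}$: the measure contributes $t^{-1}$, the factor $u^{-1/2}$ contributes $t^{1/4}$, and the Cauchy kernel — being homogeneous of degree $-1$ — contributes $t^{1/2}$, for a net $t^{-1/4}$, provided the resulting $t$-independent integral converges. So the real content is to show
\begin{align*}
\int_0^\infty\!\!\int_0^u \frac{u^{-1/2}e^{-uv}}{|u+iv-\tilde z|}\,dv\,du <\infty
\end{align*}
uniformly for $\tilde z$ in the rescaled copy of $\Omega$; here I would first integrate in $v$ over $[0,u]$, bounding $\int_0^u e^{-uv}\,dv\le \min(u,1/u)$ after absorbing the Cauchy kernel either by $|v-\Im\tilde z|^{-1}$ (which is dangerous near $v=\Im\tilde z$) or, cleaner, by noting $\int_0^u \frac{e^{-uv}}{|u+iv-\tilde z|}dv \lesssim \frac{1}{u}\log$-type bound away from the diagonal and $\lesssim 1$ near it, and then checking that $u^{-1/2}\cdot(\text{that bound})$ is integrable in $u$ near $0$ (where $u^{-1/2}$ is integrable) and near $\infty$ (where the factor $1/u$ or the exponential wins).

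The main obstacle I anticipate is the interaction of the Cauchy singularity at $s=z$ with the degenerate Gaussian near the axes: one cannot simply bound $|u+iv-z|^{-1}$ by a constant, and the naive choice of which variable to integrate first can blow up the logarithm. The clean way around this, which I would adopt, is to split $\Omega$ into the three standard pieces used in this literature — $\{|s-z|\le \tfrac12|z|\}$, $\{v\le u/2, |s-z|>\tfrac12|z|\}$, and the remaining sector near $\phi=\pi/4$ — handling the first by an $L^p$--$L^{p'}$ Hölder argument (the local singularity $|s-z|^{-1}$ is in $L^p_{\mathrm{loc}}$ for $p<2$, while $e^{-tuv}$ restricted to a small ball contributes the decay), and handling the other two by the rescaling above with the appropriate one-dimensional kernel bound. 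Assembling the three contributions, each of order $t^{-1/4}$ (or better), gives the claimed $\OO(t^{-1/4})$, and the same computation applies verbatim to the reflected sector and to the sign $e^{+tuv}$ after $v\mapsto -v$, so this single lemma covers all of $\dbar O$.
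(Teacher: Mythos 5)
Your proposal is correct in outline, but it organizes the argument differently from the paper. The paper does not rescale: it works directly with $t$ in the integrand and decomposes $\Omega$ by \emph{distance}, into $\Omega_1=\{|s|<d/3\}\cap\Omega$ (near the origin), $\Omega_2=\{|s-z|<d/3\}\cap\Omega$ (near $z$), and the complement, where $d=|z|$; on $\Omega_1$ the Cauchy kernel is bounded by $3/(2d)$ and the remaining one-variable integrals are evaluated explicitly (via the substitution $v=wu$) to give a $\Gamma(3/4)\,t^{-3/4}$ bound, on $\Omega_2$ the factor $|s|^{-1/2}$ is bounded and polar coordinates around $z$ give exponential decay, and on $\Omega_3$ both singular factors are bounded and the Gaussian alone gives $\OO(t^{-1})$; the genuine $t^{-1/4}$ only emerges in the boundary case $z=0$, again by explicit computation. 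Your route instead makes the homogeneity explicit up front: substituting $u\mapsto u/\sqrt t,\ v\mapsto v/\sqrt t$ peels off the exact factor $t^{1/4}\cdot t^{1/2}\cdot t^{-1}=t^{-1/4}$ and reduces the lemma to the uniform boundedness in $\tilde z$ of the $t$-independent integral $\int_\Omega(u^2+v^2)^{-1/4}e^{-uv}|s-\tilde z|^{-1}\,dA(s)$. That is a cleaner way to see \emph{why} the power is $-1/4$, and it cleanly unifies the $z=0$ and $z\neq0$ cases (in the rescaled picture, a fixed nonzero $z$ corresponds to $\tilde z\to\infty$, where the integral in fact tends to $0$, matching the paper's better-than-$t^{-1/4}$ bounds in those cases).

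What you should be careful about, and what I'd want to see spelled out before calling this a proof: the uniform bound on the rescaled integral must hold over all of $\Omega$, including $\tilde z$ on the real axis where the Gaussian provides no decay along the boundary ray, and including the collision of the two singularities as $\tilde z\to 0$. Your H\"older step on the ball $\{|s-\tilde z|\le\frac12|\tilde z|\}$ works (one needs $f=(u^2+v^2)^{-1/4}e^{-uv}\in L^q_{\mathrm{loc}}$ for some $q>2$, which holds for $q<4$, paired with $|s-\tilde z|^{-1}\in L^p$, $p<2$, on the bounded ball), but note this only gives a bound in terms of the $L^q$ norm of $f$ on the ball, which you still need to show is small when $|\tilde z|$ is large; and your $\min(u,1/u)$ bound on $\int_0^u e^{-uv}\,dv$ silently discards the Cauchy kernel, so the remark about the ``$\frac1u\log$-type bound near the diagonal'' needs to become an actual two-case estimate. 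None of these are gaps in the idea, but they are the places where the paper's more pedestrian direct computation has the advantage of leaving nothing implicit. Both routes land in the same place; yours is more conceptual, the paper's is more self-contained.
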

\begin{proof}
Since there are two singularities of the integrand at $z$ and $(0,0)$. In the first case, set $z\neq 0$, and let $d=dist(z,0)$. We split $\Omega$ into three parts: $\Omega_1\cup \Omega_2\cup \Omega_3$, where $\Omega_1=\{s:|s|<d/3\}\cap \Omega$ , $\Omega_2=\{s:|s-z|<d/3\}\cap \Omega$ and $\Omega_3=\Omega\backslash (\Omega_1\cup \Omega_2)$. In the region $\Omega_1$, $|s-z|\geq 2d/3$, and thus
\begin{equation}
\begin{split}
|\int_{\Omega_1}\frac{|u^2+v^2|^{-1/4}e^{-tuv}}{|u+iv-z|}dudv|&\leq \frac{3}{2d}\int_{0}^\infty\int_0^u\frac{e^{-tuv}}{(u^2+v^2)^{1/4}}dvdu\\
& \text{ substituted $v=wu$}\\
&\leq \frac{3}{2d}\int_0^\infty\int_0^1\frac{e^{-tu^2w}}{(1+w^2)^{1/4}}u^{1/2}dwdu\\
&\leq \frac{3}{2d}\int_0^\infty\int_0^1 e^{-tu^2w}u^{1/2}dwdu\\
&=\frac{3}{2d}\int_0^\infty\frac{1-e^{-tu^2}}{tu^{3/2}}du\\
&=\frac{3}{2d}\frac{1}{2}t^{-3/4}\int_0^\infty\frac{1-e^{-u}}{u^{5/4}}du\\
&=\frac{3}{d}\Gamma(3/4)t^{-3/4}.
\end{split}
\label{ref to omega 1}
\end{equation}
In the region $\Omega_2$, $|s|^{-1/2}\leq (2d/3)^{-1/2}$, we have
\begin{align*}
|\int_{\Omega_1}\frac{|u^2+v^2|^{-1/4}e^{-tuv}}{|u+iv-z|}dudv|&\leq \sqrt{\frac{3}{2d}}\int_{\Omega_2}\frac{e^{-tuv}}{((u-x)^2+(v-y)^2)^{1/2}}dvdu\\
&\leq \sqrt{\frac{3}{2d}}\int_{0}^{d/3}\int_0^{2\pi}e^{-t(x+\rho\cos(\theta))(y+\rho\sin(\theta))}d\theta d\rho\\
&\leq \frac{2\pi}{3}\sqrt{\frac{3d}{2}}e^{-txy}.
\end{align*}
While in the region $\Omega_3$, 
\begin{align*}
|\int_{\Omega_3}\frac{|u^2+v^2|^{-1/4}e^{-tuv}}{|u+iv-z|}dudv|&\leq \int_{0}^{\infty}\int_0^u e^{-tuv}dvdu=\OO(t^{-1}).
\end{align*}
Now consider $z=0$. We have
\begin{align*}
|\int_{\Omega}\frac{e^{-tuv}}{(u^2+v^2)^{3/4}}dA(u,v)|&=\int_{0}^\infty\int_0^u\frac{e^{-tuv}}{(u^2+v^2)^{3/4}}dvdu\\
&=\int_0^\infty\int_0^1\frac{e^{-tu^2w}}{(1+w^2)^{3/4}u^{1/2}}dwdu\\
&\leq \int_0^\infty\int_0^1\frac{e^{-tu^2w}}{u^{1/2}}dwdu\\
&=\int_0^\infty\frac{1-e^{-tu^2}}{tu^{5/2}}du\\
&=\int_0^\infty\frac{1-e^{-u}}{tt^{-5/4}u^{5/4}}t^{-1/2}\frac{1}{2}u^{-\frac{1}{2}}du\\
&=\frac{1}{2}t^{-1/4}\int_0^\infty\frac{1-e^{-u}}{u^{7/4}}du\\
&=\frac{3}{8}t^{-1/4}\Gamma(1/4).
\end{align*}
By assembling all together, the proof is done.
\end{proof}
\begin{remark}
The essential fact that makes the above true is the rapid decay of the exponential factor in the region. And the lemma also tells us that those mild singularities, which have rational order growth, can be absorbed by the exponential factor. Back to our situation, after some elementary transformations (translation and rotation), the estimation of $\int_\C\frac{\|W(s)\|_\infty}{|s-z|}dA(s)$ will eventually reduce to a similar situation discussed in the above lemma.
\end{remark}
Based on Lemma \ref{estimates of W}, we know that when $t$ is sufficiently large, $\|\K_W\|<1$ and thus the resolvent is uniformly bounded, and we obtain the following estimate by taking a standard Neumann series, for some sufficiently large $t_0$,
\begin{align}
\|E-I\|_\infty=\|\K_W(1-\K_W)^{-1} I\|_\infty\leq \frac{ct^{-1/4}}{1-ct^{-1/4}}\leq ct^{-1/4}, \quad t>t_0.
\end{align} 
Now since for each $z\in \Omega_{j,k}$, we have $|\dbar E_{j,k}(z)|\leq c(|z-z_j|^{-1/2}+|R'(u+z_j)|),$ and apply the dominated convergence theorem, we have
\begin{align*}
\lim_{z\rightarrow\infty}|z(E-I)|&\leq \frac{1}{\pi}\sum_{j=1}^{l}\sum_{k=1}^4\|E\|_{L^\infty}\int_{\Omega_{j,k}}\|W\|_\infty ds,
\end{align*}
and use the Lemma \ref{estimates of W} again, we will eventually have:
\begin{align}
\label{dbar error}
E_1=\lim_{z\rightarrow\infty}|z(E-I)|=\OO(t^{-3/4}).
\end{align}

\section{Asymptotics representation}
\label{section: asymptotics representations}
First, we summarize all the steps as following (see Fig.\ref{fig:steps}):
\begin{enumerate}[label={(\arabic*)}]
\item Initial RHP $m^{[0]}=m$, see RHP \ref{RHP m 0}.
\item Conjugate initial RHP to obtain $m^{[1]}=m^{[0]}\delta^{\sigma_3}$, see RHP \ref{RHP m 1}.
\item Open lenses to obtain a mixed $\dbar$-RHP \ref{dbar RHP}.
\item Approximate the RHP part $m^{[2]}_{RHP}$ of $m^{[2]}$ by removing $\Sigma_{j+\frac{1}{2}}$ (see RHP \ref{RHP m tilde}), localization (see RHP \ref{RHP m hat}), reducing the phase function and separating the contributions (see RHP \ref{RHP m 3}). The error term is $\OO(t^{-1})$. Note those exponential decaying errors are absorbed by $\OO(t^{-1})$.
\item Comparing $m^{[2]}$ and $m^{[2]}_{RHP}$ and computing the error by analysis a pure $\dbar$-problem. The error term is $\OO(t^{-3/4})$.
\end{enumerate}

\begin{figure}[h]
    \centering
\tikzset{every picture/.style={line width=0.75pt}} %set default line width to 0.75pt        

\begin{tikzpicture}[x=0.75pt,y=0.75pt,yscale=-1,xscale=1]
%uncomment if require: \path (0,480); %set diagram left start at 0, and has height of 480

%Straight Lines [id:da033751413755709114] 
\draw    (351,68.5) -- (351,95.5) ;
\draw [shift={(351,97.5)}, rotate = 270] [color={rgb, 255:red, 0; green, 0; blue, 0 }  ][line width=0.75]    (10.93,-3.29) .. controls (6.95,-1.4) and (3.31,-0.3) .. (0,0) .. controls (3.31,0.3) and (6.95,1.4) .. (10.93,3.29)   ;
%Straight Lines [id:da18304813914789442] 
\draw    (352,125.5) -- (352,159.5) ;
\draw [shift={(352,161.5)}, rotate = 270] [color={rgb, 255:red, 0; green, 0; blue, 0 }  ][line width=0.75]    (10.93,-3.29) .. controls (6.95,-1.4) and (3.31,-0.3) .. (0,0) .. controls (3.31,0.3) and (6.95,1.4) .. (10.93,3.29)   ;
%Straight Lines [id:da08124002810273079] 
\draw    (351,187.5) -- (351,219.5) ;
\draw [shift={(351,221.5)}, rotate = 270] [color={rgb, 255:red, 0; green, 0; blue, 0 }  ][line width=0.75]    (10.93,-3.29) .. controls (6.95,-1.4) and (3.31,-0.3) .. (0,0) .. controls (3.31,0.3) and (6.95,1.4) .. (10.93,3.29)   ;
%Straight Lines [id:da5217021068995016] 
\draw    (259,221.5) -- (485,222.5) ;
%Straight Lines [id:da8364406338125938] 
\draw    (259,221.5) -- (259.95,259.5) ;
\draw [shift={(260,261.5)}, rotate = 268.57] [color={rgb, 255:red, 0; green, 0; blue, 0 }  ][line width=0.75]    (10.93,-3.29) .. controls (6.95,-1.4) and (3.31,-0.3) .. (0,0) .. controls (3.31,0.3) and (6.95,1.4) .. (10.93,3.29)   ;
%Straight Lines [id:da7120737338973813] 
\draw    (485,222.5) -- (485,259.5) ;
\draw [shift={(485,261.5)}, rotate = 270] [color={rgb, 255:red, 0; green, 0; blue, 0 }  ][line width=0.75]    (10.93,-3.29) .. controls (6.95,-1.4) and (3.31,-0.3) .. (0,0) .. controls (3.31,0.3) and (6.95,1.4) .. (10.93,3.29)   ;
%Straight Lines [id:da9031479502675461] 
\draw    (485,292.5) -- (485.94,323.5) ;
\draw [shift={(486,325.5)}, rotate = 268.26] [color={rgb, 255:red, 0; green, 0; blue, 0 }  ][line width=0.75]    (10.93,-3.29) .. controls (6.95,-1.4) and (3.31,-0.3) .. (0,0) .. controls (3.31,0.3) and (6.95,1.4) .. (10.93,3.29)   ;
%Straight Lines [id:da8841466640758846] 
\draw    (487,354.5) -- (487.93,380.5) ;
\draw [shift={(488,382.5)}, rotate = 267.95] [color={rgb, 255:red, 0; green, 0; blue, 0 }  ][line width=0.75]    (10.93,-3.29) .. controls (6.95,-1.4) and (3.31,-0.3) .. (0,0) .. controls (3.31,0.3) and (6.95,1.4) .. (10.93,3.29)   ;
%Straight Lines [id:da1714904266864925] 
\draw    (488,404.5) -- (489.86,430.51) ;
\draw [shift={(490,432.5)}, rotate = 265.90999999999997] [color={rgb, 255:red, 0; green, 0; blue, 0 }  ][line width=0.75]    (10.93,-3.29) .. controls (6.95,-1.4) and (3.31,-0.3) .. (0,0) .. controls (3.31,0.3) and (6.95,1.4) .. (10.93,3.29)   ;
%Curve Lines [id:da33641820418844515] 
\draw    (472,449.5) .. controls (434.38,430.69) and (418.32,313.87) .. (456.82,282.43) ;
\draw [shift={(458,281.5)}, rotate = 503.13] [color={rgb, 255:red, 0; green, 0; blue, 0 }  ][line width=0.75]    (10.93,-3.29) .. controls (6.95,-1.4) and (3.31,-0.3) .. (0,0) .. controls (3.31,0.3) and (6.95,1.4) .. (10.93,3.29)   ;
%Curve Lines [id:da14429246801857087] 
\draw    (263,289.5) .. controls (302,323.5) and (420,305.5) .. (460,275.5) ;
%Straight Lines [id:da5788083203930596] 
\draw    (360,306.5) -- (277.91,331.91) ;
\draw [shift={(276,332.5)}, rotate = 342.8] [color={rgb, 255:red, 0; green, 0; blue, 0 }  ][line width=0.75]    (10.93,-3.29) .. controls (6.95,-1.4) and (3.31,-0.3) .. (0,0) .. controls (3.31,0.3) and (6.95,1.4) .. (10.93,3.29)   ;
%Curve Lines [id:da9826744196105175] 
\draw    (595,234.5) .. controls (557.57,251.25) and (525.96,253.44) .. (525.02,289.81) ;
\draw [shift={(525,291.5)}, rotate = 270] [color={rgb, 255:red, 0; green, 0; blue, 0 }  ][line width=0.75]    (10.93,-3.29) .. controls (6.95,-1.4) and (3.31,-0.3) .. (0,0) .. controls (3.31,0.3) and (6.95,1.4) .. (10.93,3.29)   ;
%Curve Lines [id:da14112929848257583] 
\draw    (595,234.5) .. controls (652.13,331.03) and (569.55,326.66) .. (541.25,353.26) ;
\draw [shift={(540,354.5)}, rotate = 313.96000000000004] [color={rgb, 255:red, 0; green, 0; blue, 0 }  ][line width=0.75]    (10.93,-3.29) .. controls (6.95,-1.4) and (3.31,-0.3) .. (0,0) .. controls (3.31,0.3) and (6.95,1.4) .. (10.93,3.29)   ;
%Curve Lines [id:da9949936335253704] 
\draw    (595,234.5) .. controls (688.53,270.32) and (635.54,350.69) .. (586.74,402.72) ;
\draw [shift={(586,403.5)}, rotate = 313.3] [color={rgb, 255:red, 0; green, 0; blue, 0 }  ][line width=0.75]    (10.93,-3.29) .. controls (6.95,-1.4) and (3.31,-0.3) .. (0,0) .. controls (3.31,0.3) and (6.95,1.4) .. (10.93,3.29)   ;

% Text Node
\draw (350.5,50.25) node   [align=left] {\begin{minipage}[lt]{55.08pt}\setlength\topsep{0pt}
$ $$\displaystyle m^{[ 0]} =m$
\end{minipage}};
% Text Node
\draw (366,110) node   [align=left] {\begin{minipage}[lt]{88pt}\setlength\topsep{0pt}
$\displaystyle m^{[1]} =m^{[ 0]} \delta ^{\sigma _{3}}$
\end{minipage}};
% Text Node
\draw (353,71.5) node [anchor=north west][inner sep=0.75pt]   [align=left] {Conjugation};
% Text Node
\draw (367,180) node   [align=left] {\begin{minipage}[lt]{88pt}\setlength\topsep{0pt}
$\displaystyle m^{[ 2]} =m^{[ 1]} O( z)$
\end{minipage}};
% Text Node
\draw (354,128.5) node [anchor=north west][inner sep=0.75pt]   [align=left] {Open lenses};
% Text Node
\draw (353,190.5) node [anchor=north west][inner sep=0.75pt]   [align=left] {Mixed $\displaystyle \overline{\partial }$-RHP};
% Text Node
\draw (224,263) node [anchor=north west][inner sep=0.75pt]   [align=left] {$\displaystyle \overline{\partial }$-Problem};
% Text Node
\draw (489,279.75) node   [align=left] {\begin{minipage}[lt]{35.36pt}\setlength\topsep{0pt}
$\displaystyle m_{RHP}^{[ 2]}$
\end{minipage}};
% Text Node
\draw (545.5,308.75) node   [align=left] {\begin{minipage}[lt]{82.28pt}\setlength\topsep{0pt}
Remove $\displaystyle \Sigma _{j+1/2}$
\end{minipage}};
% Text Node
\draw (492.5,339.75) node   [align=left] {\begin{minipage}[lt]{27.88pt}\setlength\topsep{0pt}
$\displaystyle \tilde{m}^{[ 2]}$
\end{minipage}};
% Text Node
\draw (499,393.13) node   [align=left] {\begin{minipage}[lt]{34pt}\setlength\topsep{0pt}
$\displaystyle \hat{m}^{[ 2]}$
\end{minipage}};
% Text Node
\draw (531,368.25) node   [align=left] {\begin{minipage}[lt]{58.48pt}\setlength\topsep{0pt}
Localization
\end{minipage}};
% Text Node
\draw (500.5,446.25) node   [align=left] {\begin{minipage}[lt]{31.96pt}\setlength\topsep{0pt}
$\displaystyle m^{[ 3]}$
\end{minipage}};
% Text Node
\draw (496,407) node [anchor=north west][inner sep=0.75pt]   [align=left] {{\scriptsize Reduce phase \& separate contributions}};
% Text Node
\draw (426.5,435.75) node [font=\small]   [align=left] {\begin{minipage}[lt]{49.64pt}\setlength\topsep{0pt}
Existence
\end{minipage}};
% Text Node
\draw (400.5,338.25) node  [font=\small] [align=left] {\begin{minipage}[lt]{60.52pt}\setlength\topsep{0pt}
Small norm\\(for large $\displaystyle t$)
\end{minipage}};
% Text Node
\draw (264,363) node   [align=left] {\begin{minipage}[lt]{128pt}\setlength\topsep{0pt}
$\displaystyle E=m^{[ 2]}\left( m_{RHP}^{[ 2]}\right)^{-1}$
\end{minipage}};
% Text Node
\draw (279,304) node [anchor=north west][inner sep=0.75pt]   [align=left] {\textcolor[rgb]{0.94,0.05,0.05}{Error}};
% Text Node
\draw (583,214) node [anchor=north west][inner sep=0.75pt]   [align=left] {\textcolor[rgb]{0.94,0.05,0.05}{Error}};
% Text Node
\draw (570.5,223.25) node  [font=\footnotesize] [align=left] {\begin{minipage}[lt]{42.84pt}\setlength\topsep{0pt}
$\displaystyle \mathcal{O}\left( t^{-1}\right)$
\end{minipage}};
% Text Node
\draw (254,312.75) node  [font=\footnotesize] [align=left] {\begin{minipage}[lt]{46.24pt}\setlength\topsep{0pt}
$\displaystyle \mathcal{O}\left( t^{-3/4}\right)$
\end{minipage}};

\end{tikzpicture}

    \caption{Steps of the $\dbar$-steepest method.}
    \label{fig:steps}
\end{figure}

Now by undoing all the steps, we arrive at:
\begin{align*}
m^{[0]}(z)=E(z)m^{[2]}_{RHP}(z)O^{-1}(z)\delta^{-\sigma_3}.
\end{align*}
Since $O(z)$ uniformly converges to $I$ as $z\rightarrow \infty$, and $\delta^{-\sigma_3}$ is a diagonal matrix, they do not affect the recovering of the potential. Thus we obtain
\begin{align*}
q(x,t)&=-2i(m^{[2]}_{RHP,1,12}+E_{1,12})\\
&=q_{RHP}(x,t)-2iE_{1,12}\\
& \text{ by \eqref{q rhp to q o rhp},\eqref{q o rhp to q as}}\\
&=q_{as}(x,t)+\OO(t^{-1})-2iE_{1,12}\\
& \text{ by \eqref{dbar error}}\\
&=q_{as}(x,t)+\OO(t^{-1})+\OO(t^{-3/4})\\
&=q_{as}(x,t)+\OO(t^{-3/4}),
\end{align*}
where $q_{as}(x,t)$ is given by equation \eqref{q as}.
\begin{remark}
    Note that due to the analysis in the section 7, according to Proposition 2.6 and Proposition 2.11 of \cite{DZ03}, together with the small norm theory, the existence and uniqueness of the model RHP implies, via the estimates of the corresponding Beals-Coifman operators, the existence and uniqueness of RHP \ref{RHP m 3}. Similarly, we obtain the existence and uniqueness of $\tilde{m}^{[2]}$,   $\hat{m}^{[2]}$ and eventually $m^{[2]}_{RHP}$.
\end{remark}
\begin{remark}
	From equation \eqref{q as}, we know $q_{as}$ is $\OO(t^{-1/2})$ as $t\rightarrow\infty$ in the region $x<0$ and consider the limit along the ray $x=-ct$ for some positive constant $c$.
\end{remark}

\section{Fast decaying region}
\label{section: fast decaying}
In this section and the next section, we will focus only on the case of the defocusing mKdV flow. In this case, the phase function reads
\begin{align*}
    \theta(z;x,t)=\frac{x}{t}z+cz^n,\quad n \text{ is an odd positive integer.}
\end{align*}
In the previous sections, we have derived the asymptotic solutions to the defocusing mKdV flow in the oscillating region, namely, along the ray $x=-\nu t,\ \nu>0,\  t\rightarrow\infty$. In this section, we consider the long-time behavior along the ray $x=\nu t,\ \nu>0,\ t\rightarrow\infty$, which we call it the fast decaying region as we will soon prove in this region, the solution decay like $\OO(t^{-1})$, which is faster than the leading term in the oscillating region, i.e., $\OO(t^{-1/2})$, as $t\rightarrow\infty$.

In the fast decaying region, the phase function enjoys the following properties: 
\begin{enumerate}[label={(\arabic*)}]
	\item There exits $\epsilon=\epsilon(n,\nu)>0$ such that $\pm \Im(\theta)>0$ in the strips $\{z:\pm\Im(z)\in (0,\epsilon)\}$, respectively.
	\item There exits $M\in (0,1/\epsilon)$ such that $\Im(\theta) \geq nvu^{n-1}$ for $|u|\geq M\epsilon$ and $\Im(\theta)\geq v(1-(M\epsilon)^2)$ for $|u|\leq M\epsilon$. Here $z=u+iv$.
\end{enumerate}

First we will formulate the RHP as follows:
\begin{problem}

\label{RHP for fast decaying}
Given $R(z)\in H^{1,1}(\R)$, looking for a 2 by 2 matrix-value function $m$ such that 
\begin{enumerate}[label={(\arabic*)}]
	\item $m_+=m_-e^{-it\theta(z)\ad{\sigma_3}}v(z),z\in \R,$
	where the jump matrix is given by
	\begin{align}
	v(z)=\begin{pmatrix}
	1-|R|^2& -\bar{R}\\
	R & 1\\
	\end{pmatrix}=\begin{pmatrix}
	1 & -\bar{R}\\
	0& 1\\
	\end{pmatrix}\begin{pmatrix}
	1 & 0\\
	R & 1\\
	\end{pmatrix};
	\end{align}
	\item $m=I+\OO(z^{-1}),\quad z\rightarrow \infty$.
\end{enumerate}
\end{problem}
\begin{theorem}
	For the above RHP, the solution $m$ enjoys the following asymptotics as $t\rightarrow\infty$:
	\begin{align}
	m_1(t)=\OO(t^{-1}).
	\end{align}
	where  $m=I+m_1(t)/z+\OO(z^{-2}),\  z\rightarrow\infty$.
\end{theorem}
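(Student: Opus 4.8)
The plan is to rerun the $\dbar$-steepest descent reduction of Sections \ref{section: lenses opening}--\ref{section: dbar error}, but in a degenerate and much simpler situation. Since $x=\nu t$ with $\nu>0$ and $n$ is odd, $\theta'(z)=\nu+ncz^{n-1}>0$ for every real $z$; there are no real stationary phase points, so no conjugation and no local model RHP are needed, and all the asymptotics will be carried by a single pure $\dbar$-problem. (The $\dbar$ machinery is unavoidable because $R\in H^{1,1}(\R)$ is not analytic.) Concretely, using the lower/upper triangular factorization of $e^{-it\theta\ad\sigma_3}v(z)$ displayed in Problem \ref{RHP for fast decaying}, I would introduce a $C^1$ matrix $O(z)$ which equals $I$ for $|\Im z|\ge\epsilon'$ (some $\epsilon'\in(0,\epsilon)$), equals $\TL{-R(\Re z)\K(\Im z)e^{2it\theta}}$ for $0<\Im z<\epsilon'$, and equals $\TU{-\bar R(\Re z)\K(-\Im z)e^{-2it\theta}}$ for $-\epsilon'<\Im z<0$, where $\K$ is a smooth cut-off with $\K(0)=1$ and $\K\equiv0$ outside $(-\epsilon',\epsilon')$. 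The boundary values of $O$ on $\R$ reconstruct exactly the jump of $m$, so $m^{[1]}:=mO$ is continuous across $\R$, equals $I+\OO(z^{-1})$ at infinity, and — having no jump contour at all — its ``pure RHP part'' is trivial, $m^{[1]}_{RHP}\equiv I$. Hence $E:=m^{[1]}$ itself solves the pure $\dbar$-problem $\dbar E=EW$ with $W=\dbar O$ supported in the two sub-strips $0<\pm\Im z<\epsilon'$, and a one-line computation using analyticity of $e^{\pm2it\theta}$ gives $\|W(z)\|\lesssim\bigl(|R'(u)|+|R(u)|\,|\K'(v)|\bigr)e^{-2t\Im\theta(z)}$, $z=u+iv$.

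\textbf{The $\dbar$-estimate.} Recast the problem as $E=I-\tfrac1\pi\,\K_W(E)$, with $\K_W(f)(z)=\int_\C\frac{f(s)W(s)}{s-z}\,dA(s)$. On the strips, properties (1)--(2) of $\theta$ give $\Im\theta\gtrsim v$ when $|u|\le M\epsilon$ and $\Im\theta\gtrsim v\,u^{n-1}$ when $|u|\ge M\epsilon$; in particular $\Im\theta\gtrsim v$ throughout, while the $\K'(v)$-part of $W$ is supported where $\Im\theta\gtrsim\epsilon'$ and is therefore $\OO(e^{-ct})$. With $R',R\in L^2(\R)$, the uniform bound $\sup_{z\in\C}\int_\C\frac{\|W(s)\|}{|s-z|}\,dA(s)\to0$ as $t\to\infty$ follows by the same domain-splitting used in Lemma \ref{estimates of W}: split $\C$ into a disc about $z$, a neighbourhood of the strips, and the complement, and on each piece bound $|s-z|^{-1}$ appropriately and apply Cauchy--Schwarz in $u$ against $\|R'\|_{L^2}+\|R\|_{L^2}$ together with the exponential in $v$. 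Consequently $1-\tfrac1\pi\K_W$ is invertible on $L^\infty(\C)$ for $t\ge t_0$, the resolvent is uniformly bounded, and $\|E\|_{L^\infty(\C)}\lesssim1$; this also delivers the existence (and, via standard RHP uniqueness, uniqueness) of $m=EO^{-1}$.

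\textbf{Extracting $\OO(t^{-1})$.} Since $O\equiv I$ near $\infty$ along the imaginary axis, $m_1(t)=\lim_{z\to\infty}z(m-I)=\lim_{z\to\infty}z(E-I)=\tfrac1\pi\int_\C E(s)W(s)\,dA(s)$ by dominated convergence, whence $|m_1(t)|\lesssim\int_\C\|W(s)\|\,dA(s)$. The $\K'(v)$-part contributes $\OO(e^{-ct})$; on $|u|\le M\epsilon$ one has $\int_0^{\epsilon'}\!\int_{|u|\le M\epsilon}|R'(u)|e^{-ctv}\,du\,dv\lesssim\|R'\|_{L^2}\,t^{-1}$; and on $|u|\ge M\epsilon$, performing the $v$-integral first, $\int_0^{\epsilon'}\!\int_{|u|\ge M\epsilon}|R'(u)|e^{-ctvu^{n-1}}\,du\,dv\lesssim t^{-1}\int_{|u|\ge M\epsilon}\tfrac{|R'(u)|}{u^{n-1}}\,du\lesssim t^{-1}\|R'\|_{L^2}$, the last step by Cauchy--Schwarz since $u^{-(n-1)}\in L^2(|u|\ge M\epsilon)$ for $n\ge3$. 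The $|R|$-terms are handled identically (using $R\in L^1$, which holds because $R\in H^{1,1}$). Collecting the pieces gives $m_1(t)=\OO(t^{-1})$.

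\textbf{Main obstacle.} The entire argument funnels into the $\dbar$-estimate, and the genuinely delicate point is the uniform-in-$t$ control of $\K_W$ in the second step when the evaluation point $z$ lies inside a strip: there $|s-z|^{-1}$ is truly singular and must be absorbed against $|R'|$, which lies only in $L^2$ (not $L^\infty$), together with $e^{-2t\Im\theta}$. This forces the same careful domain decomposition and Cauchy--Schwarz bookkeeping as in Lemma \ref{estimates of W}. Once that is in place, the sharp $t^{-1}$ rate is routine, and it is precisely the growth $\Im\theta\gtrsim v\,u^{n-1}$ for large $|u|$ — property (2) of the phase function — that makes $\int|R'(u)|u^{-(n-1)}\,du$ converge and yields $t^{-1}$ rather than a weaker power.
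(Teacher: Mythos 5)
Your proposal is correct and follows essentially the same route as the paper: open the lens into the two strips $0<\pm\Im z<\epsilon$ using the lower/upper triangular factorization, then bound the $\dbar$-contribution by $\int_0^{\epsilon}\int_\R f(u)\,e^{-t\Im\theta}\,du\,dv$ with $f\in L^2(\R)$, splitting at $|u|=M\epsilon$ and using properties (1)--(2) of $\theta$ to obtain $\OO(t^{-1})$ in each piece. The one variation worth noting is the construction of $O$: the paper multiplies the off-diagonal entry by $(1+(\Im z)^2)^{-1}$, which does not equal $1$ on $\Sigma_1=\{\Im z=\epsilon\}$, so a residual jump remains there and a separate small-norm step ($m^{\sharp}=I+\OO(e^{-ct})$) is needed before conjugating to isolate the pure $\dbar$-problem; you instead insert a compactly supported cutoff $\K(\Im z)$ so that $O\equiv I$ outside the strips, making the pure-RHP part trivially the identity and letting the $\dbar$-problem carry the entire asymptotics by itself. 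That is a mild simplification of the same argument, not a different method. One small inaccuracy to flag: your claim that the $\K'(v)$-part of $W$ is $\OO(e^{-ct})$ requires $\K'$ to be supported away from $v=0$ (e.g.\ $\K\equiv1$ on a neighbourhood of the origin); the properties you list, $\K(0)=1$ and $\K\equiv0$ off $(-\epsilon',\epsilon')$, do not force this. But even without that refinement the $\K'$-term still contributes $\OO(t^{-1})$ by the same Cauchy--Schwarz computation applied to $|R|$ (using $R\in L^1$, a consequence of $R\in H^{1,1}$), so the final rate $m_1(t)=\OO(t^{-1})$ is unaffected.
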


\begin{figure}[ht]
    \centering
    \includegraphics[width=0.85\textwidth]{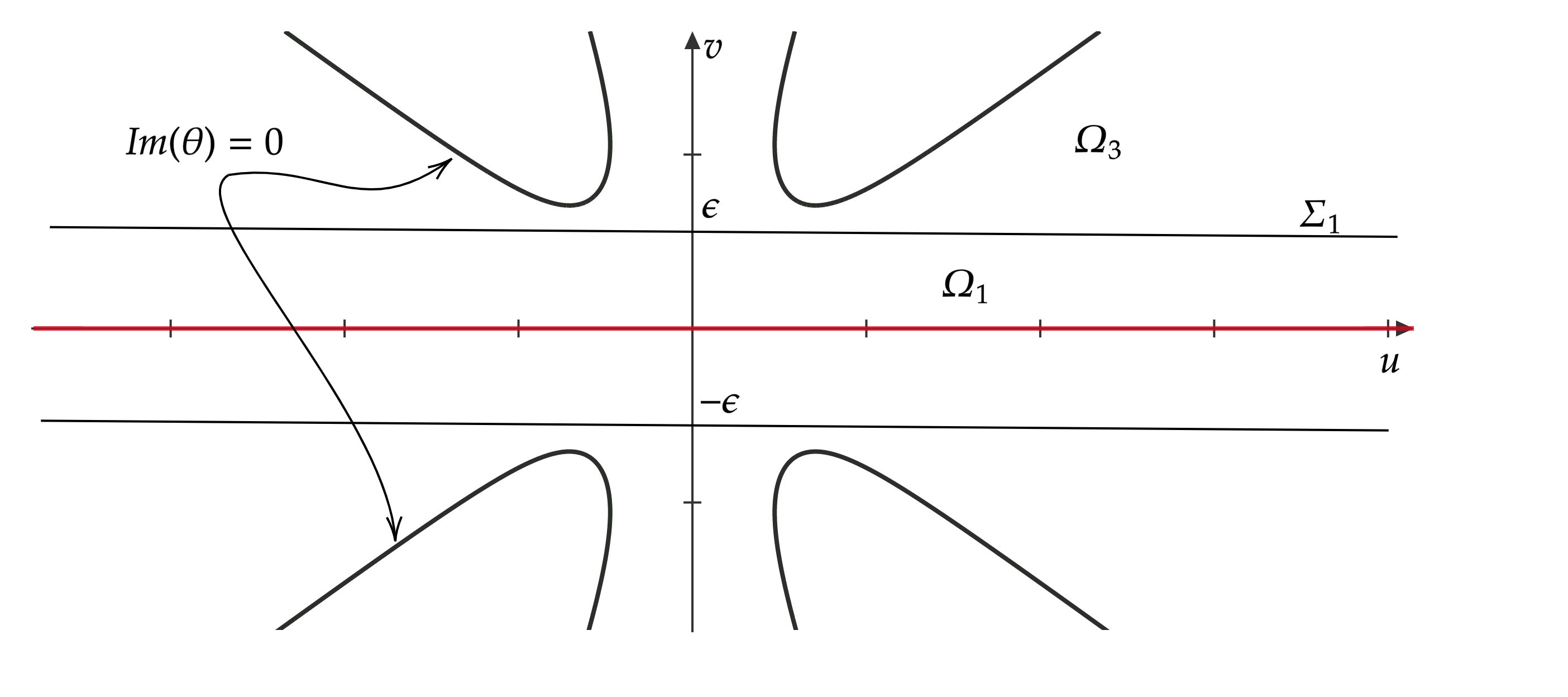}
    \caption{$\dbar$-extension for the case of the fast decaying region. Here we only draw the case when $n=5$. For generic odd $n$, there are $\frac{n-1}{2}$ curves of $\Im{\theta}=0$ in the upper and in the lower half plane. }
    \label{fig: fast decaying }
\end{figure}

% Now we will prove the theorem again using the idea of $\dbar$-steepest descent method.
\begin{proof}
	 In the light of $\dbar$-steepest descent, to open the lens, we multiple a smooth function $O(z)$ to $m$, where $O(z)$ is given by
	\begin{align*}
	O(z)=\begin{cases}
	\begin{pmatrix}
	1 & 0\\
	\frac{{-R}(\Re{z})e^{2it\theta(z)}}{1+(\Im{z})^2} & 1\\
	\end{pmatrix}, \quad z\in {\Omega}_1,\\
		\begin{pmatrix}
	1 &	\frac{\bar{R}(\Re{z})e^{-2it\theta(z)}}{1+(\Im{z})^2}\\
0 &1\\
	\end{pmatrix},\quad z\in \Omega^*_{1},\\
	I,\quad z \in \C\backslash (\Omega_1\cup\Omega^*_1),
	\end{cases}
	\end{align*} 
	where (see Fig.\ref{fig: fast decaying })
	\begin{align*}
	    \Omega_1&=\{z:\Im{z}\in(0,\epsilon) \},\\
	    \Omega^*_1&=\{z:\Im{z}\in(-\epsilon,0) \}.
	\end{align*}
	
	Let us denote $\Sigma_1=\{z:\Im{z}=\epsilon\}$, see Fig.\ref{fig: fast decaying }, and let $$\tilde{m}=mO,\quad z\in \C.$$
	
	Now as usual, we obtain a $\dbar$-RHP, due to the exponential decaying of the off-diagonal term, and the jump matrix of the RHP part will approach $I$. Hence by a small norm argument, we know the solution will close to $I$ as $z\rightarrow\infty$. Denote the solution to the pure RHP by $m^\sharp$, and small norm theory leads to $m^\sharp=I+\OO(e^{-c(\epsilon)t}),c(\epsilon)>0,z\rightarrow\infty$. Next, consider
	\begin{align}
	E=\tilde{m}(m^\sharp)^{-1}.
	\end{align}
	By direct computation one can show $E$ doesn't have any jump on $\Sigma_1$ and it satisfies a pure $\dbar$-problem:
	\begin{align}
	\dbar E=EW,
	\end{align}
	where \begin{align*}W=\begin{cases}
	\begin{pmatrix}
	0 & 	m^{\sharp}e^{-2it\theta(z)}\dbar(\frac{\bar{R}(\Re{z})}{1+(\Im{z})^2}) (m^\sharp)^{-1} \\
	0 & 0
	\end{pmatrix},\quad z\in \Omega_1,\\
	\begin{pmatrix}
	0 & 0\\
	m^{\sharp}e^{2it\theta(z)}\dbar(\frac{-{R}(\Re{z})}{1+(\Im{z})^2}) (m^\sharp)^{-1} & 0
	\end{pmatrix},\quad z\in \Omega_1^*,\\
	0,\quad z\in \C\backslash (\Omega_1\cup\Omega_1^*),
	\end{cases}
\end{align*} 
where $\dbar=\frac{1}{2}(\partial_{\Re{z}}+i\partial_{\Im{z}})$. 
	
	Since $R,\bar{R}\in H^{1,1}$, $\dbar(\frac{\bar{R}(\Re{z})}{1+(\Im{z})^2}),\dbar(\frac{-{R}(\Re{z})}{1+(\Im{z})^2})$ are uniformly bounded by some non-negative $L^2(\R)$ function $f(\Re{z})$. Note that $m^\sharp$ is uniformly close to $I$, and setting $z=u+iv$, and considering $z\in \Omega_1$ first,  we have
	\begin{align*}
	\|W\|_\infty\leq f(u)e^{-t\Im{\theta(u,v)}}, \forall u\in \R, v\in (0,\epsilon).
	\end{align*}
	By the same procure as the one in section 8, the error of approximating $m$ by the identity matrix is given by the following integral (since there is only one non-trivial entry of $W$):
	\begin{align}
	\Delta:=\int_0^\epsilon\int_\R f(u)e^{-t\Im{\theta}}dudv.
	\end{align}
	Split the $u$ into two regions: (1) $|u|\leq M\epsilon$, (2) $|u|\geq M\epsilon$. And denote them by $\Delta_1$, $\Delta_2$ respectively. Then $\Delta=\Delta_1+\Delta_2$. And 
	\begin{align*}
	\Delta_1&\leq \int_0^\epsilon\int_{-M\epsilon}^{M\epsilon}f(u)e^{-tv(1-M^2\epsilon^2)}dudv\\
	&\text{by Cauchy-Schwartz }\\
	&\leq \|f\|_{L^2(\R)}(2M\epsilon)^{1/2}\frac{1-e^{-t\epsilon(1-M^2\epsilon^2)}}{t(1-M^2\epsilon^2)}\\
	&=\OO(t^{-1}).
	\end{align*}
	On the other hand,
	\begin{align*}
	\Delta_2&\leq \int_0^\epsilon\int_{|u|\geq M\epsilon}f(u)e^{-ntvu^{n-1}}dudv\\
	&=\int_{|u|\geq M\epsilon} f(u)\int_0^\epsilon e^{-ntvu^{n-1}}dvdu\\
	&\leq t^{-1}\|f\|_{L^2}(\int_{|u|\geq M\epsilon}(\frac{1-e^{-ntvu^{n-1}}}{nu^{n-1}})^2du)^{1/2}\\
	&\leq t^{-1}\|f\|_{L^2} \frac{n}{n-2}(M\epsilon)^{-(n-2)}\\
	&=\OO(t^{-1}).
	\end{align*}
	Similarly, we can prove that for $z\in\Omega_1^*$, we also have the error estimate $\OO(t^{-1})$.
	Assembling all together, we conclude that the error term is $\OO(t^{-1})$, and  $m_1=\OO(t^{-1})$, as $ t\rightarrow\infty.$
\end{proof}
\section{Painlev\'e region}
\label{section: painleve region}
In this section, we first derive the Painlev\'e II hierarchy based on some RHP. Then, we will connect the long-time behavior of the mKdV hierarchy in the so-called Painlev\'e region to solutions of the Painlev\'e II hierarchy.

\subsection{Painlev\'e II hierarchy}
As mentioned in \cite{PhysRevLett.38.1103}, the mKdV equation is can be transferred to the Painlev\'e II equation. The authors in \cite{PhysRevLett.38.1103} also suggest the connection between integrable PDEs with Painlev\'e equations. In \cite{ClarksonMKdVLax}, the authors explicitly derived the Painlev\'e II hierarchy from self-symmetry reduction of the mKdV hierarchy (see page 59 of \cite{ClarksonMKdVLax}. And also \cite{Clarkson_1999}).  In this section, we will provide a slight different (as comparing to \cite{Clarkson_1999}) algorithm  based on Riemann-Hilbert problems to generate the Painlev\'e II hierarchy. Let's denote $\Theta(x,z)=xz+\frac{c}{n}z^n$, and suppose $Y$ solves the following RHP:
\begin{align*}
Y_+&=Y_-e^{i\Theta\sigma_3}v_0e^{-i\Theta\sigma_3},\quad z\in \Sigma_n,\\
Y&=I+\OO(z^{-1}),\quad z\rightarrow \infty.
\end{align*}
where the contour $\Sigma_n$ consists of all stokes lines $\{z:\Im{\Theta(z)}=0\}$ and $v_0$ is a constant 2 by 2 matrix that is independent of $x,z$.

Now let $\tilde{Y}=Ye^{i\Theta \sigma_3}$, and we arrive at a new RHP:
\begin{align*}
\tilde{Y}_+&=\tilde{Y}_-v_0,\quad z\in \Sigma_n,\\
\tilde{Y}&=(I+\OO(z^{-1}))e^{i\Theta \sigma_3},\quad z\rightarrow \infty.
\end{align*}
Since $v_0$ is constant, it is easily to check, by Louisville's argument, that both $\partial_z\tilde{Y}\tilde{Y}^{-1}$ and $\partial_x\tilde{Y}\tilde{Y}^{-1}$ are polynomial of $z$. Hence we obtain the following two differential equations:
\begin{align}
\partial_x\tilde{Y}\tilde{Y}^{-1}&=A(x,z),\\
\partial_z\tilde{Y}\tilde{Y}^{-1}&=B(x,z).
\end{align}
If we assume 
\begin{align}
Y&=I+\sum_{j=1}^{n-1}{Y_j(x)z^{-j}}+\OO(z^{-n}),\quad z\rightarrow\infty,\\
\underline{Y}&=Y^{-1}=I+\sum_{j=1}^{n-1}{\underline{Y}_j(x)z^{-j}}+\OO(z^{-n}),\quad z\rightarrow\infty,
\end{align}
then a direct computation shows 
\begin{align*}
A&=i[Y_1,\sigma_3]+iz\sigma_3,\\
B&=ix\sigma_3+icz^{n-1}\sigma_3+icz^{n-2}[Y_1,\sigma_3]\\
&+\sum_{k=2}^{n-1}icz^{n-1-k}(Y_k\sigma_3+\sigma_3\underline{Y}_k+\sum_{j=1}^{k-1}Y_{k-j}\sigma_3\underline{Y}_j).
\end{align*}
Since $Y_{x,z}=Y_{z,x}$, we have
\begin{equation}
\label{compitabile condition}
A_z-B_x+[A,B]=0.
\end{equation}
 Set 
\begin{align}
Y_j=\begin{pmatrix}
p_j(x) & u_j(x)\\
v_j(x)& q_j(x)\end{pmatrix},\quad j=1,..,n-2,
\end{align}
where $p_j,q_j,u_j,v_j$ are smooth functions of $x$. To guarantee \eqref{compitabile condition}, all the coefficients of $z$ must vanish. Those equations can be solved recursively. Eventually, by eliminating $u_j,v_j, j=2,..,n-2$, and let $v_1=u_1$, we will arrive at a nonlinear ODE of $u_1$\footnote{Surprisingly, the dependence on $p_j,q_j$ will disappear.}, which turns out to be a member of the hierarchy of Painlev\'e II equations . We list the first few of them:
\begin{align*}
n=3&: -8cu^3+cu_{xx}-4xu=0,\\
n=5&:-24cu^5+10cu^2u_{xx}+10cuu_x^2-\frac{c}{4}u_{xxxx}-4xu=0,\\
n=7&: -80cu^7+70cu^4u_{xx}+140cu^3u_x^2-\frac{7cu^2u_{xxxx}}{2}\\
\quad \quad & +(-\frac{21}{2}cu^2_{xx}-14cu_xu_{xxx}-4x)u+\frac{c}{16}u_{xxxxxx}-\frac{35}{2}cu_x^2u_{xx}=0.
\end{align*}
In the current article, we focus only on the odd members. In fact, $n=3$ corresponds to the mKdV equation, $n=5$ corresponds to the 5th order mKdV, and so on. In the following subsection, we will show how to connect the long-time asymptotics behavior of the mKdV hierarchy to the solutions to the Painlev\'e II hierarchy.

\subsection{Painlev\'e Region}
Recall the phase functions of the AKNS hierarchy of mKdV type equations are
\begin{align}\label{original phase function}
\theta(z;x,t)=xz+ctz^n, \quad n\, \text{is odd}. 
\end{align}
By the Painlev\'e region we mean a collection of all the curves $x=s(nt)^{1/n},s\neq 0$, by rescaling $z\rightarrow (nt)^{-\frac{1}{n}}\xi$, we have 
\begin{align}
\Theta(\xi)=s\xi+\frac{c}{n}\xi^n.
\end{align}
Now the modulus of the stationary phase points of \eqref{original phase function} is 
\begin{equation*}
|z_0|=\left|-\frac{x}{ct}\right|^{\frac{1}{n-1}}=\OO(t^{-\frac{1}{n}}),
\end{equation*} 
and however, after scaling, the modulus of the stationary phase points of $\Theta(\xi)$ is
\begin{align}
\xi_0=|z_0|t^{\frac{1}{n}},
\end{align}
which is fixed as $t\rightarrow \infty$.
A direct computation shows for any odd $n$, one can always perform lens-opening to the rays $\{z\in\R:|z|>|\xi_0|\}$, due to the signature of $\Re(i\theta)$, see Fig.\ref{signature of n=9}.
\begin{figure}[h]
\centering
\includegraphics[width=0.75\textwidth]{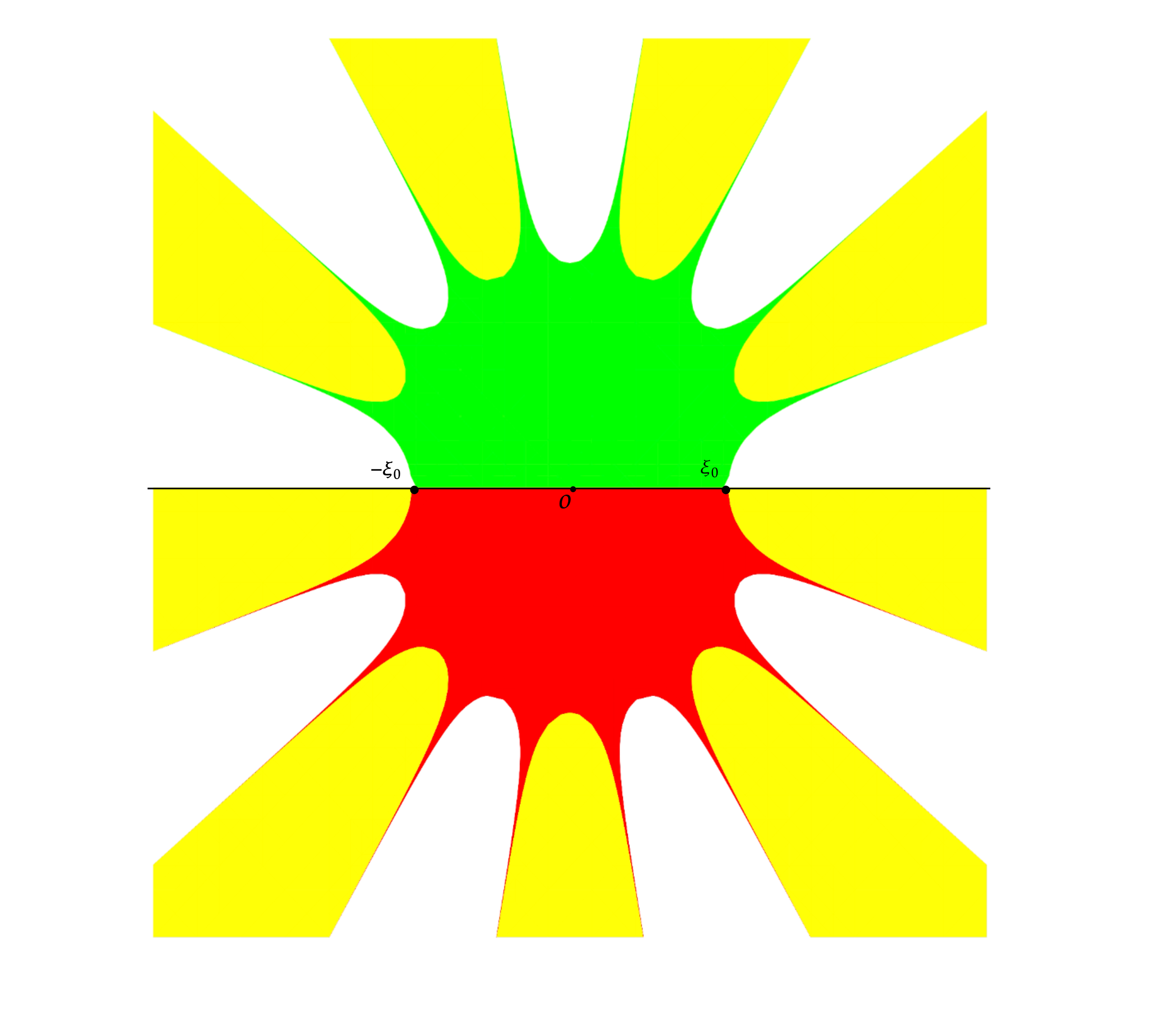}
\caption{Signature of $\Re(i\theta)$. The green region: $\Re(i\theta)>0$ when $x<0$; The red region: $\Re(i\theta)>0$ when $x>0$; The yellow region: the overlapping region of red and green; The white region: $\Re(i\theta)<0$. Here we only plot the signatures of $\Re{(i\theta)}$ for $n=9$. Other odd $n$, the region plot looks very similar.}
\label{signature of n=9}
\end{figure}

Note that
\begin{align*}
e^{-i\theta(z)\ad \sigma_3}v(z)&=e^{-i\Theta(\xi)\ad \sigma_3}v(\xi)\\
&=\begin{pmatrix}
1-|R|^2 & -\bar{R}e^{-2i\Theta}\\
Re^{2i\Theta} & 1
\end{pmatrix}\\
&=\begin{pmatrix}
1 & -\bar{R}e^{-2i\Theta}\\
0 & 1
\end{pmatrix}\begin{pmatrix}
1 & 0\\
Re^{2i\Theta} & 1
\end{pmatrix}.
\end{align*}
We can deform the the contour $\{z\in \R:|z|>|\xi_0|\}$ as before and get the deformed contour as follows (see Fig.\ref{painleve region contours}): Fix a positive constant $\alpha<\frac{\pi}{n}$\footnote{Such a choice of $\alpha$ guarantees that the new contours will stay within the regions where the corresponding exponential term will decay (considering Fig.\ref{signature of n=9}).},
\begin{align*}
\Sigma_0&=\{z\in \R:-\xi_0\leq z\leq\xi_0\},\\
    \Sigma_1&=\{z:z=\xi_0+\rho e^{i\alpha},\rho\in (0,\infty)\},\\
    \Sigma_2&=\{z:z=-\xi_0+\rho e^{-i\alpha},\rho\in (-\infty,0)\},\\
    \Sigma_3&=\{z:z=-\xi_0+\rho e^{i\alpha},\rho\in (-\infty,0)\},\\
    \Sigma_4&=\{z:z=\xi_0+\rho e^{-i\alpha},\rho\in (0,\infty)\},
\end{align*}
and we define the regions as follows:
\begin{align*}
    \Omega_1&=\{z:z=\xi_0+\rho e^{i\phi},\rho\in (0,\infty),\phi\in (0,\alpha)\},\\
    \Omega_2&=\C^+\backslash (\Omega_1\cup\Omega_3),\\
    \Omega_3&=\{z:z=-\xi_0+\rho e^{-i\phi},\rho\in (-\infty,0),\phi\in (-\alpha,0)\},\\
     \Omega_4&=\{z:z=-\xi_0+\rho e^{i\phi},\rho\in (-\infty,0),\phi\in (0,\alpha)\},\\
     \Omega_5&=\C^{-}\backslash (\Omega_4\cup\Omega_6),\\
    \Omega_6&=\{z:z=\xi_0+\rho e^{i\phi},\rho\in (0,\infty),\phi\in (-\alpha,0)\}.
\end{align*}

\begin{figure}[h]
\centering
\begin{tikzpicture}
\path [draw=black,postaction={on each segment={mid arrow=red}}] (-5,0)--(-2,0)--(2,0)--(5,0);
\node[circle, fill=red, scale=0.3]  at (2,0) {};
\node[circle, fill=red, scale=0.3]  at (-2,0) {};
\node[below left] at (2,0) {\small $\xi_0$};
\node[below right] at (-2,0) {\small $-\xi_0$};
 \path [draw=blue,postaction={on each segment={mid arrow=red}}] (2,0) -- (4.8,2); 
 \path [draw=blue,postaction={on each segment={mid arrow=red}}] (2,0) -- (4.8,-2); 
 \path [draw=blue,postaction={on each segment={mid arrow=red}}] (-4.8,2)--(-2,0) ; 
 \path [draw=blue,postaction={on each segment={mid arrow=red}}] (-4.8,-2)--(-2,0); 
 \node [right] at (4.2,1.5) {$\Sigma_1$};
 \node [left] at (-4.2,1.5) {$\Sigma_2$};
 \node [left] at (-4.2,-1.5) {$\Sigma_3$};
 \node [right] at (4.2,-1.5) {$\Sigma_4$};
 \node [below] at (0,0) {$\Sigma_0$};
 \node [right] at (3.2,0.5) {$\Omega_1$};
 \node [above] at (0,0.5) {$\Omega_2$};
 \node [left] at (-3.2,0.5) {$\Omega_3$};
 \node [left] at (-3.2,-0.5) {$\Omega_4$};
 \node [below] at (0,-1.2) {$\Omega_5$};
 \node [right] at (3.2,-0.5) {$\Omega_6$};
\end{tikzpicture}
\caption{Contour for $\dbar$-RHP.}
\label{painleve region contours}
\end{figure}
As before, set the original RHP as $m^{[1]}$ with jump $e^{-i\theta(z)\ad \sigma_3}v(z)$. After re-scaling and $\dbar$-lenses opening, we set $m^{[2]}(\xi)=m^{[1]}O(\gamma)$, where the lenses opening matrix is
\begin{align}
O(\gamma)=\begin{cases}
\begin{pmatrix}
1 & 0\\
-E_+e^{2i\Theta(\gamma)} & 1
\end{pmatrix},\quad \gamma\in \Omega_1\cup \Omega_3,\\
\begin{pmatrix}
1 & -E_-e^{-2i\Theta(\gamma)}\\
0 & 1
\end{pmatrix},\quad 
\gamma\in \Omega_4\cup \Omega_6,\\
I,\quad \gamma\in \Omega_2\cup \Omega_5,
\end{cases}
\end{align}
where
\begin{align*}
E_+(\gamma)&=\mathcal{K}(\phi)R\left((nt)^{-\frac{1}{n}}\xi\right)+(1-\mathcal{K}(\phi))R(\tilde{\xi}_0(nt)^{-\frac{1}{n}}),\\
E_-&(\gamma)=\overline{E_+(\gamma)},\\
\gamma&=
\begin{cases}
\xi_0+\rho e^{i\phi},\quad \text{if }\gamma\in \Omega_1\cup\Omega_6,\\
-\xi_0+\rho e^{i\phi},\quad \text{if }\gamma \in \Omega_3\cup\Omega_4,
\end{cases} \\
\xi&=\Re(\gamma),\\
\tilde{\xi}_0&=\begin{cases}
\xi_0,\quad \text{if }\gamma\in \Omega_1\cup\Omega_6,\\
-\xi_0,\quad \text{if }\gamma \in \Omega_3\cup\Omega_4.
\end{cases}
\end{align*}
Now we arrive at the following $\dbar$-RHP:
\begin{pproblem}
\begin{enumerate}[label={(\arabic*)}] Looking for a 2 by 2 matrix-valued function $m^{[2]}$ such that
\item{The RHP}:
\subitem(1.a) $m^{[2]}(\gamma)\in C^1(\R^2\backslash \Sigma)$ and $m^{[2]}(z)=I+\mathcal{O}(\gamma^{-1}),\gamma\rightarrow \infty$; 
\subitem(1.b) the jumps on $\Sigma_1$ and $\Sigma_2$ are $e^{-i\Theta(\xi)\ad \sigma_3}v_+$, and 
the jumps on $\Sigma_3$ and $\Sigma_4$ are $e^{-i\Theta(\xi)\ad \sigma_3}v_-$, where
\begin{align*}
    v_=\begin{pmatrix}
    1 & \bar{R}\\
    0 & 1
    \end{pmatrix},\quad v_+=\begin{pmatrix}
    1 & 0\\
    R & 1
    \end{pmatrix}.
\end{align*}
The jump on $\Sigma_0$ is
$e^{-i\Theta\ad\sigma_3}v((nt)^{-\frac{1}{n}}\xi),$, and the jumps on $\{z\in\R:|z|>|\xi_0|\}$ is $I$.

\item{The $\dbar$-problem}:

For $z\in \C$, we have
\begin{align}
\dbar m^{[2]}(\xi)=m^{[2]}(\xi)\dbar O(\xi).
\end{align}
\end{enumerate}
\end{pproblem}

Again, we will need the following lemma in order to estimate errors from the $\dbar$-problem.
\begin{lemma}
For $\gamma\in \Omega_{1,3,4,6}$, $\xi=\Re\gamma$,
\begin{align}
|\dbar E_{\pm}(\gamma)|\leq (nt)^{-\frac{1}{n}}|(nt)^{-\frac{1}{n}}(\xi-\xi_0)|^{-\frac{1}{2}}\|R\|_{H^{1,0}}+(nt)^{-\frac{1}{n}}|R'((nt)^{-\frac{1}{n}}\xi)|.
\end{align}
\end{lemma}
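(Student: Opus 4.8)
The plan is to repeat the computation behind Lemma~\ref{dbar essential estimate}, the only new feature being the rescaling factors $(nt)^{-1/n}$ coming from the fact that $R$ is now evaluated at $(nt)^{-1/n}\xi$; note that, in contrast to Lemma~\ref{dbar essential estimate}, there is no $\delta$-conjugation here, so the estimate is actually a bit cleaner. Fix the sector $\Omega_1$ and write $\gamma-\xi_0=\rho e^{i\phi}=u+iv$, so that $\xi=\Re\gamma=\xi_0+u$ with $u=\rho\cos\phi$ and $0<\phi<\alpha<\pi/2$. Using $\dbar=\tfrac{e^{i\phi}}{2}\bigl(\partial_\rho+i\rho^{-1}\partial_\phi\bigr)$ on
\[
E_+(\gamma)=\K(\phi)R\bigl((nt)^{-1/n}\xi\bigr)+\bigl(1-\K(\phi)\bigr)R\bigl((nt)^{-1/n}\xi_0\bigr),
\]
and recalling $\partial_\xi R\bigl((nt)^{-1/n}\xi\bigr)=(nt)^{-1/n}R'\bigl((nt)^{-1/n}\xi\bigr)$, the two terms carrying $R'$ combine (the factors $e^{i\phi}$ and $\cos\phi-i\sin\phi=e^{-i\phi}$ cancelling) to give
\[
\dbar E_+(\gamma)=\tfrac12\K(\phi)(nt)^{-1/n}R'\bigl((nt)^{-1/n}\xi\bigr)+\frac{ie^{i\phi}}{2\rho}\K'(\phi)\Bigl[R\bigl((nt)^{-1/n}\xi\bigr)-R\bigl((nt)^{-1/n}\xi_0\bigr)\Bigr].
\]

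Next I would bound the two summands. Since $\K$ and $\K'$ are bounded (a smooth $\pi$-periodic function on $[0,2\pi]$ and its derivative), the first summand is $\lesssim (nt)^{-1/n}\bigl|R'((nt)^{-1/n}\xi)\bigr|$, which is already one of the two asserted terms. For the second, the fundamental theorem of calculus together with the Cauchy--Schwarz inequality gives
\[
\Bigl|R\bigl((nt)^{-1/n}\xi\bigr)-R\bigl((nt)^{-1/n}\xi_0\bigr)\Bigr|=\Bigl|\int_{(nt)^{-1/n}\xi_0}^{(nt)^{-1/n}\xi}R'(s)\,ds\Bigr|\leq \|R'\|_{L^2}\bigl((nt)^{-1/n}|\xi-\xi_0|\bigr)^{1/2},
\]
since the interval of integration has length $(nt)^{-1/n}|\xi-\xi_0|=(nt)^{-1/n}u$. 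Because $0<u\le\rho$ (as $0<\phi<\pi/2$), we have $\rho^{-1}u^{1/2}\le u^{-1/2}$, exactly as in Lemma~\ref{dbar essential estimate}; combining this with $\|R'\|_{L^2}\le\|R\|_{H^{1,0}}$ bounds the second summand by
\[
c\,(nt)^{-1/(2n)}|\xi-\xi_0|^{-1/2}\|R\|_{H^{1,0}}=c\,(nt)^{-1/n}\bigl|(nt)^{-1/n}(\xi-\xi_0)\bigr|^{-1/2}\|R\|_{H^{1,0}},
\]
which is the remaining asserted term. Adding the two bounds (and absorbing $c$ into the estimate as usual) proves the lemma on $\Omega_1$.

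The remaining sectors are handled verbatim: on $\Omega_3$ and $\Omega_4$ one replaces $\xi_0$ by $-\xi_0$ and uses $|\xi-(-\xi_0)|=|\xi+\xi_0|$; on $\Omega_4$ and $\Omega_6$ one uses $E_-=\overline{E_+}$, so that $E_-$ has the same structure with $R$ replaced by $\bar R$ and $|\bar R'|=|R'|$, $\|\bar R'\|_{L^2}=\|R'\|_{L^2}$. I do not expect a genuine obstacle: this is a rescaled copy of Lemma~\ref{dbar essential estimate}, and the only points requiring care are keeping the powers of $(nt)^{-1/n}$ consistent (one factor surviving in the $R'$ term, a net $(nt)^{-1/(2n)}$ in the difference term after the $L^2$-estimate, which then recombines as $(nt)^{-1/n}|(nt)^{-1/n}(\xi-\xi_0)|^{-1/2}$) and invoking $\alpha<\pi/n\le\pi/3<\pi/2$ so that $\cos\phi>0$ and the trade $\rho^{-1}u^{1/2}\le u^{-1/2}$ is legitimate.
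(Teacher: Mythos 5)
Your proof is correct and follows the same route as the paper: compute $\dbar E_{\pm}$ in polar coordinates around the relevant stationary phase point, bound the difference $R((nt)^{-1/n}\xi)-R((nt)^{-1/n}\xi_0)$ via the fundamental theorem of calculus and Cauchy--Schwarz, and trade $\rho^{-1}$ for $|\xi-\xi_0|^{-1/2}$ using $|\xi-\xi_0|\le\rho$. The paper states the result of the polar-coordinate differentiation without the intermediate steps you spell out, but the bounds, the use of $\|R'\|_{L^2}\le\|R\|_{H^{1,0}}$, and the bookkeeping of the $(nt)^{-1/n}$ factors are identical.
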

\begin{proof}
For brevity, we only prove for the region $\Omega_1$.
Using the polar coordinates, we have
\begin{align*}
|\dbar E_+(\gamma)|&=\left|\frac{ie^{i\phi}}{2\rho}\mathcal{K}'(\phi)\left[R\left((nt)^{-\frac{1}{n}}\xi\right)-R(\xi_0(nt)^{-\frac{1}{n}})\right]+\mathcal{K}(\phi)R'\left((nt)^{-\frac{1}{n}}\xi\right)(nt)^{-\frac{1}{n}}\right| \\
&\text{by Cauchy-Schwartz inequality}\\
&\leq \left|\frac{\|R\|_{H^{1,0}}|(nt)^{-\frac{1}{n}}\xi-\xi_0(nt)^{-\frac{1}{n}}|^{1/2}}{\gamma-\xi_0}\right|+(nt)^{-\frac{1}{n}}\left|R'\left((nt)^{-\frac{1}{n}}\xi\right)\right|\\
&\leq (nt)^{-\frac{1}{n}}|(nt)^{-\frac{1}{n}}(\xi-\xi_0)|^{-\frac{1}{2}}\|R\|_{H^{1,0}}+(nt)^{-\frac{1}{n}}|R'((nt)^{-\frac{1}{n}}\xi)|.
\end{align*}
Similarly, we can prove for other regions.
\end{proof}
Next, consider a pure RHP $m^{[3]}$ which satisfies exactly the RHP part of $\dbar$-RHP($m^{[2]}$). $m^{[3]}$ can be approximated by the RHP corresponding to a special solution of the Painlev\'e II hierarchy\footnote{As for the existence of the RHP $m^{[3]}$, which is not completely trivial due to the fact that solutions to the Painlev\'e II equations have poles,  we refer the readers to the book\cite{fokas2006painleve} for the details}. Since for $\gamma\in \Omega_1$,
\begin{align*}
&\left|\left(R(\xi (nt)^{-\frac{1}{n}})-R(0)\right)e^{2i\Theta(\gamma)}\right|\\
&\leq |\xi(nt)^{-\frac{2}{n}}|^{\frac{1}{2}}\|R\|_{H^{1,0}}e^{2\Re{i\Theta(\gamma)}}\\
&\leq (nt)^{-\frac{1}{n}}|\Re\gamma|^{\frac{1}{2}}\|R\|_{H^{1,0}}e^{2\Re{i\Theta(\gamma)}},
\end{align*}
it is evident that 
\begin{align}
\label{requirement for small norm}
\|Re^{2i\Theta}-R(0)e^{2i\Theta}\|_{L^{\infty}\cap L^1\cap L^2}\leq c(nt)^{-\frac{1}{n}}.
\end{align}
Let $m^{[4]}$ solves the RHP formed by replacing $R(\pm\xi_0(nt)^{-1/n})$ and its complex conjugate in the jumps of $m^{[3]}$ along $\Sigma_k,k=1,2,3,4$ by $R(0)$ and $\bar{R}(0)$ respectively. Then, by the small norm theory, the errors between the corresponding potential is given by
\begin{align*}
error_{3,4}&=\lim_{\gamma\rightarrow\infty}|\gamma(m^{[4]}_{12}-m^{[3]}_{12})|\\
&\leq c \int_{\Sigma} |(R(\Re(s) (nt)^{-\frac{1}{n}})-R(0))e^{2i\Theta(s)}|ds\\
&\leq c(nt)^{-\frac{1}{n}}.
\end{align*}
Then since now the jumps are all analytic, we can perform an analytic deformation and arrive at the green contours as show in Fig.\ref{Countors for Painleve II hierarchy}. Let's denote the new RHP by $m^{[5]}(\gamma)$, and we arrive at the following RHP:
\begin{problem}Looking for a 2 by 2 matrix-valued function $m^{[5]}$ such that 
\begin{enumerate}[label={(\arabic*)}]
    \item $m^{[5]}$ is analytic off the contours $\cup_{k=1,2,3,4}\Sigma_{k}^{[5]}$;
    \item $m^{[5]}_+=m^{[5]}_-v^{[5]},\quad z\in \cup_{k=1,2,3,4}\Sigma_{k}^{[5]},$ where
    \begin{align*}
        v^{[5]}=\begin{cases}
        \begin{pmatrix}
1 & 0\\
R(0)e^{2i\Theta(\gamma)}&1
\end{pmatrix},\quad \gamma\in \Sigma_{1}^{[5]}\cup \Sigma_2^{[5]},\\
\begin{pmatrix}
1 &\bar{R}(0)e^{-2i\Theta(\gamma)}\\
0&1
\end{pmatrix},\quad \gamma\in \Sigma_{3}^{[5]}\cup \Sigma_4^{[5]}.
        \end{cases}
    \end{align*}
\end{enumerate}
Here the new contours (see Fig.\ref{Countors for Painleve II hierarchy}) are
\begin{align*}
    \Sigma_1^{[5]}&=\{z:z=\rho e^{i\alpha},\rho\in (0,\infty)\},\\
    \Sigma_2^{[5]}&=\{z:z=\rho e^{-i\alpha},\rho\in (-\infty,0)\},\\
    \Sigma_3^{[5]}&=\{z:z=\rho e^{i\alpha},\rho\in (-\infty,0)\},\\
    \Sigma_4^{[5]}&=\{z:z=\rho e^{-i\alpha},\rho\in (0,\infty)\}.
\end{align*}
\end{problem}

\begin{figure}[h]
\centering
\begin{tikzpicture}
\path [draw=black,postaction={on each segment={mid arrow=red}}] (-2,0)--(2,0);
\node[circle, fill=red, scale=0.3]  at (2,0) {};
\node[circle, fill=red, scale=0.3]  at (-2,0) {};
\node[below left] at (2.2,0) {\small $|\xi_0|$};
\node[below right] at (-2.2,0) {\small $-|\xi_0|$};
 \path [draw=blue,postaction={on each segment={mid arrow=red}}] (2,0) -- (4.8,2); 
 \path [draw=green,postaction={on each segment={mid arrow=red}}] (0,0) -- (2.8,2); 
 \path [draw=blue,postaction={on each segment={mid arrow=red}}] (2,0) -- (4.8,-2); 
 \path [draw=green,postaction={on each segment={mid arrow=red}}] (0,0) -- (2.8,-2); 
 \path [draw=blue,postaction={on each segment={mid arrow=red}}] (-4.8,2)--(-2,0) ; 
 \path [draw=blue,postaction={on each segment={mid arrow=red}}] (-4.8,-2)--(-2,0); 
 \path [draw=green,postaction={on each segment={mid arrow=red}}] (-2.8,2)--(0,0) ; 
 \path [draw=green,postaction={on each segment={mid arrow=red}}] (-2.8,-2)--(0,0); 
 \node [right] at (4.2,1.5) {$\Sigma_1$};
 \node [right] at (2.2,1.5) {$\Sigma_1^{[5]}$};
 \node [left] at (-4.2,1.5) {$\Sigma_2$};
 \node [left] at (-2.2,1.5) {$\Sigma_2^{[5]}$};
 \node [left] at (-4.2,-1.5) {$\Sigma_3$};
  \node [left] at (-2.2,-1.5) {$\Sigma_3^{[5]}$};
 \node [right] at (4.2,-1.5) {$\Sigma_4$};
  \node [right] at (2.2,-1.5) {$\Sigma_4^{[5]}$};
 % \node [below] at (0,0) {$\Sigma_0$};
 % \node [right] at (3.2,0.5) {$\Omega_1$};
 % \node [above] at (0,0.5) {$\Omega_2$};
 % \node [left] at (-3.2,0.5) {$\Omega_3$};
 % \node [left] at (-3.2,-0.5) {$\Omega_4$};
 % \node [below] at (0,-1.2) {$\Omega_5$};
 % \node [right] at (3.2,-0.5) {$\Omega_6$};
\end{tikzpicture}
\caption{Contour for $m^{[4]}$(Green part).}
\label{Countors for Painleve II hierarchy}
\end{figure}
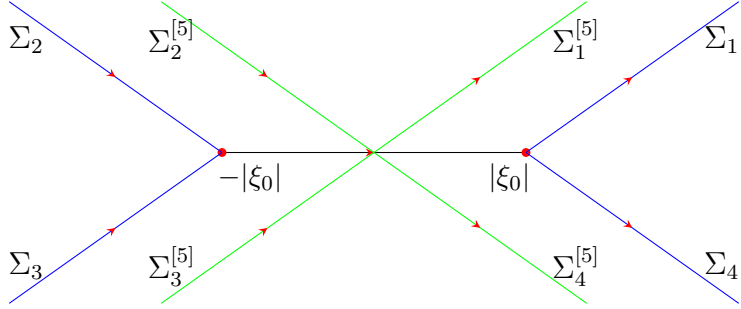

Then according to the previous subsection, the $(1,2)$ entry of the solution $m^{[5]}$, similarly the solution $m^{[4]}$, is the solution to the Painlev\'e II hierarchy, i.e.,
\begin{align}
m^{[4]}_{12}(\gamma)=m^{[5]}_{12}(\gamma),
\end{align}
Hence we have $P^{II}_k(s)=\lim_{\gamma\rightarrow\infty}\gamma m^{[5]}_{12}$
where $P^{II}_k$ solves the $k^{th}$ equation in the Painlev\'e II hierarchy, where $k=\frac{n-1}{2}$.

Now let's consider the error generated from the $\dbar$-extension. Recall that the error $E$ satisfies a pure $\dbar$-problem:
\begin{align*}
\dbar E&=EW,\\
W&=m^{[3]}\dbar O(m^{[3]})^{-1}.
\end{align*}
As before, the $\dbar$-equation is equivalent to an integral equation which reads
\begin{align*}
E(z)=I+\frac{1}{\pi}\int_{\C}\frac{E(s)W(s)}{z-s}\text{d}A(s)=I+\mathcal{K}(E).
\end{align*}
As before, we can show that the resolvent always exists for large $t$. So we only need to estimate the true error which is: $\lim_{z\rightarrow \infty}z(E-I)$. In fact, we have
\begin{align*}
\lim_{z\rightarrow\infty}|z(E-I)|&=|\int_\mathbb{C}EWds|\\
&\leq c\|E\|_\infty\int_{\Omega}|\dbar O|ds.
\end{align*}
For the sake of simplicity, we only estimate the integral on the right hand side in the region of the top right corner. Note there is only one entry which is nonzero in $\dbar O$, which is one of the $E_{\pm}$ and we split the integral into two parts in the obvious way, i.e.,
\begin{align*}
\int_\Omega |\dbar O|ds&\leq I_1+I_2\\
&=\int_\Omega (nt)^{-\frac{1}{2n}}|\Re{s}-\xi_0|\|R\|_{H^{1,0}}e^{2\Re{i\Theta(s)}}ds\\
&+\int_{\Omega}(nt)^{-\frac{1}{nt}}|R'((nt)^{-\frac{1}{n}}s)|e^{2\Re{i\Theta(s)}}ds.
\end{align*}
As we know from previous sections, $e^{\Re{2i\Theta(s)}}\leq ce^{-2|\Theta''(\xi_0)|uv}$ in the region $\{z=u+iv:u>\xi_0,0<v<\alpha u\}$ for some small $\alpha$ , where $s=u+iv+\xi_0$. Then we have
\begin{align*}
I_1&\leq (nt)^{-\frac{1}{2n}}\int_{\Omega}|\Re{s}-\xi_0|^{-1/2}e^{-cuv}dudv\\
&\leq (nt)^{-\frac{1}{2n}}\int_0^\infty\int_0^{\alpha u}u^{-1/2}e^{-cuv}dudv\\
&\leq C(nt)^{-\frac{1}{2n}}\int_0^\infty\frac{1-e^{-2\alpha|\Theta''(\xi_0)|} }{u^{3/2}}du\\
&=\OO\left((nt)^{-\frac{1}{2n}}\right),
\end{align*}
and 
\begin{align*}
I_2&\leq (nt)^{-\frac{1}{n}}\int |R'((nt)^{-\frac{1}{2n}}\Re{s})|e^{-cuv}dudv\\
&\text{ by Cauchy-Schwartz inequality}\\
&\leq (nt)^{-\frac{1}{n}}\|R\|_{H^{1,0}}\int_0^\infty(\int_{\alpha v}^\infty e^{-2cuv}du)^{1/2}dv\\
&\leq (nt)^{-\frac{1}{n}}\|R\|_{H^{1,0}}\int_0^\infty \frac{e^{-c\alpha v^2}}{\sqrt{2\alpha cv}}dc\\
&=\OO((nt)^{-\frac{1}{n}}).
\end{align*}
Thus, we arrive at 
\begin{align}
\dbar\text{Error} = \OO((nt)^{-\frac{1}{2n}}). 
\end{align}
And we undo all the deformations, we obtain
\begin{align*}
m^{[1]}((nt)^{-\frac{1}{n}}\gamma)&=m^{[2]}(\gamma)O^{-1}(\gamma)\\
&=(1+\frac{\OO{(t^{\frac{1}{2n}})}}{\gamma})m^{[3]}(\gamma)O^{-1}(\gamma)\\
&=(1+\frac{\OO{(t^{\frac{1}{2n}})}}{\gamma})(1+\frac{\OO{(t^{\frac{1}{2n}})}}{\gamma})m^{[4]}(\gamma)O^{-1}(\gamma)
\\
&=(1+\frac{\OO{(t^{\frac{1}{2n}})}}{\gamma})(1+\frac{\OO{(t^{\frac{1}{2n}})}}{\gamma})m^{[5]}(\gamma)O^{-1}(\gamma).
\end{align*}
It can also be rewritten in terms of the variable $z$:
\begin{align*}
m^{[1]}(z)=\left(1+\frac{\OO{(t^{-1/(2n)}})}{z(nt)^{1/n}}\right)m^{[5]}((nt)^{1/n}z)+\OO{(z^{-2})},\quad z\rightarrow\infty.
\end{align*}
Since $m^{[5]}$ corresponds to the RHP for the Painlev\'e II hierarchy, we have 
\begin{align*}
m^{[5]}(\gamma)=I+\frac{m_1^{[5]}(s)}{\gamma}+\OO(\gamma^{-1}),
\end{align*}
where $\gamma=z(nt)^{1/n}$.

Thus,
\begin{align*}
m^{[1]}(z)&=\left(1+\frac{\OO{(t^{-\frac{1}{2n}})}}{z(nt)^{1/n}}\right)\left(1+\frac{m_1^{[5]}(s)}{z(nt)^{1/n}}+\OO(z^{-2})\right)\\
&=I+\frac{m_1^{[5]}(s)}{z(nt)^{1/n}}+\frac{\OO{(t^{-\frac{1}{2n}})}}{z(nt)^{1/n}}+\OO(z^{-2}).
\end{align*}
Since $m_1^{[5]}(s)$ is connected to solutions of the Painlev\'e II hierarchy, we conclude that
\begin{align*}
q(x,t)&=\lim_{z\rightarrow\infty}z(m^{[1]}-I)\\
&=(nt)^{-\frac{1}{n}}u_n(x(nt)^{-\frac{1}{n}})+\OO{(t^{-\frac{3}{2n}})},
\end{align*}
where $u_n$ solves the $\frac{n-1}{2}$th equation of the Painlev\'e II hierarchy. The odd integer $n$ corresponds to the $\frac{n-1}{2}$th member in the mKdV hierarchy.
\begin{remark}
As for the asymptotics for the Painlev\'e II equation, we refer the readers to the classical book \cite{fokas2006painleve}. There are also some recent works related to Painlev\'e II hierarchy, see for example \cite{miller_rational_2017},\cite{tom2010},\cite{cafasso2021}.
\end{remark}
\bibliographystyle{abbrv}
\bibliography{References}
% \begin{thebibliography}{9}
% \bibitem{ImptDevSoliton}Fokas, A. S., and V. E. Zakharov, eds. Important Developments in Soliton Theory. Springer Series in Nonlinear Dynamics. Berlin Heidelberg: Springer-Verlag, 1993. https://doi.org/10.1007/978-3-642-58045-1.

% \end{thebibliography}

\end{document}